\newtheorem{theorem}{Theorem}
\newtheorem{algorithm}{Algorithm}[section]
\newtheorem{assumption}{Assumption}
\newtheorem{definition}{Definition}[section]
\newtheorem{lemma}[theorem]{Lemma}
\theoremstyle{definition}
\newtheorem{remark}{Remark}[section]
\newcommand{\plim}{\operatorname*{plim}}
\newcommand{\argmin}{\operatorname*{argmin}}
\DeclareMathOperator{\bias}{bias}
\DeclareMathOperator{\se}{se}
\DeclareMathOperator{\var}{var}
\newcommand{\maxbias}{\overline{\operatorname{bias}}}
\newcommand{\biasweight}{{b}}  %
\newcommand{\bTW}{{b^*}}  %
\newcommand*\norm[1]{\left\Vert #1\right\Vert}
\newcommand{\abs}[1]{\left\vert #1\right\vert}
\newcommand{\pr}{\mathbb{P}}
\newcommand{\rank}[1]{\text{rank}\left(#1\right)}
\newcommand{\maxSNT}{\max \{\sqrt{N},\sqrt{T}\}}
\newcommand{\minSNT}{\min \{\sqrt{N},\sqrt{T}\}}
\newcommand{\e}{\mathbb{E}}
\newcommand{\ex}[1]{\e\left[#1\right]}
\newcommand{\Op}{\mathcal O_{\Theta,\mathcal{P}}}
\newcommand{\op}{o_{\Theta,\mathcal{P}}}
\newcommand{\asy}{\asymp_{\Theta,\mathcal{P}}}
\newcommand{\dist}{\underset{\Theta,\mathcal{P}}{\overset{d}{\to}}}
\newcommand{\uto}{\underset{\Theta,\mathcal{P}}{\to}}
\newcommand{\uleq}{\underset{\Theta,\mathcal{P}}{\leqslant}}
\begin{document}

\title{\bf Robust Estimation and Inference\\in Panels with Interactive Fixed
  Effects\thanks{
    We thank the participants of the numerous seminars and conferences for helpful comments
and suggestions.  
    We also thank Riccardo D'Adamo and Chen-Wei Hsiang for their excellent research assistance.
    Any remaining errors are our own.
    Armstrong gratefully acknowledges support by the National
    Science Foundation Grant SES-2049765.
    Weidner gratefully acknowledges support through the European Research Council grant ERC-2018-CoG-819086-PANEDA.
    Zeleneev gratefully acknowledges the generous funding from the UK
Research and Innovation (UKRI) under the UK government’s Horizon Europe funding guarantee (Grant Ref:
EP/X02931X/1).
    }}
\author{\setcounter{footnote}{2}Timothy B. Armstrong\thanks{%
University of Southern California. Email: \texttt{timothy.armstrong@usc.edu} } 
\and 
Martin Weidner%
\thanks{%
University of Oxford. Email: \texttt{martin.weidner@economics.ox.ac.uk} } 
\and 
Andrei Zeleneev%
\thanks{%
University College London. Email: \texttt{a.zeleneev@ucl.ac.uk} } 
}
\date{May 2025}

\maketitle
\thispagestyle{empty}
\setcounter{page}{0}

\bigskip
\begin{abstract}
\noindent
We consider estimation and inference for a regression coefficient in panels with interactive fixed effects (i.e., with a factor structure). We demonstrate that existing estimators and confidence intervals (CIs) can be heavily biased and size-distorted when some of the factors are weak. We propose estimators with improved rates of convergence and bias-aware CIs that remain valid uniformly, regardless of factor strength. Our approach applies the theory of minimax linear estimation to form a debiased estimate, using a nuclear norm bound on the error of an initial estimate of the interactive fixed effects. Our resulting bias-aware CIs take into account the remaining bias caused by weak factors. Monte Carlo experiments show substantial improvements over conventional methods when factors are weak, with minimal costs to estimation accuracy when factors are strong.
\end{abstract}

\vskip 3cm

\newpage

\section{Introduction}

In this paper, we consider a linear panel regression model of the form
 \begin{align}
  Y_{it} = X_{it} \beta + \sum_{k=1}^K Z_{k,it}\delta_k +  \Gamma_{it} + U_{it} ,
  \label{model0}
\end{align}
where $Y_{it}, X_{it}, Z_{1,it}, \ldots, Z_{K,it} \in \mathbb{R}$  are the observed outcome variable and covariates for units $i=1,\ldots,N$ and time periods $t=1,\ldots,T$.
The error components  $\Gamma_{it}  \in \mathbb{R}$ and $U_{it}  \in \mathbb{R}$ are unobserved, and the regression coefficients   $\beta, \delta_1, \ldots, \delta_K \in \mathbb{R}$ are unknown.
The  parameter of interest is $\beta \in \mathbb{R}$, the  coefficient on $X_{it}$. We are interested in ``large panels'', where both $N$ and $T$ are relatively large.

The error component $U_{it}$ is modelled as a mean-zero random shock that is uncorrelated with the regressors $X_{it}$ and $Z_{k,it}$
and that is at most weakly autocorrelated across $i$ and over $t$. By contrast, the error component $\Gamma_{it}$ can be correlated with $X_{it}$ and $Z_{k,it}$
and can also be strongly autocorrelated across $i$ and over $t$. Of course, further restrictions on $\Gamma_{it}$ are required to allow estimation and inference on $\beta$.
For example, the additive fixed effect model imposes that $\Gamma_{it} = \alpha_i + \gamma_t$, where 
$\alpha_i$ accounts for any omitted variable that is constant over time, and $\gamma_t$ for any omitted variable that is constant across units.
Instead of this additive fixed effect model we consider the so-called interactive fixed effect model, where
 \begin{align}
     \Gamma_{it} =  \sum_{r=1}^R \,  \lambda_{ir}\, f_{tr} \; .
     \label{FactorModel}
\end{align}
Here, the $\lambda_{ir}$ and $f_{tr} $ can either be interpreted as unknown parameters or as unobserved shocks. This model for $\Gamma_{it} $ is also known as a factor model, with factor loadings $\lambda_{ir}$ and factors $f_{tr}$. We will use the terms factor and interactive fixed effect interchangeably. The number of factors $R$
is unknown, but is assumed to be small relative to $N$ and $T$. The interactive fixed effect model is attractive because it introduces enough restrictions to allow
estimation and inference on $\beta$ while still incorporating or approximating a large class of data generating processes (DGPs) for $ \Gamma_{it} $.

The existing econometrics literature on panel regressions with interactive fixed
effects is  quite large. Since the seminal work of \cite{Pesaran2006estimation} and
\cite{bai2009panel}, developing tools for estimation and inference on $\beta$ in
model~\eqref{model0}-\eqref{FactorModel} under large $N$ and large $T$
asymptotics has been a primary focus of this literature. Specifically, \citet{Pesaran2006estimation} introduces the common correlated effects (CCE) estimator, which uses cross-sectional averages of the observed variables as proxies for the unobserved factors. \citet{bai2009panel} derives the large $N$, $T$ properties of the least-squares (LS) estimator that jointly minimizes the sum of squared residuals over the regression coefficients, factors, and factor loadings.\footnote{This estimator was first introduced by \cite{Kiefer1980}.}

\cite{bai2009panel} shows that, under appropriate assumptions, the LS estimator for the regression coefficients is $\sqrt{NT}$-consistent and asymptotically normally distributed 
as both $N$ and $T$ grow to infinity. One of the key assumptions imposed for this result is the so-called ``strong factor assumption'', which requires all the factor loadings
$ \lambda_{ir}$ and factors $f_{tr}$ to have sufficient variation across $i$ and over $t$, respectively. If the strong factor assumption is violated, then the LS estimator 
for $ \lambda_{ir}$ and  $f_{tr}$  may be unable to pick up the true loadings and factors correctly, because the ``weak factors''\footnote{
See, for example, \cite{Onatski2010,Onatski2012} for a discussion and formalization of the notion of weak factors.
}  in $ \Gamma_{it}$ cannot be 
distinguished from the noise in $U_{it}$.
This can lead to substantial bias and misleading inference, due to omitted
variables bias from $\Gamma_{it}$ that is not picked up by the estimator.

\begin{figure}[tb]
    \begin{center}
      \begin{subfigure}{0.49\textwidth}
        \includegraphics[width=1.0\linewidth]{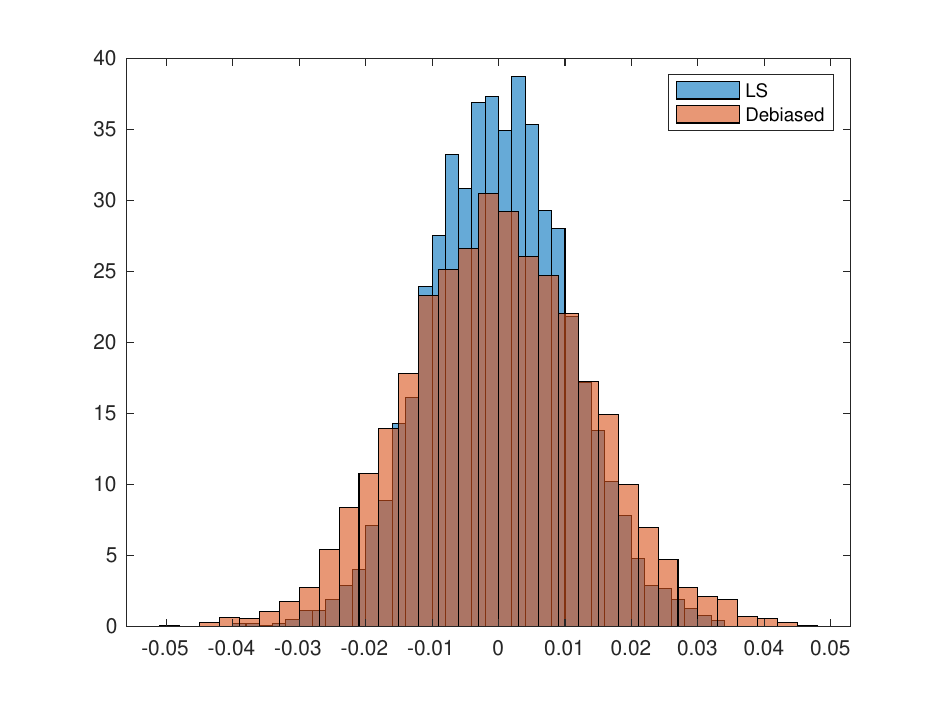}
        \caption{No ID ($\kappa = 0.00$)}
      \end{subfigure}
      \begin{subfigure}{0.49\textwidth}
        \includegraphics[width=1.0\linewidth]{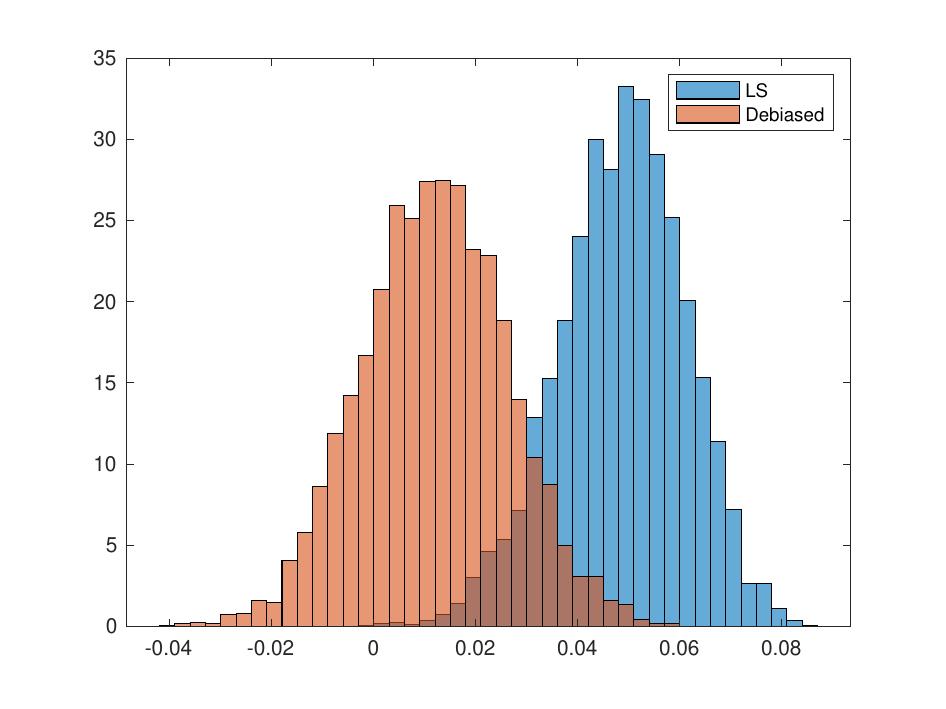}
        \caption{Weak ID ($\kappa = 0.10$)}
      \end{subfigure}
      \begin{subfigure}{0.49\textwidth}
        \includegraphics[width=1.0\linewidth]{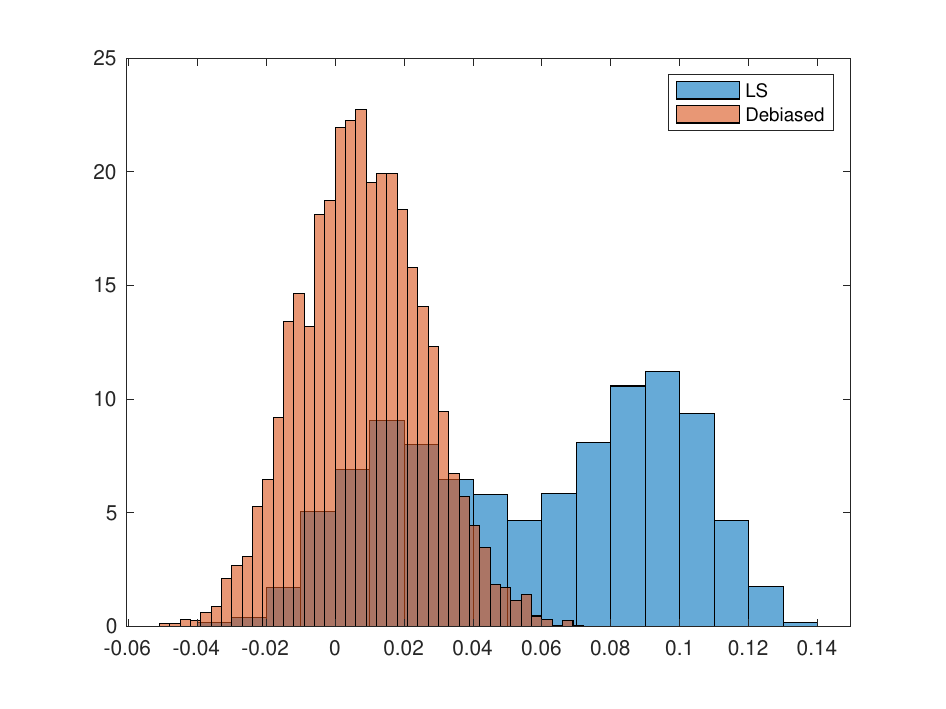}
        \caption{Weak ID ($\kappa = 0.20$)}
      \end{subfigure}
      \begin{subfigure}{0.49\textwidth}
        \includegraphics[width=1.0\linewidth]{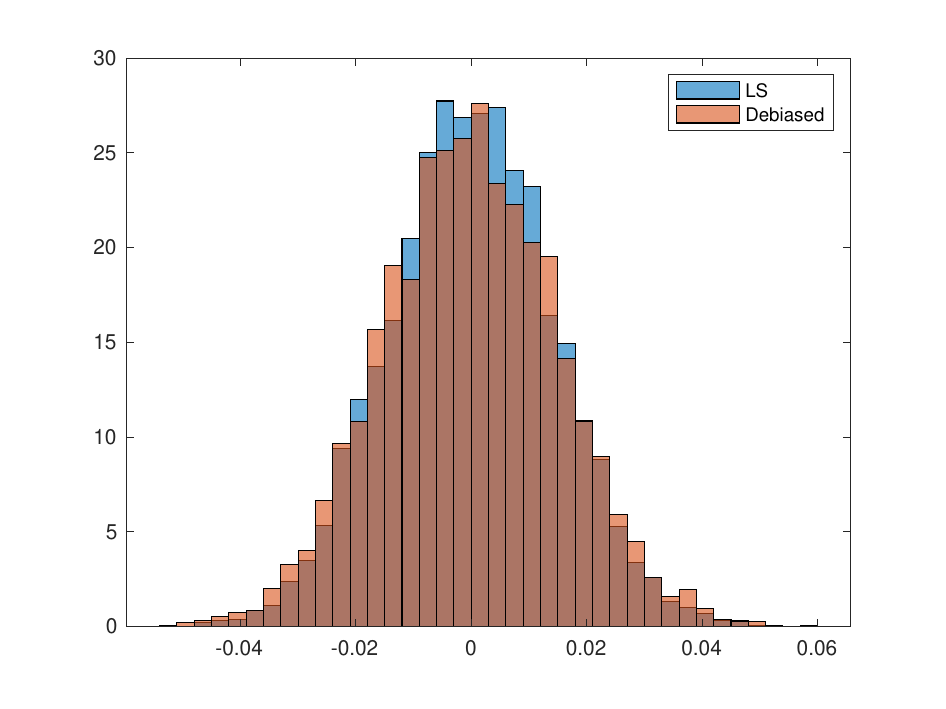}
        \caption{Strong ID ($\kappa = 1.00$)}
      \end{subfigure}
    \end{center}
  \caption{Finite sample distributions of the LS and the debiased estimators, $N=100$, $T=50$, $R=1$}
  \label{fig: intro}
\end{figure}

To illustrate how this can lead to problems with conventional estimates and CIs
for $\beta$, Figure \ref{fig: intro} presents a subset of the results of our Monte Carlo
study.\footnote{A detailed description of the numerical experiment is provided
  in Section \ref{ssec: MC}.}  
When the factors are nonexistent (panel a) or strongly identified
(panel d), the distribution of the LS estimator (in blue) is centered at the
true parameter value $\beta$ (equal to $0$ in this case).  However, when the
factors are present but weak enough that they are difficult to estimate (panels b
and c), the LS estimator is heavily biased and non-normally distributed.  In our
Monte Carlo study in Section \ref{sec: numerical}, we show that this indeed leads to
severe coverage distortion, with conventional CIs based on the LS estimator
having almost zero coverage.

In this paper, we address this issue by developing new tools for estimation and
inference on $\beta$ in the model \eqref{model0}.
We develop a debiased estimator along with a bound on the remaining bias, which
we use to construct a bias-aware confidence interval.
As illustrated in Figure \ref{fig: intro}, our debiased estimator (shown in red)
substantially decreases the bias of the LS estimator when factors are weak,
leading to a large improvement in overall estimation error.
In addition, this improved performance under weak factors does not come at a
substantial cost to performance when factors are strong or nonexistent: our
debiased estimator performs similarly to the LS estimator in these cases.
Importantly, our CI
requires only an upper bound on the number of factors: we show that it is valid
uniformly over a large class of DGPs that allows for weak, strong or nonexistent
factors up to a specified upper bound on the number of factors.  We derive rates
of convergence that hold uniformly over this class of DGPs, and we show that our
estimator achieves a faster uniform rate of convergence than existing approaches
when weak factors are allowed.  In the case where $N$ and $T$ grow at the same
rate, our estimator achieves the parametric $\sqrt{NT}$ rate.

Our debiasing approach uses a preliminary estimate $\hat\Gamma_{\rm pre}$ of the
individual effect matrix $\Gamma$ along with a bound $\hat C$ on the nuclear norm $\Vert \Gamma - \hat
\Gamma_{\rm pre}\Vert_*$ of its estimation error.
Letting $\tilde \Gamma:= \Gamma - \hat \Gamma_{\rm pre}$, we then consider
the augmented outcomes
\begin{align*}
  \tilde Y_{it} := Y_{it} - \hat \Gamma_{{\rm pre}, it} = X_{it} \beta + \sum_{k=1}^K Z_{k,it}\delta_k + \tilde \Gamma_{it} + U_{it}.
\end{align*}
Treating $\tilde \Gamma_{it}$ as nuisance parameters satisfying a convex
constraint $\Vert \tilde \Gamma \Vert_{*} \le \hat C$, we derive linear weights
$A_{it}$ such that the estimator $\sum_{i=1}^N\sum_{t=1}^TA_{it}\tilde Y_{it}$   for $\beta$
optimally uses this constraint, using the theory of minimax linear estimators \citep[see][]{ibragimov_nonparametric_1985,donoho1994statistical,armstrong2018optimal}.
In particular, the resulting weights $A_{it}$ control the remaining omitted
variables bias $\sum_{i=1}^N\sum_{t=1}^TA_{it}\tilde \Gamma_{it}$ due to
possible weak factors in $\tilde \Gamma=\Gamma-\hat\Gamma_{\rm pre}$ not picked
up by the initial estimate $\hat\Gamma_{\rm pre}$.

A key step in deriving our CI is the construction of the preliminary estimator
$\hat\Gamma_{\rm pre}$ and bound $\hat C$ on the nuclear norm of its estimation
error.  Our CI is bias-aware: it uses the bound $\hat C$ to explicitly take into
account any remaining bias in the debiased estimator.  Our bound is feasible
once an upper bound on the number of factors is specified.  In our Monte Carlo
study, we find that, while our CIs are often conservative, they are about as
wide as an ``oracle'' CI that uses an infeasible critical value to correct the
coverage of a CI based on the standard LS estimator.

While our results allow for arbitrary sequences of weak factors, our conditions on other aspects of the model are similar to \citet{bai2009panel} and \citet{MoonWeidner2015}.  An important condition is that the covariate of interest $X_{it}$ must not itself be entirely explained by a low dimensional factor model.
For example, in a panel where $X_{it}$ is the minimum hourly wage in state $i$ and year $t$, we would require that states change their minimum wage laws in different years, and that this is done sufficiently often to generate variation in $X_{it}$ that cannot be explained by a small number of factors $f_t$.
This rules out settings where $X_{it}$ is an indicator variable for a policy that affects a subset of the units and occurs only during a single time period: in this case, $X_{it}=\lambda_i\cdot f_t$ where $\lambda_i$ is an indicator variable for unit $i$ undergoing the policy change and $f_t$ is an indicator variable for periods after the policy change.
See Section \ref{asymptotic_section} for formal conditions and further discussion.

A special case of the factor model is the grouped unobserved heterogeneity model
considered by \citet{bonhomme2015grouped}.  In this model,
$\Gamma_{it}=\alpha_{g(i),t}$, where $g(\cdot)$ is an unknown function mapping
individuals $i$ to a group index $g(i)\in \{1,\ldots, R\}$.  This takes the form
of the factor model (\ref{FactorModel}) with $\lambda_{ir}=1$ if $g(i)=r$ and
$0$ otherwise, and with $f_{tr}=\alpha_{r,t}$.  The strong factor assumption
corresponds to the strong group separation assumption imposed in this literature
(e.g., Assumption 2(b) in
\citealp{bonhomme2015grouped}) which imposes that the group means $\alpha_{r,\cdot}=(\alpha_{r,1},\ldots,\alpha_{r,T})'$
are sufficiently far away for different groups $r$.  Our results apply in this
setting and allow for this assumption to be relaxed.  An interesting question
for future research is whether it is possible to modify our approach to take
advantage of the additional structure in the grouped unobserved heterogeneity
model.

\bigskip

{\noindent \bf Related literature}

\noindent
The papers by \citet{Pesaran2006estimation} and \citet{bai2009panel} mentioned previously have motivated a large follow up literature on large $N$ and $T$ analysis of panel models with interactive effects.  \cite{bai2016econometric} provides a review with further references.  Another literature has proposed alternative estimation methods along with asymptotic analysis in the regime with $T$ fixed and $N$ increasing.  This includes the quasi-difference approach of \citet{HoltzEakin-Newey-Rosen1988} and generalized method of moments approaches of \cite{AhnLeeSchmidt2001,AhnLeeSchmidt2013}.
More recent papers analyzing the fixed $T$ large $N$ regime include \cite{robertson2015iv}, \cite{juodis2018fixed}, \cite{westerlund2019cce}, \cite{higgins2021fixed}, \cite{juodis2022linear}.
 None of these papers provide inference methods that remain valid when factors are weak or rank-deficient (e.g.\ $f=0$).
\cite{Chamberlain2009} derive estimators that satisfy a Bayes-minimax property over a certain class of priors in a finite sample setting that includes a version of the model (\ref{FactorModel}).
This Bayes-minimax property does not, however, translate to a guarantee on coverage or estimation error under weak factors.

A special case of the violation of the  strong factor assumption is when some factor are equal to zero, while all other factors are strong;
the inference results of \cite{bai2009panel} are usually robust towards this specific violation of the strong factor assumption
\citep{MoonWeidner2015}. This robustness, however, does not carry over to more general weak factors in the DGP of $\Gamma_{it}$,
 as illustrated by Figure~\ref{fig: intro}.

The problem of weak factors is related to the problem of omitted variable bias of LASSO estimators in high dimensional regression that is the focus of debiased LASSO estimators \citep[see][]{belloni_inference_2014,javanmard_confidence_2014,van_de_geer_asymptotically_2014,zhang_confidence_2014}.
Just as LASSO estimators omit variables with coefficients that are large enough to cause omitted variables bias but too small to distinguish from zero, weak factors in $\Gamma$ can be difficult to estimate, leading to omitted variables bias in conventional estimates of $\beta$.
Our approach to using minimax linear estimation to debias an initial estimate
mirrors the approach of \citet{javanmard_confidence_2014} to debiasing the LASSO.
We discuss this connection further in Section \ref{sec:comparison_to_other_results}.
\citet{hirshberg_augmented_2020} provide a general discussion and further
references for minimax linear debiasing; we refer to this general approach as augmented
linear estimation following their terminology.  Minimax linear estimation itself
goes back at least to \citet{ibragimov_nonparametric_1985}, with further results
on this approach and its optimality properties in \citet{donoho1994statistical},
\citet{armstrong2018optimal} and \citet{yata_optimal_2021}, among others.  The
particular form of the minimax estimator used for debiasing in our setup follows
from a formula given in \citet{armstrong_bias-aware_2020}.

Requiring $\Gamma_{it}$ to have the factor structure~\eqref{FactorModel} is
equivalent to requiring the matrix of unobserved effects $\Gamma$ to have rank
at most $R$, i.e., having ${\rm rank}(\Gamma) \leq R$. Bounding the nuclear norm of $\tilde \Gamma$ or $\Gamma$ instead can also be seen as a convex relaxation of this requirement. Similar convexifications of the rank constraint have been widely used in the matrix completion literature
(e.g., \citealt{RechtFazelParrilo2010} and \citealt{Hastieetal2015} for recent surveys),
and for reduced rank regression estimation
(e.g., \citealt{RohdeTsybakov2011}).
In the econometrics literature, the numerous applications of this idea include, for example, estimation of pure factor models \citep{BaiNg2017}, estimation of panel regression models with homogeneous \citep{moon2018nuclear,beyhum2019square} and heterogeneous coefficients \citep{chernozhukov2019inference}, estimation of treatment effects \citep{athey_matrix_2021,fernandez2021low}, and many others.\footnote{For example, recent economic applications of nuclear norm and related penalization methods also include latent community detection \citep{alidaee2020recovering,ma2022detecting}, quantile regression \citep{belloni2019high,wang2022low,feng_2023}, and estimation of panel threshold models and high-dimensional VARs (\citealp{miao2020panel} and \citealp{miao2023high}).}
However, none of these papers obtain asymptotically valid CIs or improved rates of convergence under weak factors.

In recent work, \citet{chetverikov2022spectral} propose an estimator that, like ours, achieves a faster rate of convergence than conventional approaches under weak factors.\footnote{The main focus of \citet{chetverikov2022spectral} is the grouped effects model of \citet{bonhomme2015grouped}, which is a special case of the interactive fixed effects setting we consider here.  However, the authors extend their results to the general interactive fixed effects setting.}
While \citet{chetverikov2022spectral} allow for weak factors in some of their estimation results, they assume strong factors when constructing CIs.
The estimation approach in \citet{chetverikov2022spectral} also differs from our approach by using modelling assumptions that place a factor structure on the covariate matrix $X$.

Our focus is on allowing for weak factors without imposing additional assumptions on the error term $U$, such as homoskedasticity or full independence from the individual effects $\Gamma$ and regressor $X$.  Such additional structure allows for further identifying information by making it easier to distinguish between the error term $U$ and the individual effects $\Gamma$, leading to a fundamentally different analysis.  \citet{zhu2019well} derives asymptotic upper and lower bounds for estimators and CIs in a setting with possible weak factors under homoskedastic and fully independent errors.  The estimators and CIs constructed by \citet{zhu2019well} take advantage of the additional structure of \citeauthor{zhu2019well}'s setting, making them inapplicable in ours.  However, the \emph{lower} bounds derived by \citet{zhu2019well} are immediately relevant: they show that no CI can be asymptotically valid under weak factors while mimicking the performance of the CI of \citet{bai2009panel} when factors are strong.

As discussed above, our assumptions rule out the case where $X_{it}$ is an indicator variable for a policy that affects a subset of units starting in the same time period.
Recent papers that analyze such settings include \citet{ferman_synthetic_2021} and \citet{arkhangelsky_synthetic_2021}.
The fact that $X_{it}$ is collinear with the confounding factor model in this setting presents a fundamental identification issue that requires placing additional conditions on the model.
In contrast to this literature, our goal is to leverage variation in $X_{it}$ that cannot be explained by a low dimensional factor model in settings where such variation exists.

\citet{beyhum2022factor}, \citet{fan2022learning}, and \citet{bai2023approximate} consider estimation and inference in various settings under a regime in which a lower bound on the strength of the factors can decrease with $N$ and $T$, but is large enough that factors can be consistently estimated.  This is analogous to the ``semi-strong'' regime in weak instrument and related settings; see \citet{andrews2012estimation}.  While the semi-strong regime requires careful theoretical analysis, the fact that factors can be consistently estimated leads to asymptotically unbiased and normal estimators for the main effect $\beta$.  Our results apply to semi-strong and strong regimes as well, while also allowing for weak factor regimes in which factors cannot be consistently estimated.

Finally, \citet{cox2024weak} develops tools for inference in low-dimensional factor models with weak identification. In \citet{cox2024weak}, the primary objects of interests are the covariance of the factors and the loadings. The baseline model in \citet{cox2024weak} does not include observed covariates, whereas we focus on estimation and inference on $\beta$, the coefficient on $X_{it}$, exclusively.\footnote{\citet{cox2024weak} mentions that observed covariates could, in principle, be incorporated in his framework as long as they are uncorrelated with the unobserved effects, which is a primary worry in the panel literature.}

\bigskip

The rest of this paper is organized as follows. Section \ref{sec:MainIdea} introduces the framework and describes construction of the debiased estimator and bias-aware CI. Section \ref{sec:linear_factor_implementation} provides implementation details. Section \ref{asymptotic_section} provides formal statistical guarantees. Section \ref{sec: numerical} considers numerical and empirical illustrations.
A supplementary appendix contains all proofs 
and additional results for the numerical and empirical illustrations.

\section{Construction of robust estimates and confidence intervals}
\label{sec:MainIdea}

\subsection{Setup}

We consider a panel setting in which we observe a scalar outcome
$Y_{it}$, a scalar covariate $X_{it}$ of interest
and additional control covariates $\{Z_{k,it}\}_{k=1}^K$
for $i=1,\ldots,N$, $t=1,\ldots, T$,
which follow
the regression model (\ref{model0}).
The error term $U_{it}$ is assumed to be mean zero conditional on $X$,
$\{Z_{k,it}\}_{k=1}^K$ and $\Gamma$,\footnote{We note that this requires strict
  exogeneity and in particular rules out
  using lagged outcomes as covariates.  We leave extensions to models with lagged outcomes as a topic for future research.}
but we allow for
heteroskedasticity, which may depend on $X_{it}$ and $\Gamma_{it}$, as well as
some weak dependence.
We write the model in matrix notation as
\begin{align}\label{model0_matrix}
  Y = X \beta + Z\cdot \delta +  \Gamma + U \, ,
  \qquad\qquad
  \e[U|X,Z,\Gamma] = 0 \, ,
\end{align}
where
$Z$ denotes the three dimensional array
$\{Z_{k,it}\}$
and we define $Z\cdot \delta = \sum_{k=1}^K Z_{k}\delta_k$ where $Z_k$ denotes
the matrix with $i,t$-th element $Z_{k,it}$.
We use $\lambda$ to denote the $N\times R$ matrix of loadings $\lambda_{ir}$ and $f$ to denote the $T\times R$ matrix of factors $f_{tr}$, so that (\ref{FactorModel}) can be written in matrix form as $\Gamma=\lambda f'$.

We are interested in the coefficient $\beta$ of $X_{it}$, which can be
interpreted as the effect of a treatment variable $X_{it}$ in a constant
treatment effects model (we discuss extensions to heterogeneous treatment
effects in Remark \ref{het_te_remark}).
For concreteness, we use panel notation, and we refer to $i$ and $t$ as
individuals and time periods respectively.  However, we allow for other settings
such as network data in which $i$ and $t$ both index individuals in a network.
While we will assume a low rank structure on $\Gamma$, we allow for arbitrary
dependence between the covariate $X_{it}$ and the individual effect $\Gamma_{it}$.

A key ingredient in our approach is an initial estimate $\hat\Gamma$ of $\Gamma$ and a bound on its estimator in the nuclear norm, which holds with probability approaching one:
\begin{align}
  \big\| \widetilde \Gamma  \big\|_*
  \leq \hat C \, ,
  \qquad\text{where}\qquad
  \widetilde\Gamma := \Gamma - \hat\Gamma \, .
  \label{NNcondition}
\end{align}
Here, $\|\cdot\|_*$ denotes the nuclear norm of the argument matrix,
and  $\hat C \geq 0$ is a known or estimated constant.
We describe our estimate $\hat\Gamma$ and bound $\hat C$ in Section \ref{sec:linear_factor_implementation}, and we state a formal result giving conditions under which the bound holds with probability approaching one in Section \ref{asymptotic_section}.
This bound depends on an upper bound for the number of factors $R$, which must be specified a priori.  Importantly, our approach does not require specifying the exact number of factors: some of the factors may be zero, in addition to the possibility of being ``weak'' in the sense of being close to zero.
We emphasize that obtaining an computable upper bound $\hat C$ that enables construction of our CI is itself one of the main technical contributions of this paper.\footnote{As we discuss further in Section \ref{asymptotic_section}, a tighter CI can be constructed by bounding the difference between the estimate $\hat\Gamma$ and $\Gamma+P_{\lambda} U$, where $P_\lambda = \lambda (\lambda' \lambda)^+ \lambda$, with $M^+$ denoting the Moore–Penrose inverse of a matrix $M$.
The implementation in Section \ref{sec:linear_factor_implementation} is for the tighter CI that uses these arguments.
For ease of exposition, however, we focus on using the bound \eqref{NNcondition} directly in the remainder of this section.}

\begin{remark}
    Although the main focus of this paper is on models with the linear factor
    structure~\eqref{FactorModel}, the methodology presented in this section applies to general
    interactive fixed effects models as long as it is possible to construct a preliminary estimator $\hat \Gamma$ and a bound $\hat C$ satisfying \eqref{NNcondition}. For example, we conjecture that our method can also be extended to nonlinear factor models with $\Gamma_{it} = g(\lambda_i,f_t)$, where $g(\cdot,\cdot)$ is some unknown function (e.g., \citealp{zeleneev2019identification,freeman2023linear}). As noted, for example, in \citet{fernandez2021low}, such $\Gamma$ can be approximated by a low-rank matrix with a (slowly) growing rank $R$. Hence, we expect that our method can be applied in this setting with $\hat \Gamma$ constructed using a growing $R$ and $\hat C$ adjusted for the low-rank approximation error (if needed), in the same spirit as sieve approximations are used in nonparametric estimation.
\end{remark}

\subsection{Augmented linear estimators and CIs}

We first define a class of estimators and CIs, indexed by an $N\times T$ matrix
$A$.  We then provide a choice of the matrix $A$, based on finite
sample optimality in an idealized setting.
Our class of estimators is given in the following definition.

\begin{definition}
  \label{DefAugLinEst}
  Let $A=A(X,Z)$ be an $N \times T$ matrix of weights $A_{it} \in \mathbb{R}$ that can depend on the matrix $X$
  and array $Z$.  Let $\hat\Gamma$ be an initial estimate of $\Gamma$, and let
  $\widetilde Y=Y-\hat\Gamma$.  The \emph{augmented linear estimator} with
  weight matrix $A$ and initial estimate $\hat\Gamma$ is given by
  \begin{align}
    \hat\beta_A 
    &:= \sum_{i=1}^N \sum_{t=1}^T A_{it}\widetilde Y_{it}
      =\langle A, \widetilde Y \rangle_F.
      \label{DefHatBeta}
  \end{align}

\end{definition}
 Here, $\langle \cdot, \cdot \rangle_F$ denotes the   entry-wise inner product between the argument matrices.

The estimator $\hat\beta_A=\langle
A,\widetilde Y \rangle_F$ applies a linear estimator after an initial estimation
step in which the initial estimate $\hat\Gamma$ is subtracted from the outcome
$Y$.
This mirrors applications of this idea in other settings going back to
\citet{javanmard_confidence_2014}; see \citet{hirshberg_augmented_2020} for
references (the term ``augmented linear estimation'' is used in the latter
paper).

To analyze this class of estimators, note that subtracting the initial estimate
from both sides of the equation (\ref{model0_matrix}) gives
\begin{align}
  \widetilde Y = X \beta + Z\cdot \delta + \widetilde \Gamma + U 
  \label{model}
\end{align}
(recall that $\widetilde Y=Y-\hat\Gamma$ and $\widetilde
\Gamma=\Gamma-\hat\Gamma$).
Our choice of the matrix $A$ will be motivated by a heuristic in which we consider the model (\ref{model}) with $\widetilde Y$ as the observed outcome and $\widetilde \Gamma$ a nuisance parameter such that $U$ is mean zero conditional on $X,Z$ and $\tilde \Gamma$, with the bound (\ref{NNcondition}) interpreted as a deterministic bound that holds with $\hat C$ nonrandom.
This heuristic is not literally true, since $\widetilde\Gamma$ depends on $U$ through the estimation error in the initial estimate $\hat\Gamma$.
Nonetheless, the CIs and estimators we obtain will be asymptotically valid and consistent respectively, under conditions that we give in Section \ref{asymptotic_section}.

Following this heuristic, we consider the decomposition
\begin{align}\label{bias_variance_decomposition}
  &\hat\beta_A-\beta
    =
    \bias_{\beta,\delta,\widetilde\Gamma}(\hat\beta_A)
    + \langle A, U \rangle_F
\end{align}
where
\begin{align}\label{bias_def_eq}
  \bias_{\beta,\delta,\widetilde\Gamma}(\hat\beta_A):=\left( \langle A,X
  \rangle_F - 1 \right)\beta + \langle A,Z\cdot \delta \rangle_F + \langle A,
  \widetilde\Gamma \rangle_F.
\end{align}
Under the heuristic where $\tilde\Gamma$ is nuisance parameter in the model (\ref{model}), $\bias_{\beta,\delta,\widetilde \Gamma}$ gives the bias of the estimator $\hat\beta_{A}$ conditional on $X,Z$ and $\tilde \Gamma$.
In reality, $\bias_{\beta,\delta,\widetilde \Gamma}$ does not literally
give the bias or conditional bias of $\hat\beta_A$, since conditioning on
$\widetilde\Gamma=\Gamma-\hat\Gamma$ means conditioning on an information set
that depends on $Y$ through the preliminary estimate $\hat\Gamma$.
We nonetheless refer to $\bias_{\beta,\delta,\Gamma}(\hat\beta_A)$ as a bias term, following our heuristic.

Let $\widehat{\se}$ be an estimate
of the standard deviation of $\langle A,U \rangle_F=\sum_{i=1}^N\sum_{t=1}^TA_{it}U_{it}$.
For example, to allow for arbitrary
heteroskedasticity in $U_{it}$ while imposing independence across $i$ and $t$,
we can use $\widehat{\se}=\sqrt{\sum_{i=1}^N\sum_{t=1}^T A_{it}^2\hat U_{it}^2}$
where $\hat U_{it}$ denotes residuals from an initial regression.
If $\bias_{\beta,\delta,\Gamma}(\hat\beta_A)$ were zero,
then we could form a CI by adding and subtracting a normal critical value times
$\widehat{\se}$.
To take into account the possibility that
$\bias_{\beta,\delta,\Gamma}(\hat\beta_A)$ will in general be nonnegligible in
our setting, we use the bound (\ref{NNcondition}) to obtain an upper bound on
the bias term.
In particular,
when (\ref{NNcondition}) holds, we have
$\left| \bias_{\beta,\delta,\widetilde\Gamma}(\hat\beta_A) \right|
\le \maxbias_{\hat C}(\hat\beta_A)$, where
for general $C \geq 0$ we define
\begin{align}
  \maxbias_C(\hat\beta_A)
  &:= \sup_{ \beta,\delta,\widetilde\Gamma: \|\widetilde\Gamma\|_*\le C} \bias_{\beta,\delta,\widetilde\Gamma}(\hat\beta_A)
 \nonumber \\
 &   =
   \begin{cases}
     \displaystyle \sup_{ \widetilde\Gamma: \|\widetilde\Gamma\|_*\le C}
  \langle A, \widetilde\Gamma \rangle_F & \text{if} \, \langle A, X \rangle_F=1,\, \text{and } \langle A, Z_k \rangle_F=0,\,  \text{for } k=1,\ldots K,  \\
     \infty & \text{otherwise}
   \end{cases}  
 \nonumber \\
 &   =
   \begin{cases}
     C s_1(A) & \text{if} \, \langle A, X \rangle_F=1,\, \text{and } \langle A, Z_k \rangle_F=0,\,  \text{for } k=1,\ldots K,  \\
     \infty & \text{otherwise.}
   \end{cases}  
   \label{WorstCaseBias} 
\end{align}
Here, for the second equality we used that the supremum over $\beta$ and $\delta$ is unbounded unless 
$\langle A, X \rangle_F=1$ and $\langle A, Z_k \rangle_F=0$, and for the final step we used that 
the nuclear norm $ \| \cdot \|_*$ is dual to the spectral norm,  which we  denote by $s_1(\cdot)$ since it is equal to the
largest singular value of the argument matrix.
We refer to $\maxbias_{\hat C}(\hat\beta_A)$ as the worst-case bias of the
estimator $\hat\beta_A$ (again, this terminology reflects the heuristic in which $\tilde\Gamma$ is treated as a nuisance parameter in (\ref{model}) rather than estimation error from the initial estimate $\hat\Gamma$).

Note that, whereas
$\bias_{\beta,\delta,\widetilde\Gamma}(\hat\beta_A)$ depends on the unknown
matrix of individual effects $\Gamma$ through the matrix
$\widetilde\Gamma=\Gamma-\hat\Gamma$,
$\maxbias_{\hat C}(\hat\beta_A)$ is feasible to compute once a bound $\hat C$ is given.
Taking into account the possible bias leads to a \emph{bias-aware} CI:
\begin{align}\label{general_A_CI_eq}
  \left\{ \hat\beta_A \pm \left[ \maxbias_{\hat C}(\hat\beta_A) + z_{1-\alpha/2} \widehat{\se} \right]   \right\}.
\end{align}
To motivate this CI, note that the probability that the lower endpoint is greater
than $\beta$ is
\begin{align*}
  &P\left( \hat\beta_A - \maxbias_{\hat C}(\hat\beta_A) - z_{1-\alpha/2} \widehat{\se} > \beta
    \right)
    = P\left( \sum_{i=1}^N\sum_{t=1}^TA_{it}U_{it} + \bias_{\beta,\delta,\widetilde\Gamma}(\hat\beta_A) > \maxbias_{\hat C}(\hat\beta_A) + z_{1-\alpha/2} \widehat{\se}
    \right)  \\
  &\le P\left( \sum_{i=1}^N\sum_{t=1}^TA_{it}U_{it} > z_{1-\alpha/2} \widehat{\se}
    \right)
    \approx \alpha/2,
\end{align*}
where the last step assumes that $\sum_{i=1}^N\sum_{t=1}^TA_{it}U_{it}$ is approximately normally distributed with 
zero mean and standard deviation close to $ \widehat{\se}$. We provide formal justifications for this later. By a similar argument, the probability that the upper endpoint is less than $\beta$ can be bounded by $\alpha/2$, and these calculations together imply that the coverage of our CI is approximately at least $1-\alpha$. %

\begin{remark}\label{het_te_remark}
  In principle, our approach can be extended to a heterogeneous treatment effect model where the constant coefficient $\beta$ is replaced by an individual specific coefficient $\beta_{it}$ that is allowed to vary with $i$ and $t$.  In particular, if a bound on the nuclear norm of the matrix of coefficients $\beta_{it}$ or on the error of preliminary estimates of these coefficients is available in addition to such a bound for $\Gamma$, we can use minimax linear debiasing to estimate a linear functional of the individual specific effects $\beta_{it}$.  For example, the linear functional $\frac{1}{NT}\sum_{i=1}^N\sum_{t=1}^T \beta_{it}$ gives the average treatment effect of a one-unit change in $X_{it}$ over the $NT$ units in a setting where $\beta_{it}$ is interpreted as the causal effect of a change in the variable $X_{it}$.
  Deriving a computable bound on the nuclear norm error of an initial estimate of the coefficients $\beta_{it}$ in this case is nontrivial, however, and we leave this question for future research.
\end{remark}

\subsection{Choice of weights $A=(A_{it})$}\label{choice_of_weight_section}

As described in the last subsection,
one can construct valid confidence intervals for $\beta$
of the form \eqref{general_A_CI_eq} for any choice of weight matrix $A$, subject to weak regularity conditions. 
To get a simple baseline procedure, we compute weights that are optimal
in an idealized setting where $U_{it}\stackrel{iid}{\sim} N(0,\sigma^2)$ independently of $X,Z$ and $\tilde \Gamma$ (again, this involves invoking the heuristic of treating $\tilde\Gamma$ as a nuisance parameter in (\ref{model}) rather than estimation error from a preliminary estimate).
In this idealized setting, $\hat\beta_A$ is then normally distributed with
variance $\sigma^2\sum_{i=1}^N\sum_{t=1}^TA_{it}^2=\sigma^2\|A\|_F^2$ (where
$\|\cdot\|_F$ denotes the Frobenius norm), and with bias ranging from
$-\maxbias_{\hat C}(\hat\beta_A)$ to $\maxbias_{\hat C}(\hat\beta_A)$.
Thus, if we choose worst-case MSE under i.i.d. normal errors as our criterion function for the weights, then the optimal weights
are obtained by minimizing  $\left( \maxbias_{\hat C}(\hat\beta_A) \right)^2 + \sigma^2
\|A\|_F^2 $.
By substituting the formula for $\maxbias_{\hat C}(\hat\beta_A)$ from
(\ref{WorstCaseBias}), we obtain the following baseline choice of weights, indexed by
a tuning parameter $\biasweight$ that corresponds to~$\hat C/\sigma$.

\begin{definition}
\label{DefOptimalWeights}
For $\biasweight>0$, define the ``optimal'' $N\times T$ weight matrix by
\begin{align*}
  A^*_{\biasweight} &:= \argmin_{A \in \mathbb{R}^{N \times T} } \, \biasweight^2 s_1(A)^2 + \|A\|_F^2
      \;\; \qquad\text{s.t.}\quad  \text{$\langle A,X \rangle_F = 1$ and $\langle A,Z_k\cdot \delta \rangle_F = 0$,}
\end{align*}
Here, the constraint $\langle A,Z_k\cdot \delta \rangle_F = 0$
is imposed for all $k \in \{1,\ldots,K\}$.
\end{definition}

Heuristically, we expect that a
good choice of $\biasweight$ will correspond to $\hat C/\sigma$ such that the
bound $\hat C$ on the
nuclear norm holds with high probability.
Conveniently, our nuclear norm bound in the exact factor model in Section
\ref{sec:linear_factor_implementation} scales with the standard deviation $\sigma$ in the
homoskedastic case, which gives us a simple and feasible choice of the tuning
parameter $\biasweight$.

We emphasize again that while the definition of $A^*_\biasweight$ is
motivated by the idealized setting $U_{it}\stackrel{iid}{\sim} N(0,\sigma^2)$, we do {\it not} assume that the error terms $U_{it}$ satisfy this strong
assumption.
Choosing $A=A^*_\biasweight$
to construct the estimator  $\hat\beta_A $ and the confidence intervals \eqref{general_A_CI_eq} under more general error distributions just means that
the resulting estimates and confidence intervals will not be optimal (in finite samples), but we will nevertheless show them to be consistent and valid, respectively.

\begin{remark}\label{other_criteria_remark}
  While we have used MSE to motivate our baseline choice of weights $A^*_\biasweight$, one
  could use other criteria corresponding to different weights on bias and
  variance.  For example, optimizing CI length when $\hat C/\sigma=b$
  would give the criterion $\biasweight s_1(A)+ z_{1-\alpha}\|A\|_F$.  If $\beta$ gives
  the net welfare gain of an all-or-nothing policy change, then one can target
  minimax welfare regret as in \citet{ishihara_evidence_2021} and
  \citet{yata_optimal_2021}.
  In our Monte Carlo simulations however, we find that the exact choice of
  criterion has little effect on performance.
\end{remark}

\subsection{Practical implementation}\label{general_practical_implementation_section}

The definition of $A^*_\biasweight$ is a convex optimization problem
that can easily be solved numerically for any given input $X$, $Z$,
$\biasweight$.
Using results from
\citet{armstrong_bias-aware_2020}, it follows that $A^*_\biasweight$
can also be computed using the residuals of a nuclear norm regularized
regression of $X$ on $Z_1,\ldots,Z_K$ and a matrix of individual effects.
When there are no additional covariates~$Z$, this nuclear
norm regularized regression simplifies further: it can be solved by computing
the singular value decomposition of $X$, and then performing soft thresholding
on the singular values.  The resulting weights $A^*_\biasweight$ obtained from
the residuals of this regression replace the largest singular values of $X$
with a constant.
We provide details in Appendix~\ref{computational_details_appendix}.

In addition to giving alternative methods for computing the weights $A^*_\biasweight$,
these results provide some intuition for these weights.
The
residuals from this nuclear norm regularized regression of $X$ on $Z_1,\ldots,
Z_K$ and the individual effects ``partial out'' potential correlation of $X$ with the
estimation error $\widetilde\Gamma$, similar to the estimator of
\citet{robinson_root-n-consistent_1988} in the partially linear model.
When there are no additional covariates $Z$, this amounts to removing the largest singular values of $X$ and replacing them with a constant.

To summarize, we can compute an estimator $\hat\beta_{A}$ using Definition
\ref{DefAugLinEst} using any matrix of weights $A$.  We can also compute a CI $\left\{ \hat\beta_A \pm \left[ \maxbias_{\hat C}(\hat\beta_A) + z_{1-\alpha/2}
    \widehat{\se} \right]   \right\}$
as in (\ref{general_A_CI_eq}),
once we have a standard error $\widehat{\se}$ and an upper bound $\hat C$ for the
nuclear norm of the error in the initial estimate of $\Gamma$.  Definition \ref{DefOptimalWeights}
gives us a heuristic for computing a reasonable choice of the matrix $A$, once
we have an initial choice of $\biasweight$ for the ratio $\hat C/\sigma$ of the nuclear norm
bound to variance of $U_{it}$.

Thus, to apply our approach, we need an initial choice
$\biasweight$ to compute the weights $A^*_\biasweight$
using Definition \ref{DefOptimalWeights}.
We also need a robust upper bound $\hat C$ such that the bound
(\ref{NNcondition}) holds with high probability.
Finally, we need a robust standard error $\widehat{\se}$.
Our CI then takes the form in (\ref{general_A_CI_eq}) with $A=A^*_\biasweight$
and the given bound $\hat C$ and standard error $\widehat{\se}$.
In Section \ref{sec:linear_factor_implementation}, we give details of these choices, as well as how to compute
the initial estimate of $\Gamma$.

\section{Implementation}\label{sec:linear_factor_implementation}

In this section, we describe the implementation of our approach.
Our approach relies on bounds for the nuclear norm of the initial estimate of $\Gamma$, derived formally in Section \ref{asymptotic_section}.  
As explained in Section \ref{asymptotic_section}, a tighter CI can be derived using a more nuanced argument that bounds the difference between $\hat\Gamma$ and $\Gamma+P_{\lambda} U$, where $P_\lambda = \lambda (\lambda' \lambda)^+ \lambda$, and $M^+$ denotes the Moore–Penrose inverse of a matrix $M$.
In particular, we show that
the bound
$\hat C\approx 2Rs_1(U)$
can be used.
The bound $R$ on the number of factors must be specified by the researcher, similar to other methods in this literature \citep[e.g.][]{bai2009panel}.
Furthermore, the weights $A^*_\biasweight$ are designed to be optimal when
$U_{it}\stackrel{iid}{\sim} N(0,\sigma^2)$, which leads to the approximation
$s_1(U)/\sigma\approx \sqrt{N}+\sqrt{T}$ \citep{geman1980limit}.  We therefore use
$\biasweight=\bTW:=2R(\sqrt{N}+\sqrt{T})$ as our default choice to calibrate
$\hat C/\sigma$ when computing the weights in Definition \ref{DefOptimalWeights}.
We then use an upper bound $\hat C$ that is valid under heteroskedasticity when
computing $\maxbias_{\hat C}(\hat\beta_{A^*_\bTW})$ in the construction of the CI.

Our initial estimate is formed in two steps.  First, we form the least squares estimate $\hat\Gamma_{\rm LS}$.  We then apply our debiasing approach to get an estimate of the coefficients $\beta$ and $\delta$ and form an $\hat\Gamma_{\rm pre}$ by applying least squares to estimate $\Gamma$, with $\beta$ and $\delta$ fixed at this initial debiased estimate.  This estimate $\hat\Gamma_{\rm pre}$ is then used as the initial estimate in our procedure.
Thus, our procedure involves applying our debiasing approach twice.  This appears to be necessary to get an initial estimate $\hat\Gamma_{\rm pre}$ the best possible nuclear norm bounds on the estimation error.

Below we provide the %
details of our implementation algorithm.\footnote{Implementation of this algorithm in R is also available at \url{https://github.com/chenweihsiang/PanelIFE/tree/main}. We thank Chen-Wei Hsiang for his excellent assistance in preparing this R package.}\textbf{}

\begin{algorithm}[Implementation for the factor model]\mbox{}
  \label{alg: factor model}
  \begin{description}
  \item[Input] Data $Y,X,Z$ and $R$ pre-specified by the user, along with tuning
    parameter $\varepsilon$.

  \item[Output] Estimator and CI for $\beta$.

  \end{description}

  \begin{enumerate}
  \item Compute the least squares (LS) estimator
    \begin{align}
      \left(\hat \beta_{\rm LS}, \hat \delta_{\rm LS}, \hat \Gamma_{\rm LS}\right) = \argmin_{\left\{\beta \in \mathbb R, \delta \in \mathbb R^K, G \in \mathbb{R}^{N \times T} \, : \, {\rm rank}(G) \leq R \right\}}
      \sum_{i=1}^N \sum_{t=1}^T \left( Y_{it} - X_{it} \beta - Z_{it}' \delta - G_{it} \right)^2 .
      \label{eq: LS def}
    \end{align}

  \item Compute $\widetilde Y_{\rm pre} = Y - \hat \Gamma_{\rm LS}$
    and let $\bTW=2R(\sqrt{N}+\sqrt{T})$.
    Let
    \begin{align*}
      \hat\beta_{\rm pre} = \langle A^*_{\bTW}, \widetilde Y_{\rm pre} \rangle_F.
    \end{align*}
    Construct $\hat\delta_{\rm pre}$ with the $j$-th element $\hat\delta _{{\rm
  pre},j}$ computed in the same way as $\hat\beta_{\rm pre}$, but with $X$ and $Z_j$
switched.

  \item Compute $\hat \Gamma_{\rm pre}$ as
  \begin{align*}
    \hat \Gamma_{\rm pre} = \argmin_{\left\{ G \in \mathbb{R}^{N \times T} \, : \, {\rm rank}(G) \leq R \right\}}
    \sum_{i=1}^N \sum_{t=1}^T \left( Y_{it} - X_{it} \hat \beta_{\rm pre} - Z_{it}' \hat \delta_{\rm pre} - G_{it} \right)^2 .
  \end{align*}
  The solution $\hat \Gamma_{\rm pre}$ to this least squares problem is simply
  given by the leading $R$ principal components of the residuals $Y_{it} -
  X_{it} \hat \beta_{\rm pre} - Z_{it}' \hat \delta_{\rm pre}$.
  Compute $\widetilde Y = Y - \hat \Gamma_{\rm pre}$.

  \item Compute the final estimate
    \begin{align*}
      \hat\beta = \hat\beta_{A^*_{\bTW}} = \langle A^*_{\bTW}, \widetilde Y \rangle_F.
    \end{align*}
    To compute the CI, let $\hat C = (2 +
    \varepsilon) R s_1(\hat U_{\rm pre})$ and $\widehat{\se}^2=\sum_{i=1}^N\sum_{t=1}^T A_{b^*,it}^{*2}\hat U_{{\rm pre},it}^2$, where
    \begin{align*}
      \hat U_{\rm pre} = Y - X \hat \beta_{\rm pre} -Z \cdot \hat \delta_{\rm pre} - \hat \Gamma_{\rm pre}.
    \end{align*}
    Compute the CI
    \begin{align}
      \label{eq: bias aware CI}
      \hat\beta_{A^*_{\bTW}} \pm \left[ \maxbias_{\hat C}(\hat\beta_{A^*_{\bTW}}) + z_{1-\alpha/2}\widehat{\se}\right]
    \end{align}
    where $\maxbias_{\hat C}(\hat\beta_{A^*_{\bTW}})=\hat C s_1(A^*_{\bTW})$.
  
  \end{enumerate}
\end{algorithm}

\begin{remark}[Behavior of estimator under strong factors]
  \label{remark: strong factors CS}
  In Section~\ref{ssec: semi strong}, we show that, in the absence of weak
  factors, $\maxbias_{\hat C}(\hat\beta_{A^*_{\bTW}})$ becomes negligible
  relative to $\widehat{\se}$ in the construction of the CI in \eqref{eq: bias
    aware CI}.  Thus, the CI
  \begin{align}\label{eq:non_bias_aware_ci}
    \hat\beta_{A^*_{\bTW}}
    \pm z_{1-\alpha/2}\widehat{\se},
  \end{align}
  which uses our bias-corrected estimator but
  ignores bias when computing the critical value, will have correct asymptotic
  coverage in the strong factor case.
  In our Monte Carlos provided in Section~\ref{ssec: MC non-robust coverage},
  we find that this CI is (i) comparable to alternative non-robust CIs in terms
  of the length and (ii) despite its non-robustness, substantially less
  size-distorted if there is a weak factor(s) since it is based on the debiased
  estimator.
  In settings where the bias-aware CI (\ref{eq: bias aware CI}) is too wide to
  yield precise inference, we recommend reporting the CI
  (\ref{eq:non_bias_aware_ci}) alongside the bias-aware CI (\ref{eq: bias aware
    CI}) as a compromise between ignoring weak factors and a fully robust
  approach.

\end{remark}

\begin{remark}[Choice of $R$ and $\varepsilon$]
  The  quantity $\varepsilon$ is used in the bound $\hat C = (2 +
  \varepsilon) R s_1(\hat U_{\rm pre})$ on $\|\widetilde \Gamma\|_*$ needed
  to compute the CI in the final step. While $\varepsilon>0$ is necessary for theoretical guarantees, 
   in our Monte Carlos, we find that  we get good coverage when choosing
   $\varepsilon=0$.

  In contrast, the choice of $R$ has a substantive effect on the CI, both
  through the bound $\hat C = (2 + \varepsilon) R s_1(\hat U_{\rm pre})$ and
  through the point estimate.
  Since the number of weak factors cannot be determined from the data, the
  researcher must specify an a priori bound on the total number of factors $R$.
  Nonetheless, the data can be informative about the number of strong or semi-strong
  factors, which provides a lower bound for $R$.  We recommend forming an
  estimate $\hat R_s$ of the number of strong factors using one of the standard
  methods (e.g., \citealp{BaiNg2002,Onatski2010,AhnHorenstein2013}) and using
  this as a starting point for examining the sensitivity of the results to the
  choice of $R$.  For example, by taking $R = \hat R_s + 1$, the researcher
  allows for the potential presence of an additional weak factor (or $R - \hat
  R_s$ weak factors in general for a bigger $R$).
\end{remark}

\begin{remark}[Lindeberg condition]\label{lindeberg_remark}
  The asymptotic validity of the CI depends on asymptotic normality of the
  stochastic term $\langle A, U \rangle_F$
  where $A=A^*_{\bTW}$ is a non-random matrix of weights.  This, in
  turn, depends on a Lindeberg condition on the weights $A$.
  To ensure that this holds, we can modify our optimization procedure for
  computing the weights $A=A^*_{\bTW}$ by imposing a bound on the
  Lindeberg weights
  \begin{align}\label{lindeberg_weights_def_eq}
    \operatorname{Lind}(A)=\frac{\max_{1\le i\le N, 1\le t\le T} A_{it}^2}{\sum_{i=1}^N\sum_{t=1}^T A_{it}^2}.
  \end{align}
  A similar approach to showing asymptotic validity is taken in
  \citet{javanmard_confidence_2014} in a different setting.

  To make this approach practical, we need guidance on what makes
  $\operatorname{Lind}(A)$ ``small enough to use the central limit theorem'' in
  a given sample size.  A formal answer to this question is elusive, due to the
  difficulty of obtaining finite sample bounds on approximation error in the
  central limit theorem that are practically useful.  As a heuristic, we can use
  comparisons to other settings where the central limit theorem is used.  For
  example, the sample mean $\bar W=\frac{1}{n}\sum_{i=1}^n W_i$
  with $n$ observations corresponds to an estimator with Lindeberg constant
  $(1/n)^2/[n\cdot (1/n)^2]=1/n$.  If we are comfortable using the normal
  approximation in such a setting with, say, $n=50$, then we can impose a bound
  $\operatorname{Lind}(A)\le 1/50$.
  \citet{noack_bias-aware_2024} provide some discussion of these issues in a
  related setting involving inference in fuzzy regression discontinuity.

  In our Monte Carlos, we find that $\operatorname{Lind}(A)$ is very small for
  the weights used in Algorithm \ref{alg: factor model} once $N$ and $T$ are
  larger than, say, 20.  Thus, imposing a bound on these weights does not appear
  to be necessary in practice in the data generating processes we have examined.
  
\end{remark}

\begin{remark}[Standard error]
  The standard error $\widehat{\se}^2=\sum_{i=1}^N\sum_{t=1}^T A_{it}^2\hat U_{{\rm pre},it}^2$ assumes that $U_{it}$ is
  uncorrelated across $i$ and $t$, but allows for heteroskedasticity.  Such an
  assumption will be reasonable if $\Gamma_{it}$ captures all of the dependence
  in errors for the outcome.  However, incorporating all dependence in
  $\Gamma_{it}$ may lead to an unnecessarily conservative choice of the upper bound $R$ on the number of factors, leading to a wider CI.  To avoid such
  conservative bounds on $\Gamma$, one can incorporate any dependence that is
  not directly correlated with $X_{it}$ into the error term $U_{it}$, and allow
  for such dependence when constructing the standard error. For example, to allow for (arbitrary) time dependence of $U_{it}$ (while maintaining uncorrelatedness across $i$), one could simply use clustered standard errors (see, e.g., \citealp{arellano1987computing,hansen2007asymptotic})
  \begin{align*}
    \widehat{\se}^2=\sum_{i=1}^N \left(\sum_{t=1}^T A_{it}\hat U_{{\rm pre},it}\right)^2.
  \end{align*}
\end{remark}

\section{Asymptotic results}\label{asymptotic_section}

This section gives formal asymptotic results for the estimators and CIs given in
Sections \ref{sec:MainIdea} and \ref{sec:linear_factor_implementation}.
We consider the following decomposition of our regression model:
\begin{align}
  \label{eq: augmented model}
  \widetilde Y: = Y - \hat \Gamma_{\rm pre} = X \beta + Z\cdot \delta + \Gamma-\hat\Gamma_{\rm pre} + U = X \beta + Z\cdot \delta + \widetilde \Gamma + \widetilde U.
\end{align}
Here, $\widetilde\Gamma$ and $\widetilde U$ an be any $N\times T$ matrices
chosen compatibly so that $\widetilde \Gamma + \widetilde U =
\Gamma-\hat\Gamma_{\rm pre} + U$.
While our discussion so far has focused on the case where $\widetilde \Gamma =
\Gamma - \hat \Gamma_{\rm pre}$ and $\widetilde U = U$, it turns out that
allowing for other choices of $\widetilde\Gamma$ and $\widetilde U$ allows for
an improvement in the width of our CI.

To formally state asymptotic results that allow for weak factors and an unknown
error distribution, we introduce some additional notation.
We consider uniform-in-the-underlying distribution asymptotics over a set
$\mathcal{P}$ of distributions $P$ for $\Gamma$ and $X,Z_1,\ldots,Z_K,U$ and a
set $\Theta$ of parameters
$\theta=(\beta,\delta')'$.
While we treat $\Gamma,X,Z_1,\ldots,Z_k$ as random variables determined by the
unknown probability distribution $P$ for notational purposes, we note that a
fixed design setting in which $\Gamma,X,Z_1,\ldots,Z_k$ are non-random
(sequences of) matrices can be incorporated by considering a set $\mathcal{P}$
that places a probability one mass on a given value of
$\Gamma,X,Z_1,\ldots,Z_k$.
We
use $\pr_{P,\theta}$ to denote probability under the given distribution $P$ and
parameters $\theta$.  Formally, we consider large $N$, large $T$ asymptotics in
which $N=N_n\to\infty$ and $T=T_n\to\infty$, and we consider sequences of
distributions $\mathcal{P}=\mathcal{P}_n$ and parameter spaces
$\Theta=\Theta_n$.
Asymptotic statements are then taken in the
sequence $n$.  However, we suppress the dependence on an index sequence $n$ in
order to save on notation.
For a sequence of vectors or matrices $A_{N,T}=A_{N,T}(\theta,P)$ of fixed
dimension (which may depend on $\theta,P$),
we use the notation $A_{N,T}=\mathcal{O}_{\Theta,\mathcal{P}}(r_{N,T})$ when, for every
$\varepsilon>0$, there exists $C_\varepsilon$ such that
\begin{align*}
  \limsup \sup_{P\in\mathcal{P},\theta\in\Theta} \pr_{P,\theta}\left( r_{N,T}^{-1} \| A_{N,T} \|\ge C_\varepsilon \right)\le \varepsilon,
\end{align*}
and we use the notation $A_{N,T}=o_{\Theta,\mathcal{P}}(r_{N,T})$ when, for
every $\varepsilon>0$, we have
\begin{align*}
  \limsup \sup_{P\in\mathcal{P},\theta\in\Theta} \pr_{P,\theta}\left( r_{N,T}^{-1} \| A_{N,T} \|\ge \varepsilon \right) \to 0.
\end{align*}
We use the notation $A_{N,T}\asymp_{\Theta,\mathcal{P}}r_{N,T}$ when
$A_{N,T}=\mathcal{O}_{\Theta,\mathcal{P}}(r_{N,T})$ and
$A_{N,T}^{-1}=\mathcal{O}_{\Theta,\mathcal{P}}(r_{N,T}^{-1})$.
We use the notation
$A_{N,T} \underset{\Theta,\mathcal{P}}{\overset{d}{\to}} \mathcal{L}$
to denote the statement
\begin{align*}
  \limsup 
  \sup_{\theta\in\Theta,P\in\mathcal{P}}
  \Big|  \pr_{\theta,P}\left( A_{N,T} \le t \right) - F_{\mathcal{L}}(t) \Big| \to 1
  \text{ for all }t
\end{align*}
where $F_{\mathcal{L}}$ denotes the cdf of the probability law $\mathcal{L}$.

\subsection{General results}

We first show asymptotic validity of
the CI (\ref{general_A_CI_eq}) under the following high level assumption imposed on the augmented model \eqref{eq: augmented model} and the weights $A_{it}$.

\begin{assumption}\label{ass: augmented high level}
  \mbox{}
  \begin{enumerate}[(i)]
  \item\label{item: augmented high level C hat bound} $\inf_{\theta\in\Theta,P\in\mathcal{P}}\pr_{\theta,P}\left( \|\widetilde \Gamma\|_* \le \hat C \right)\to 1$;
    
  \item\label{item: augmented high level CLT} $\frac{\langle A, \widetilde U
      \rangle_F}{\widehat{\se}}
    \underset{\Theta,\mathcal{P}}{\overset{d}{\to}}
    N(0,1)$.
  \end{enumerate}
\end{assumption}

\begin{theorem}\label{new_high_level_validity_thm}
  Suppose that Assumption \ref{ass: augmented high level} holds.
  Then
  \begin{align*}
    \liminf \inf_{\theta\in\Theta,P\in\mathcal{P}} \pr_{\theta,P}\left( \beta \in \left\{ \hat\beta_A \pm \left[  \maxbias_{\hat C}(\hat\beta_A) + z_{1-\alpha/2}\widehat{\se}  \right] \right\} \right) \ge 1-\alpha.
  \end{align*}
\end{theorem}

\subsection{Primitive conditions}

We now apply these results to the initial estimate and bound given in Section
\ref{sec:linear_factor_implementation}, under the assumption of a linear factor
model for $\Gamma$.
We allow for a side condition on the
Lindeberg weights
$\operatorname{Lind}(A)$
defined in (\ref{lindeberg_weights_def_eq}), as described in Remark
\ref{lindeberg_remark}.
Let $A^*_{\biasweight, c}$ be defined in the same way as $A^*_{\biasweight}$, with the modification that
we impose the constraint $\operatorname{Lind}(A)\le c$:
\begin{align}\label{weight_optimization_lindeberg}
  &\min_{A} \left\| A \right\|_F^2  +  \biasweight^2 s_1(A)^2,  \nonumber  \\
  &\text{s.t.}
    \quad \operatorname{Lind}(A)\le c,
    \quad \langle A,X \rangle_F = 1,
    \quad \langle A,Z_k \rangle_F = 0 \text{ for } k=1,\ldots,K.
\end{align}
In particular, the
weights used in Algorithm \ref{alg: factor model} are given by
$A^*_{\bTW,\infty}=A^*_{\bTW}$, and the weights
$A^*_{\bTW,c}$ with $c<\infty$ correspond to the modification described in
Remark \ref{lindeberg_remark}.
In our asymptotic theory we will require
$c=c_{NT}$ to converge to zero as $N,T \rightarrow \infty$.

We impose the following conditions.

\begin{assumption}[Factor Model]
  \label{ass: factor model}
  Suppose that $\rank{\Gamma} \le R$, i.e., $\Gamma = \lambda f'$ for some $N \times R$ matrix $\lambda$ and some $T \times R$ matrix $f$, 
  with probability one for all $P\in\mathcal{P}$
  and the following conditions hold:
  \leavevmode
  \begin{enumerate}[(i)]
    \item \label{item: NC} 
     Write $W$ for $X,Z_1,\ldots, Z_K$ and $W\cdot \gamma=X\beta +
     \sum_{k=1}^KZ_k\delta_k$ where $\gamma=(\beta,\delta')'$.  We assume that
     there exists $\underline s^2 > 0$ such that
    \begin{align*}
      \min_{\gamma \in \mathbb R^{K+1}: \norm{\gamma} = 1}\frac{1}{NT} \sum_{r = 2 R + 1}^{\min\{N,T\}} s_r^2 (W \cdot \gamma) \ge \underline s^2
    \end{align*}
    with probability approaching 1 uniformly over $P \in \mathcal P$;

    \item \label{item: SN} $s_1(X) = \mathcal{O}_{\Theta,\mathcal{P}} \left(\sqrt{NT}\right)$, $s_1(Z_k) = \mathcal{O}_{\Theta,\mathcal{P}} \left(\sqrt{NT}\right)$ for $k \in \{1, \ldots, K\}$, and $s_1(U) \asymp_{\Theta,\mathcal{P}} \maxSNT $;

          \item \label{item: EX} $\langle X, U \rangle_F  =
            \mathcal{O}_{\Theta,\mathcal{P}} (\sqrt{NT})$ and $\langle Z_k, U
              \rangle_F = \mathcal{O}_{\Theta,\mathcal{P}} (\sqrt{NT})$ for $k \in \{1, \ldots, K\}$;

    \item \label{item: high level singular values of U} $\left(s_1(U) - s_{r}(U)\right)/s_1(U) = o_{\Theta,\mathcal{P}}(1)$ for any fixed positive integer $r$;

    \item \label{A_U_inner_product_bound} For any sequence of matrices $A=A_{N,T}(X,Z)$ that is a function of
    $X,Z_1,\ldots,Z_k$, we have $\langle A, U \rangle_F =
    \mathcal{O}_{\Theta,\mathcal{P}}(\|A\|_F)$.

  \end{enumerate}
\end{assumption}

Assumption \ref{ass: factor model}\eqref{item: NC}
is a generalized non-collinearity condition, which
requires that there is enough variation in the regressors after concentrating out $2R$ arbitrary factors. 
It is closely related to Assumption~A of \cite{bai2009panel}, but our version here avoids mentioning the unobserved factor loadings. The same generalized non-collinearity assumption is imposed in \cite{MoonWeidner2015}. 
The assumption would be violated if
some linear combination $W \cdot \gamma$
of the covariates were to have rank smaller or equal 
to $2R$. In particular,
 ``low-rank regressors'' are ruled out by this condition.  Intuitively, Assumption \ref{ass: factor model}(\ref{item: NC}) holds provided that $X_{it}$ and $Z_{it}$ have non-collinear idiosyncratic components. This intuition is formalized in Appendix~\ref{appendix:VerifyAss2i}, where we also show that, when $X$ is the only regressor in the model (i.e.,\ $K=0$), then Assumption~\ref{ass: X decomposition}\eqref{V_assump} and \eqref{H_assump} below already guarantee that  
 Assumption \ref{ass: factor model}(\ref{item: NC})  holds.

Assumption \ref{ass: factor model}(\ref{item: SN}) places mild bounds on $X$ and
$Z_k$. For example,
if  the second moments of $X_{it}$ are (uniformly) bounded, then $
   \mathbb{E}[s_{1}(X)^2] \leq \mathbb{E}[\| X \|_F^2] = \sum_{i=1}^N \sum_{t=1}^T \mathbb{E}[X_{it}^2] = \Op(NT),
   $
   which, by Markov's inequality, implies 
   $s_{1}(X) = \Op(\sqrt{NT})$.
In addition Assumption \ref{ass: factor model}(\ref{item: SN}) also places a rate restriction on $s_1(U)$
that will hold as long as $U_{it}$
does not exhibit too much dependence over $i$ and $t$.
This rate for $s_1(U)$ is closely related to 
Assumption \ref{ass: factor model}(\ref{item: high level singular values of U}), which is discussed below.
Assumption \ref{ass: factor model}(\ref{item: EX}) again holds as long as $U_{it}$ does not
exhibit too much dependence over $i$ and $t$, and is uncorrelated with $X_{it}$ and $Z_{it}$.
Finally,
Assumption \ref{ass: factor model}(\ref{A_U_inner_product_bound}) holds as long as
$U$ is mean zero given $X$ and $Z$ and satisfies bounds on dependence and second
moments.

Assumption \ref{ass: factor model}(\ref{item: high level singular values of U})  is a high level assumption on the first few singular values of $U$ (note that $r$
is fixed as $N$ and $T$ converge to infinity). 
The singular values of $U$ are 
the square roots of the eigenvalues of $UU'$.
The random matrix theory literature shows that,
if $U$ is an appropriate noise matrix,  the 
largest few eigenvalues of  $UU'$ converge to the Tracy-Widom law, after appropriate rescaling:
if $N$ and $T$ grow at the same rate, then
each of the largest eigenvalues of $UU'$
grows at rate $N$, while the gaps between them grow
at rate $N^{1/3}$.
\cite{Johnstone2001} establish the  Tracy-Widom law
for the largest  eigenvalues of $UU'$, for
the case of i.i.d.\ normal
error $U_{it}$. The subsequent literature has
shown the universality of this result for
more general error distributions,
see e.g.\ \cite{Soshnikov2002}, \cite{pillai2012edge}
and \cite{yang2019edge}.

We also place conditions on the matrix $X$ requiring that there is sufficient
variation after controlling for individual effects and the additional covariates
$Z$.
The constant $c=c_{N,T}$ in the following
assumption is the one that appears
in our construction of $A^*_{b,c}$.

\begin{assumption}\label{ass: X decomposition}
  For all $P\in\mathcal{P}$, there exists uniformly bounded $\pi=\pi_P$ and random matrices $H$
  and $V$ such that $X=Z\cdot \pi + H + V$ and the following conditions hold:
  \begin{enumerate}[(i)]
  \item \label{V_assump} $\norm{V}_F \asy \sqrt{NT}$, $s_1 (V) = \Op (\maxSNT)$;

  \item \label{H_assump} $\norm{H}_F = \Op(\sqrt{NT})$ and $\langle H,V
    \rangle_F = \Op(\sqrt{NT})$;

  \item \label{Z_V_inner_product_assump} $\|Z_k\|_F=\Op(\sqrt{NT})$ and
    $\langle Z_k, V \rangle_F = \Op (\sqrt{N T})$ for $k \in \{1, \ldots, K\}$;
    
  \item \label{Z_prime_Z_assump} $\left(\bf{ Z{}' Z}\right)^{-1} = \Op\left(\frac{1}{NT}\right)$ where
    ${\bf Z} = [{\rm vec}(Z_1), \ldots, {\rm vec}(Z_K)]$;

  \item \label{max_V_Z_assump} $\max_{i,t} V_{it}^2 = \op (NT c_{N,T})$ and $\max_{i,t} Z_{k,it}^2 = \op \left( (NT)^2 c_{N,T}\right)$ for $k \in \{1,\ldots,K\}$.
  \end{enumerate}
  
\end{assumption}

Assumption \ref{ass: X decomposition} uses a decomposition of $X_{it}$ that
depends on an individual effect $H_{it}$ and a random variable $V_{it}$ that is
approximately independent and uncorrelated with $Z_{1,it}\ldots,Z_{k,it}$ as
well as being approximately uncorrelated with the individual effect $H_{it}$.
Importantly, the individual effect $H_{it}$ can be arbitrarily correlated with
$\Gamma_{it}$ and with the variables $Z_{k,it}$.  Note also that we do not place
any assumptions on the rank or nuclear norm of the matrix $H_{it}$.

Part (\ref{max_V_Z_assump}) holds under a tail bound on $V_{it}$ and $Z_{k,it}$.
For example, if $V_{it}$ are (uniformly) sub-Gaussian then $\max_{i,t} V_{it}^2 = \Op ( \log
(N + T) )$, and the condition $\max_{i,t} V_{it}^2 = \op (NT c_{N,T})$ is
satisfied provided that $N T c_{N,T} / \log (N + T) \rightarrow \infty$.
The only other requirement on $c_{N,T}$ is the requirement that
$c_{N,T} \max\{N,T\}\to 0$ in Theorem
\ref{new_validity_theorem} below.  Thus, our results allow
for a range of choices of $c_{N,T}$.

Define $P_\lambda = \lambda (\lambda' \lambda)^+ \lambda$ where $M^+$ denotes the Moore–Penrose inverse of a matrix $M$.

\begin{theorem}\label{new_nuclear_bound_error_thm}
  Let $\hat \Gamma_{\rm pre}$  be defined in Algorithm \ref{alg: factor model},
  with the modification described in Remark \ref{lindeberg_remark}.
  Suppose that Assumption \ref{ass: factor model} holds, and that
  Assumption \ref{ass: X decomposition} holds as stated and with $Z_k$ and $X$
  interchanged for each $k=1,\ldots,K$, for the given sequence $c=c_{N,T}$. Then, for any $\varepsilon > 0$, Assumption \ref{ass: augmented high level}\eqref{item: augmented high level C hat bound} holds with
  \begin{enumerate}[(i)]
    \item $\widetilde \Gamma = \Gamma - \hat \Gamma_{\rm pre}$ and $\hat C = 3 R s_1 (\hat U_{\rm pre}) (1 + \varepsilon) = \Op (\maxSNT)$; \label{item: 3 R bound}

    \item $\widetilde \Gamma = \Gamma + P_\lambda U - \hat \Gamma_{\rm pre}$ and $\hat C = 2 R s_1 (\hat U_{\rm pre}) (1 + \varepsilon) = \Op (\maxSNT)$. \label{item: 2 R bound}
  \end{enumerate}
\end{theorem}

Theorem \ref{new_nuclear_bound_error_thm} is the main novel technical result that allows us to construct a feasible CI.  It provides an explicit bound on the nuclear norm error of our initial estimate.  As we show in the proof of Theorem \ref{new_validity_theorem} below, the term $\langle A, P_\lambda U\rangle_F$ is asymptotically negligible under our assumptions.  Thus, redefining the target parameter to be $\Gamma + P_\lambda U$ instead of $\Gamma$ and using the bound in part (ii) of the theorem does not affect the construction of the CI.
This leads to a shorter CI using the bound in part (ii) compared to using the bound in part (i).  For this reason, we use the bound in part (ii) in the implementation described in Section \ref{sec:linear_factor_implementation} and in our formal coverage results below.

We now turn to the rate of convergence of the debiased estimator and the coverage of the CI.  The proofs of these theorems use the nuclear norm bounds in Theorem \ref{new_nuclear_bound_error_thm}.

\begin{theorem}\label{new_beta_rates_thm}
    Let $\hat \beta=\hat\beta_{A^{*}_{\bTW,c}}$ be defined in Algorithm \ref{alg: factor model},
    with the modification described in Remark \ref{lindeberg_remark}.
    Suppose that Assumption \ref{ass: factor model} holds, and that
    Assumption \ref{ass: X decomposition} holds as stated and with $Z_k$ and $X$
    interchanged for each $k=1,\ldots,K$, for the given sequence $c=c_{N,T}$.   Then
    \begin{align*}
      \hat\beta-\beta=\mathcal{O}_{\Theta,\mathcal{P}}(1/\min\{N,T\}).
    \end{align*}
\end{theorem}

To obtain primitive conditions for a central limit theorem and asymptotic
validity of the confidence interval, we impose that the errors are independent,
but not necessarily identically distributed, conditional on $X$, $Z$ and $\Gamma$.

\begin{assumption}\label{U_conditional_moment_bound_assump}
  There exist constants $\underline \sigma>0$ and $\eta>0$ such that, for all
  $P\in\mathcal{P}$, $U_{it}$ is independent over $i,t$ conditional on
  $W,\Gamma$ and, for all $i,t$,
  \begin{align*}
    \e_P [U_{it}|W,\Gamma]=0,
    \quad \e_P [U_{it}^2|W,\Gamma]>\underline\sigma^2,
    \quad \e_P [U_{it}^{4}|W,\Gamma]<1/\eta.
  \end{align*}
\end{assumption}

\begin{theorem}\label{new_validity_theorem}
    Let $\hat \beta=\hat\beta_{A^{*}_{\bTW,c}}$ and $\hat C = 2Rs_1(\hat
  U_{\rm pre})(1+\varepsilon)$ be defined in Algorithm \ref{alg: factor model},
  with the modification described in Remark \ref{lindeberg_remark} for
  $c=c_{N,T}$ with $c_{N,T} \max \{N,T\}\to 0$.
  Suppose that Assumptions \ref{ass: factor model}(\ref{item: NC})-(\ref{item: high level singular values of U}) hold, and that
  Assumption \ref{ass: X decomposition} holds as stated and with $Z_k$ and $X$
  interchanged for each $k=1,\ldots,K$, for the given sequence $c=c_{N,T}$,
  and that Assumption \ref{U_conditional_moment_bound_assump} holds.
  Let
  $\widehat{\se}^2=\sum_{i=1}^N\sum_{t=1}^T A_{it}^2\hat U_{it}^2$
  where $A=A^{*}_{\bTW,c}$ and $\hat U_{it}$ is the residual from the least
  squares estimator.
  Then
  \begin{align*}
    \hat\beta-\beta=\mathcal{O}_{\Theta,\mathcal{P}}(1/\min\{N,T\})
  \end{align*}
  and
  \begin{align*}
    \liminf \inf_{\theta\in\Theta,P\in\mathcal{P}} \pr_{\theta,P}\left( \beta \in \left\{ \hat\beta \pm \left[  \maxbias_{\hat C}(\hat\beta) + z_{1-\alpha/2}\widehat{\se}  \right] \right\} \right) \ge 1-\alpha.
  \end{align*}
\end{theorem}

\subsection{Strong factor case}
\label{ssec: semi strong}

Numerous studies on the estimation of panel regressions with unobserved factors assume that these factors are ``strong'' or ``semi-strong''. This assumption implies that the unobserved error structure, $\Gamma_{it} + U_{it}$ in model \eqref{model0}, viewed as an $N \times T$ matrix, contains an $R = \operatorname{rank}(\Gamma)$ factor component $\Gamma = \lambda f'$ with singular values that asymptotically diverge faster than the largest singular value, $s_1(U)$, of the idiosyncratic error part $U$. Specifically, as $N, T \rightarrow \infty$, the ratio $s_1(U) / s_R(\Gamma)$ approaches zero (at a certain rate) under the semi-strong (strong) factor assumptions. Both \cite{Pesaran2006estimation} and \cite{bai2009panel} impose conditions that imply strong factors in this sense, as do many subsequent papers.

A key motivation for the estimation approach in this paper is to avoid assuming strong factors, instead providing an inference method that remains uniformly valid regardless of factor strength. Nevertheless, it is natural to consider how our approach behaves when factors are, in fact, strong, if only to facilitate comparison with much of the existing literature. The following theorem, therefore, extends Theorem~\ref{new_nuclear_bound_error_thm} to accommodate the case of strong factors.

\begin{theorem}
   \label{th:SemiStrong}
     Let $\hat \Gamma_{\rm pre}$  be defined in Algorithm \ref{alg: factor model},
  with the modification described in Remark \ref{lindeberg_remark}. Suppose that the hypotheses of Theorem~\ref{new_validity_theorem} hold, and furthermore, assume that 
   $s_1(U) / s_R(\Gamma)= \op (1)$. Then  
   Assumption \ref{ass: augmented high level}\eqref{item: augmented high level C hat bound} holds with
    $\widetilde \Gamma =  \Gamma + M_{\lambda} U P_{f}  + P_{\lambda} U M_{f}  - \hat \Gamma_{\rm pre}$
   and 
   $\hat C =  o_{\Theta, \mathcal P} \left(   s_1(U)   \right)    = o_{\Theta, \mathcal P} (\maxSNT) $,
   where $P_f = f (f' f)^+ f'$, $M_\lambda = \mathbb I_N - P_\lambda$, and $M_f = \mathbb I_T - P_f$.
\end{theorem}

Theorem~\ref{th:SemiStrong} additionally requires $s_1(U) / s_R(\Gamma)= o_P(1)$, i.e., that the factors
are semi-strong or strong, and also that $R = \operatorname{rank}(\Gamma)$. This implies that 
$\hat \Gamma_{\rm pre}$ converges to 
$\Gamma + M_{\lambda} U P_{f}  + P_{\lambda} U M_{f}$ in second leading order
(see e.g.\ Lemma S.3 in the supplement to \citealp{MoonWeidner2015}). 
Theorem~\ref{th:SemiStrong} states that with this change of target matrix, the nuclear norm bound 
$\hat C$ is smaller than $ s_1(U) \asymp_{\Theta,\mathcal{P}} {(\maxSNT)}.$ This implies that, when $N$ and $T$ grow at the same rate, the worst-case bias $\maxbias_{\hat C}(\hat\beta)$ used in the construction of the CI in Theorem~\ref{new_validity_theorem} becomes negligible relative to the standard error $\widehat{\se}$. At the same time, the 
extra term $\langle A, M_{\lambda} U P_{f}  + P_{\lambda} U M_{f} \rangle_F$ is asymptotically negligible by exactly the same arguments given in the proof of Theorem~\ref{new_validity_theorem} for the negligibility of the term $\langle A, P_\lambda U \rangle_F$. As a result, in the considered regime, the debiased estimator is asymptotically unbiased and the (non-bias) aware CI \eqref{eq:non_bias_aware_ci} is asymptotically valid.

\begin{remark}
  \label{rem: R_w bound}

  Theorem \ref{th:SemiStrong} shows that the bias term is asymptotically negligible when $N$ and $T$ grow at the same rate and all $R$ factors are strong.  This justifies the CI (\ref{eq:non_bias_aware_ci}) discussed in Remark \ref{remark: strong factors CS} in the strong factor setting.  More generally, in the case where $R_w$ factors are weak and $R_s=R-R_w$ factors are strong, we conjecture that Theorem \ref{th:SemiStrong} could be extended to show that the bias-aware CI (\ref{eq: bias aware CI}) is valid with $\hat C = 2 R_w s_1 (\hat U_{\rm pre}) (1 + \varepsilon)$. 
 In other words, we conjecture that the worst-case bias of our estimator only depends on the number of weak factors.
\end{remark}

\subsection{Comparison to other results in the literature}\label{sec:comparison_to_other_results}

Our debiasing approach leads to the faster rate $\min \{N,T\}$ compared to the rate $\min \{\sqrt{N},\sqrt{T}\}$ for $\hat \beta_{\rm LS}$ (see, e.g., \citealp{MoonWeidner2015}).
While our results appear to be the first to demonstrate a $\min\{N,T\}$ rate of convergence under the conditions above, recent papers have proposed estimators that use additional structure to construct estimators that achieve the same or better rates.
\citet{chetverikov2022spectral} impose a factor structure on $X$, which corresponds to imposing a low-rank assumption on the matrix $H$ in our Assumption \ref{ass: X decomposition}.  They use this assumption to construct an estimator that, like ours, achieves a $\min\{N, T\}$ rate under weak factors.  \citet{zhu2019well} imposes homoskedastic and independent errors in addition to a factor structure on $X$, and shows that this allows for a faster $\sqrt{NT}$ rate of convergence, even under weak factors.

While robust to weak factors, our CI will be wider than a CI based on the strong factor asymptotics in \citet{bai2009panel}.  Ideally, one would like to form a CI that is \emph{adaptive} to the strength of factors.  Such a CI would be robust to weak factors, while being asymptotically equivalent to the CI in \citet{bai2009panel} when factors are strong.  However, as shown by \citet{zhu2019well}, such an adaptive CI cannot be obtained, even if one imposes homoskedastic errors and additional structure on the covariate matrix $X$.  Thus, while there may be some room for efficiency gains over our CI, one must allow for some increase in CI length relative to the CI in \citet{bai2009panel} in order to allow for weak factors.

As discussed in the introduction, our debiasing approach is analogous to the approach to debiasing the LASSO taken in \citet{javanmard_confidence_2014} and, more broadly, other papers in the debiased LASSO literature such as \citet{belloni_inference_2014}, \citet{van_de_geer_asymptotically_2014} and \citet{zhang_confidence_2014}.
Interestingly, this analogy extends to the rates of convergence in our asymptotic results.  The debiased lasso applies to a high dimensional regression model with $s$ nonzero coefficients and $n$ observations.  The resulting estimator has bias of order $s/n$, up to log terms, and variance $1/n$.  Note that $s$ is the dimension of the constraint set for the unknown parameter, while $n$ is the total number of observations.  In our setting, the debiased estimator has bias of order $\max\{N,T\}/(NT)$ and variance $1/(NT)$.  The set of matrices $\Gamma$ with rank at most $R$ has dimension of order $\max\{N,T\}$ so, just as with the debiased lasso, the bias term is of the same order of magnitude as the ratio of the dimension of the constraint set to the total number of observations.  In the debiased lasso setting, one can justify a CI that ignores bias by assuming that $s$ increases slowly enough relative to $n$ for the order $s/n$ bias term to be asymptotically negligible relative to the order $1/\sqrt{n}$ standard deviation term.  Unfortunately, this cannot occur in our setting even if $R=1$, since the bias term is of order $\max\{N,T\}/(NT)$ which is always of at least the same order of magnitude as the standard deviation $1/\sqrt{NT}$.  This necessitates our bias-aware approach.

\section{Numerical Evidence}
\label{sec: numerical}
\subsection{Simulation Study}
\label{ssec: MC}
We consider the following design:
\begin{align*}
  Y_{it} &= X_{it} \beta + \sum_{r=1}^R \kappa_r \lambda_{ir} f_{tr} + U_{it} ,\\
  X_{it} &= \sum_{r=1}^R \lambda_{ir} f_{tr} + V_{it},
\end{align*}
where $\kappa_r$ controls the strength of factor $f_{tr}$, and $R$ stands for the number of factors. In addition, $\lambda_i$, $f_t$, $U_{it}$ and $V_{it}$ are all mutually independent across both $i$, $t$, and $(i,t)$, and
\begin{align*}
  \lambda_i \sim N(0,I_R) \perp f_t \sim N(0,I_R) \perp  \begin{pmatrix} U_{it} \\ V_{it} \end{pmatrix} \sim N \left(\begin{pmatrix} 0 \\ 0 \end{pmatrix}, \begin{pmatrix} \sigma_U^2 & 0 \\ 0 & \sigma_V^2 \end{pmatrix}\right).
\end{align*}

In the designs considered below, we fix $(\beta, \sigma_U^2, \sigma_V^2) = (0,1,1)$ and vary $N$, $T$, the number of factors $R$, and their strengths controlled by $\kappa_r$. The number of simulations in all of the considered designs is 5000.
As before, we are interested in estimation of and inference on $\beta$.

In Tables \ref{tab: N100 R1}-\ref{tab: N100 T50 R2 inference}, we report the bias, standard deviation, and rmse for the benchmark LS estimator of \citet{bai2009panel} and for the proposed debiased estimator in various designs with 1 and 2 factors.\footnote{Note that the CCE estimator of \cite{Pesaran2006estimation} would not work in these designs, regardless of whether the factors are strong or not, because the cross-sectional averages of $\lambda_{ir}$ equal zero.} We also report the size of the corresponding tests (with $5\%$ nominal size) and the average length of the CIs (with $95\%$ nominal coverage). For simplicity and for brevity of the reported results, we assume that the number of factors is known. We consider a case when the number of factors is overspecified in Appendix \ref{ssec: MC extended}.

The LS estimator is heavily biased and the associated tests and CIs are heavily size distorted unless all the factors are strong. At the same time, the proposed estimator effectively reduces the ``weak factors'' bias without inflating the variance. As a result, the potential efficiency gains from using the debiased estimator can be very large when there is a weak factor, especially for larger sample sizes (see Appendix \ref{ssec: additional simulation results} for additional simulation results). Importantly, even if all the factors are strong, the debiased estimator performs comparably to the LS estimator.

When weak factors are present, the LS CIs can have zero coverage because they are (i) centered around the biased LS estimator and (ii) too short. Hence, the average length of the LS CIs is not a proper benchmark to compare the average length of the bias-aware CIs. To provide a relevant comparison, we also construct identification robust CIs by inverting the (absolute value of the) LS based t-statistic using appropriate identification robust critical values (instead of $z_{1-\alpha/2}$). Specifically, for a given design (here, for fixed $N$, $T$, and $R$), we (numerically) compute the least favorable (over $\kappa$) critical value for the absolute value of the t-statistic based on the LS estimator. We also construct analogous CIs by inverting the (absolute value of the) t-statistic based on the debiased estimator using the corresponding least favorable critical values. We refer to such CIs as the LS and debiased oracle CIs (because they are based on unknown design-specific least favorable critical values) and report their average length denoted by ``length*'' in the tables below.

Notice that the average length of the LS oracle CIs (the ``length*'' column under the LS heading) 
is at least comparable to but mostly significantly greater than the actual length of the bias-aware CIs (the ``length'' column under the debiased heading), especially for larger sample sizes (again, see Appendix \ref{ssec: additional simulation results} for additional simulation results).
Thus, the bias-aware CI outperforms the LS CI once one corrects the LS CI to compensate for its severe undercoverage.

Another important comparison is between the actual length and oracle length of the bias-aware CI.  Throughout most of the designs, the oracle length of the bias-aware CI is slightly less than half the length of the actual bias-aware CI that we compute.  This gives a bound on how conservative our CI is: our bias-aware critical value cannot be decreased by more than a factor of about two without sacrificing coverage in these Monte Carlos.  There are two possible sources of this conservativeness: (1) the bound in Theorem \ref{new_nuclear_bound_error_thm} may be conservative or (2) there may be some additional structure in the initial error or its correlation with the data that our nuclear norm debiasing method does not exploit.  While further improving the CI using the proof techniques in this paper appears difficult, we cannot rule out these possibilities.  On the other hand, it is possible that these Monte Carlos overstate the conservativeness of our bias-aware CI: there may be other DGPs for which our bias-aware critical value cannot be decreased without sacrificing coverage.

Despite the simplicity of the design considered in this section, the presented findings seem to be characteristic of more complicated and settings. Specifically, in Appendix \ref{ssec: MC extended}, we consider a design with an additional covariate and non-Gaussian, heteroskedastic, and serially correlated errors and establish qualitatively similar results regardless of whether the correct number of factors is known or overspecified.

\subsection{Empirical Illustration}
\label{ssec: empirical}
In this section, we illustrate the finite sample properties of the proposed estimator and confidence intervals in a numerical experiment calibrated to imitate an actual empirical setting. Specifically, we calibrate our experiment based on the seminal studies of the effects of unilateral divorce law reforms on the US divorce rates by \citet{Friedberg1998} and \citet{Wolfers2006}, subsequently revisited by \citet{KimOka2014} and \citet{MoonWeidner2015} in the context of interactive fixed effects models.

For simplicity of the experiment, as a benchmark, we use the following static specification also considered in \citet{Friedberg1998} and \citet{Wolfers2006}
\begin{align*}
  Y_{it} = X_{it} \beta + \alpha_i + \zeta_i t + \nu_i t^2 + \phi_t + U_{it},
\end{align*}
where $Y_{it}$ denotes the annual divorce rate (per 1,000 persons) in state $i$ in year $t$, and $X_{it}$ is a dummy variable indicating if state $i$ had a unilateral divorce law in year $t$. Following \citet{Friedberg1998} and \citet{Wolfers2006}, we also control for state-specific quadratic time trends and time effects.

We follow \citet{KimOka2014} and use their data to construct a balanced panel with $N = 48$ states and $T = 33$ years. As in \citet{MoonWeidner2015}, first we profile out the individual trends and time effects from $Y_{it}$ and $X_{it}$ to form the projected model
\begin{align*}
  Y_{it}^{\perp} = X_{it}^{\perp} \beta + U_{it}^{\perp}
\end{align*}
and obtain the estimates $\hat \beta$ and $\hat \sigma_{U^\perp}^2$. We also extract the first principal component of the matrix of regressors $X^\perp$ denoted by $\Gamma^{X^\perp} = \lambda_i^{X^\perp} f_t^{X^\perp}$.

In our numerical experiment, we fix $X^\perp$, $\{\lambda_i^{X^\perp}\}_{i=1}^N$, and $\{f_t^{X^\perp}\}_{t=1}^T$, and consider the following DGP
\begin{align*}
  Y_{it}^{\perp} = X_{it}^\perp \hat \beta + \kappa \lambda_i^{X^\perp} f_t^{X^\perp} + U_{it}^\perp,
\end{align*}
where we introduce an additional factor $f_{t}^{X^\perp}$ and a parameter $\kappa$ controlling the strength of $f_{t}^{X^\perp}$. For every repetition, we draw $U_{it}^{\perp}$ as iid $N(0, \hat \sigma_U^2)$ and treat all the other parts of the DGP as fixed.

As before, we compare the LS estimator and inference performance with the proposed approach for various values of $\kappa$. Both approaches use the correctly specified number of factors $R = 1$. The results are based on 5,000 simulations and provided in Table \ref{tab: empriical MC}. We report the same statistics as in Section \ref{ssec: MC}.

The results are qualitatively similar to the results in Section \ref{ssec: MC}. The LS estimator is heavily biased when the factor is weak, and the standard tests and confidence intervals are severely size distorted. Compared to the LS estimator, the debiased estimator has a substantially smaller bias, standard deviation, and rmse when the factor is weak. It also performs competitively if the factor is strong. The LS CIs are much shorter than the bias-aware CIs but have very poor coverage. The oracle CIs based on the LS estimator have the correct coverage and are also considerably wider than the naive CIs and comparable with the bias-aware CIs. Again, the oracle CIs based on the debiased estimator are considerably shorter than the bias-aware CIs and LS oracle CIs, indicating that there is a potential scope for improvement.

Overall, our empirically calibrated simulation study shows that the presence of a weak factor can lead to poor performance of conventional estimators and inference procedures in an actual empirical setting. It also demonstrates that in such settings, the gains from using the debiased estimator could be substantial.

Finally, we also report estimation and inference results for the actual data set. For consistency with the numerical experiment above, we focus on the same single covariate $X_{it}$. In Appendix \ref{sec: empirical app}, we also consider a specification with dynamic treatment effects as in \citet{Wolfers2006}. Similarly to \citet{KimOka2014} and \citet{MoonWeidner2015}, we estimate
\begin{align*}
  Y_{it} = X_{it} \beta + \alpha_i + \zeta_i t + \nu_i t^2 + \phi_t + \sum_{r=1}^R \lambda_{ir} f_{tr} + U_{it}
\end{align*}
for various values of $R$ using the LS and the debiased approaches and construct 95\% CIs for $\beta$. As before, we first profile out the individual trends and time effects, and then use the residual outcomes and regressors as inputs for the LS and debiased estimators.

The results are provided in Table \ref{tab: empirical res}.
We construct three types of CIs based on the debiased estimator using $\hat C$ as in Remark~\ref{rem: R_w bound} with different values of $R_w$. The first one is constructed assuming that there are no weak factors among $R$ factors ($R_w = 0$), i.e., under the same assumption under which the standard LS CI is valid. This is the same CI as introduced in Remark~\ref{remark: strong factors CS}. In this application, these CIs are as short as or even shorter than the LS CIs. Thus, if the researcher wants to obtain shorter CIs at the cost of non-robustness to the potential presence of weak factors, they can still do that using our debiased estimator. As pointed out in Remark~\ref{remark: strong factors CS} and documented in Section~\ref{ssec: MC non-robust coverage}, such CIs are still likely to have much better coverage than the LS ones when there is a weak factors since they are based on the debiased estimator.

The second type of CIs is constructed assuming that among $R$ factors there is up to one weak factor ($R_w = 1$). The corresponding bias-aware CIs are substantially wider than the non-robust ones. However, as the numerical experiment considered earlier in this sections suggests, this is how wide identification robust CIs appear to have to be in this setting. In the considered application, we find that the potential presence of one weak factor is likely to be sufficient to nullify the significance of the previously obtained non-robust estimates.

Finally, we also report our bias-aware CIs as in \eqref{eq: bias aware CI} corresponding to $R_w = R$.  These CIs are uniformly valid regardless of the strength of identification of the factors.

The results for a specification with dynamic treatment effects are qualitatively similar and provided in Appendix~\ref{sec: empirical app}.

\begin{table}[H]
\caption{Simulation results for the experiment in Section \ref{ssec: MC}, $N = 100$, $R = 1$}
\label{tab: N100 R1}
\begin{center}
\resizebox{\textwidth}{!}{
\begin{tabular}{c c c c c c c| c c c c c c}
&\multicolumn{6}{c}{LS}&\multicolumn{6}{c}{Debiased} \\
\cmidrule(lr){2-7}  \cmidrule(lr){8-13}
{$\kappa$}&{bias}&{std}&{rmse}&{size}&{length}&{length*}&{bias}&{std}&{rmse}&{size}&{length}&{length*}\\
\midrule
\multicolumn{13}{c}{$T = 20$}\\
\midrule
0.00&-0.0000&0.0171&0.0171&7.1&0.061&0.299&0.0002&0.0206&0.0206&0.0&0.304&0.137 \\
0.05&0.0242&0.0178&0.0300&37.3&0.062&0.300&0.0095&0.0207&0.0228&0.0&0.304&0.137 \\
0.10&0.0478&0.0200&0.0518&79.3&0.062&0.302&0.0181&0.0215&0.0281&0.0&0.305&0.137 \\
0.15&0.0690&0.0249&0.0734&91.6&0.063&0.308&0.0244&0.0235&0.0339&0.0&0.306&0.138 \\
0.20&0.0792&0.0382&0.0879&85.7&0.067&0.324&0.0250&0.0276&0.0372&0.0&0.309&0.138 \\
0.25&0.0670&0.0531&0.0855&64.8&0.074&0.358&0.0189&0.0306&0.0360&0.0&0.311&0.139 \\
0.50&0.0049&0.0244&0.0248&8.2&0.087&0.425&0.0013&0.0239&0.0240&0.0&0.314&0.139  \\
1.00&0.0004&0.0232&0.0232&5.9&0.088&0.427&0.0001&0.0237&0.0237&0.0&0.315&0.140  \\
\midrule
\multicolumn{13}{c}{$T = 50$}\\
\midrule
0.00&-0.0002&0.0103&0.0103&5.9&0.039&0.228&-0.0001&0.0136&0.0136&0.0&0.173&0.079\\
0.05&0.0244&0.0108&0.0267&67.5&0.039&0.228&0.0064&0.0137&0.0151&0.0&0.173&0.080 \\
0.10&0.0484&0.0124&0.0500&98.2&0.039&0.230&0.0121&0.0143&0.0187&0.0&0.174&0.080 \\
0.15&0.0683&0.0189&0.0709&96.8&0.040&0.237&0.0135&0.0164&0.0213&0.0&0.175&0.080 \\
0.20&0.0580&0.0390&0.0699&72.4&0.046&0.269&0.0084&0.0180&0.0198&0.0&0.177&0.080 \\
0.25&0.0229&0.0306&0.0382&33.5&0.053&0.308&0.0032&0.0164&0.0167&0.0&0.177&0.080 \\
0.50&0.0016&0.0144&0.0145&5.7&0.055&0.324&0.0002&0.0151&0.0151&0.0&0.177&0.080  \\
1.00&0.0001&0.0142&0.0142&5.1&0.055&0.324&-0.0001&0.0151&0.0151&0.0&0.178&0.080 \\
\midrule
\multicolumn{13}{c}{$T = 100$}\\
\midrule
0.00&-0.0001&0.0073&0.0073&6.1&0.028&0.183&-0.0001&0.0108&0.0108&0.0&0.122&0.057 \\
0.05&0.0246&0.0077&0.0258&91.0&0.028&0.183&0.0051&0.0108&0.0120&0.0&0.122&0.057  \\
0.10&0.0486&0.0093&0.0495&99.9&0.028&0.185&0.0089&0.0117&0.0147&0.0&0.123&0.057  \\
0.15&0.0619&0.0224&0.0658&92.9&0.030&0.197&0.0068&0.0132&0.0149&0.0&0.124&0.058  \\
0.20&0.0239&0.0267&0.0358&47.4&0.037&0.243&0.0023&0.0124&0.0127&0.0&0.124&0.058  \\
0.25&0.0077&0.0120&0.0143&17.9&0.039&0.256&0.0009&0.0119&0.0120&0.0&0.124&0.058  \\
0.50&0.0009&0.0103&0.0103&5.9&0.039&0.260&0.0001&0.0118&0.0118&0.0&0.124&0.058   \\
1.00&0.0001&0.0102&0.0102&5.4&0.039&0.260&-0.0000&0.0118&0.0118&0.0&0.124&0.058  \\
\midrule
\multicolumn{13}{c}{$T = 300$}\\
\midrule
0.00&-0.0000&0.0042&0.0042&5.3&0.016&0.121&0.0000&0.0056&0.0056&0.0&0.080&0.033 \\
0.05&0.0247&0.0046&0.0252&100.0&0.016&0.122&0.0047&0.0056&0.0073&0.0&0.080&0.033\\
0.10&0.0482&0.0070&0.0487&99.8&0.016&0.123&0.0057&0.0067&0.0088&0.0&0.080&0.033 \\
0.15&0.0178&0.0173&0.0248&60.5&0.021&0.161&0.0015&0.0063&0.0065&0.0&0.081&0.033 \\
0.20&0.0047&0.0064&0.0080&16.4&0.022&0.170&0.0005&0.0061&0.0061&0.0&0.081&0.033 \\
0.25&0.0023&0.0060&0.0064&7.6&0.023&0.171&0.0003&0.0061&0.0061&0.0&0.081&0.033  \\
0.50&0.0003&0.0057&0.0058&4.9&0.023&0.172&0.0001&0.0060&0.0060&0.0&0.081&0.033  \\
1.00&0.0001&0.0057&0.0057&4.9&0.023&0.172&0.0000&0.0060&0.0060&0.0&0.081&0.033  \\
\bottomrule
\bottomrule
\multicolumn{13}{p{1.2\textwidth}}{${\rm Lind}(A) \in \{0.0063,0.0028,0.0015,0.0006\}$ for $T \in \{20,50,100,300\}$.} 
\end{tabular}
}
\end{center}
\end{table}

\newgeometry{left=1.5cm,right=1.5cm,top=3cm,bottom=2cm}
\begin{landscape}
\begin{table}[H]
\caption{Simulation results for the experiment in Section \ref{ssec: MC}, $N = 100$, $T = 50$, $R = 2$}
\label{tab: N100 T50 R2 estimation}
\begin{scriptsize}
\begin{center}
\begin{tabular}{ l| c c c c c c c c c c| c c c c c c c c c c}
     & \multicolumn{10}{c}{LS} & \multicolumn{10}{c}{Debiased}\\
     \diagbox[width=12mm,height=8mm]{$\kappa_1$}{$\kappa_2$}&{0.00}&{0.05}&{0.10}&{0.15}&{0.20}&{0.25}&{0.30}&{0.40}&{0.50}&{1.00}&{0.00}&{0.05}&{0.10}&{0.15}&{0.20}&{0.25}&{0.30}&{0.40}&{0.50}&{1.00}\\
     \midrule
     \multicolumn{21}{c}{bias}\\
     \midrule
     0.00 & -0.000&0.016&0.031&0.035&0.018&0.008&0.004&0.002&0.001&0.000&-0.000&0.006&0.010&0.010&0.005&0.002&0.001&0.000&0.000&-0.000\\
     0.05 & 0.016&0.033&0.049&0.060&0.052&0.036&0.030&0.027&0.026&0.025&0.006&0.012&0.017&0.018&0.014&0.010&0.009&0.008&0.007&0.007\\
     0.10 & 0.031&0.049&0.066&0.080&0.083&0.067&0.057&0.051&0.050&0.049&0.010&0.017&0.022&0.025&0.022&0.018&0.015&0.014&0.014&0.013\\
     0.15 & 0.035&0.060&0.080&0.097&0.108&0.099&0.082&0.072&0.070&0.068&0.010&0.018&0.025&0.029&0.028&0.024&0.020&0.018&0.017&0.017\\
     0.20 & 0.018&0.052&0.083&0.108&0.123&0.114&0.089&0.068&0.064&0.060&0.005&0.014&0.022&0.028&0.029&0.024&0.018&0.014&0.013&0.012\\
     0.25 & 0.008&0.036&0.067&0.099&0.114&0.088&0.054&0.032&0.028&0.025&0.002&0.010&0.018&0.024&0.024&0.017&0.010&0.006&0.005&0.005\\
     0.30 & 0.004&0.030&0.057&0.082&0.089&0.054&0.027&0.015&0.012&0.010&0.001&0.009&0.015&0.020&0.018&0.010&0.005&0.003&0.002&0.002\\
     0.40 & 0.002&0.027&0.051&0.072&0.068&0.032&0.015&0.007&0.005&0.004&0.000&0.008&0.014&0.018&0.014&0.006&0.003&0.001&0.001&0.001\\
     0.50 & 0.001&0.026&0.050&0.070&0.064&0.028&0.012&0.005&0.004&0.002&0.000&0.007&0.014&0.017&0.013&0.005&0.002&0.001&0.001&0.000\\
     1.00 & 0.000&0.025&0.049&0.068&0.060&0.025&0.010&0.004&0.002&0.000&-0.000&0.007&0.013&0.017&0.012&0.005&0.002&0.001&0.000&-0.000\\
     \midrule
     \multicolumn{21}{c}{std}\\
     \midrule
     0.00 & 0.009&0.009&0.012&0.019&0.019&0.013&0.011&0.011&0.010&0.010&0.013&0.013&0.014&0.015&0.015&0.014&0.014&0.014&0.014&0.014\\
     0.05 & 0.009&0.009&0.010&0.014&0.021&0.015&0.012&0.011&0.011&0.011&0.013&0.013&0.013&0.015&0.015&0.014&0.014&0.014&0.014&0.014\\
     0.10 & 0.012&0.010&0.010&0.012&0.019&0.020&0.014&0.013&0.013&0.013&0.014&0.013&0.014&0.015&0.016&0.015&0.015&0.014&0.014&0.014\\
     0.15 & 0.019&0.014&0.012&0.012&0.017&0.025&0.021&0.019&0.019&0.020&0.015&0.015&0.015&0.016&0.016&0.017&0.016&0.016&0.016&0.016\\
     0.20 & 0.019&0.021&0.019&0.017&0.025&0.043&0.045&0.040&0.039&0.039&0.015&0.015&0.016&0.016&0.018&0.020&0.020&0.019&0.019&0.019\\
     0.25 & 0.013&0.015&0.020&0.025&0.043&0.064&0.055&0.037&0.034&0.032&0.014&0.014&0.015&0.017&0.020&0.023&0.021&0.018&0.018&0.017\\
     0.30 & 0.011&0.012&0.014&0.021&0.045&0.055&0.036&0.021&0.019&0.019&0.014&0.014&0.015&0.016&0.020&0.021&0.018&0.016&0.016&0.016\\
     0.40 & 0.011&0.011&0.013&0.019&0.040&0.037&0.021&0.016&0.015&0.015&0.014&0.014&0.014&0.016&0.019&0.018&0.016&0.016&0.016&0.016\\
     0.50 & 0.010&0.011&0.013&0.019&0.039&0.034&0.019&0.015&0.015&0.015&0.014&0.014&0.014&0.016&0.019&0.018&0.016&0.016&0.015&0.015\\
     1.00 & 0.010&0.011&0.013&0.020&0.039&0.032&0.019&0.015&0.015&0.015&0.014&0.014&0.014&0.016&0.019&0.017&0.016&0.016&0.015&0.015\\
     \midrule
     \multicolumn{21}{c}{rmse}\\
     \midrule
     0.00 & 0.009&0.019&0.033&0.039&0.026&0.015&0.012&0.011&0.010&0.010&0.013&0.014&0.017&0.018&0.016&0.014&0.014&0.014&0.014&0.014\\
     0.05 & 0.019&0.034&0.050&0.061&0.056&0.039&0.032&0.029&0.028&0.027&0.014&0.017&0.021&0.023&0.021&0.017&0.016&0.016&0.016&0.016\\
     0.10 & 0.033&0.050&0.066&0.081&0.085&0.070&0.058&0.053&0.051&0.050&0.017&0.021&0.026&0.029&0.027&0.023&0.021&0.020&0.020&0.020\\
     0.15 & 0.039&0.061&0.081&0.098&0.109&0.102&0.085&0.075&0.073&0.071&0.018&0.023&0.029&0.033&0.033&0.029&0.026&0.024&0.024&0.023\\
     0.20 & 0.026&0.056&0.085&0.109&0.125&0.122&0.099&0.079&0.075&0.071&0.016&0.021&0.027&0.033&0.034&0.032&0.027&0.024&0.023&0.022\\
     0.25 & 0.015&0.039&0.070&0.102&0.122&0.109&0.077&0.049&0.044&0.041&0.014&0.017&0.023&0.029&0.032&0.028&0.023&0.019&0.018&0.018\\
     0.30 & 0.012&0.032&0.058&0.085&0.099&0.077&0.045&0.026&0.023&0.021&0.014&0.016&0.021&0.026&0.027&0.023&0.018&0.016&0.016&0.016\\
     0.40 & 0.011&0.029&0.053&0.075&0.079&0.049&0.026&0.017&0.016&0.016&0.014&0.016&0.020&0.024&0.024&0.019&0.016&0.016&0.016&0.016\\
     0.50 & 0.010&0.028&0.051&0.073&0.075&0.044&0.023&0.016&0.015&0.015&0.014&0.016&0.020&0.024&0.023&0.018&0.016&0.016&0.016&0.015\\
     1.00 & 0.010&0.027&0.050&0.071&0.071&0.041&0.021&0.016&0.015&0.015&0.014&0.016&0.020&0.023&0.022&0.018&0.016&0.016&0.015&0.015\\
     \bottomrule
     \bottomrule
\end{tabular}
\end{center}
\end{scriptsize}
\end{table}

\begin{table}[H]
\caption{Simulation results for the experiment in Section \ref{ssec: MC}, $N = 100$, $T = 50$, $R = 2$}
\label{tab: N100 T50 R2 inference}
\begin{scriptsize}
\begin{center}
\begin{tabular}{ l| c c c c c c c c c c| c c c c c c c c c c}
     & \multicolumn{10}{c}{LS} & \multicolumn{10}{c}{Debiased}\\
     \diagbox[width=12mm,height=8mm]{$\kappa_1$}{$\kappa_2$}&{0.00}&{0.05}&{0.10}&{0.15}&{0.20}&{0.25}&{0.30}&{0.40}&{0.50}&{1.00}&{0.00}&{0.05}&{0.10}&{0.15}&{0.20}&{0.25}&{0.30}&{0.40}&{0.50}&{1.00}\\
     \midrule
     \multicolumn{21}{c}{size}\\
     \midrule
     0.00 & 7.6&52.3&88.3&80.0&42.8&17.9&10.6&6.7&6.3&5.9&0.0&0.0&0.0&0.0&0.0&0.0&0.0&0.0&0.0&0.0\\
     0.05 & 52.3&96.2&99.8&99.6&95.9&88.3&81.6&73.7&70.8&68.0&0.0&0.0&0.0&0.0&0.0&0.0&0.0&0.0&0.0&0.0\\
     0.10 & 88.3&99.8&100.0&100.0&100.0&99.7&99.2&98.6&98.4&98.0&0.0&0.0&0.0&0.0&0.0&0.0&0.0&0.0&0.0&0.0\\
     0.15 & 80.0&99.6&100.0&100.0&99.8&99.5&98.7&97.9&97.4&96.6&0.0&0.0&0.0&0.0&0.0&0.0&0.0&0.0&0.0&0.0\\
     0.20 & 42.8&95.9&100.0&99.8&98.3&93.1&87.3&80.0&76.9&73.8&0.0&0.0&0.0&0.0&0.0&0.0&0.0&0.0&0.0&0.0\\
     0.25 & 17.9&88.3&99.7&99.5&93.1&76.1&60.5&45.2&39.9&36.2&0.0&0.0&0.0&0.0&0.0&0.0&0.0&0.0&0.0&0.0\\
     0.30 & 10.6&81.6&99.2&98.7&87.3&60.5&38.9&23.3&18.9&16.2&0.0&0.0&0.0&0.0&0.0&0.0&0.0&0.0&0.0&0.0\\
     0.40 & 6.7&73.7&98.6&97.9&80.0&45.2&23.3&11.3&9.2&7.9&0.0&0.0&0.0&0.0&0.0&0.0&0.0&0.0&0.0&0.0\\
     0.50 & 6.3&70.8&98.4&97.4&76.9&39.9&18.9&9.2&7.4&6.4&0.0&0.0&0.0&0.0&0.0&0.0&0.0&0.0&0.0&0.0\\
     1.00 & 5.9&68.0&98.0&96.6&73.8&36.2&16.2&7.9&6.4&5.7&0.0&0.0&0.0&0.0&0.0&0.0&0.0&0.0&0.0&0.0\\
     \midrule
     \multicolumn{21}{c}{length}\\
     \midrule
     0.00 & 0.031&0.031&0.032&0.034&0.037&0.038&0.039&0.039&0.039&0.039&0.257&0.257&0.258&0.260&0.261&0.262&0.262&0.262&0.262&0.262\\
     0.05 & 0.031&0.031&0.032&0.033&0.036&0.038&0.038&0.039&0.039&0.039&0.257&0.257&0.258&0.259&0.261&0.262&0.262&0.262&0.262&0.262\\
     0.10 & 0.032&0.032&0.032&0.032&0.034&0.037&0.039&0.039&0.039&0.039&0.258&0.258&0.258&0.260&0.261&0.262&0.263&0.263&0.263&0.263\\
     0.15 & 0.034&0.033&0.032&0.032&0.033&0.036&0.039&0.040&0.040&0.040&0.260&0.259&0.260&0.261&0.263&0.264&0.265&0.265&0.265&0.265\\
     0.20 & 0.037&0.036&0.034&0.033&0.034&0.037&0.042&0.045&0.045&0.046&0.261&0.261&0.261&0.263&0.265&0.266&0.267&0.267&0.268&0.268\\
     0.25 & 0.038&0.038&0.037&0.036&0.037&0.043&0.049&0.052&0.052&0.052&0.262&0.262&0.262&0.264&0.266&0.268&0.268&0.269&0.269&0.269\\
     0.30 & 0.039&0.038&0.039&0.039&0.042&0.049&0.053&0.054&0.054&0.054&0.262&0.262&0.263&0.265&0.267&0.268&0.269&0.269&0.269&0.269\\
     0.40 & 0.039&0.039&0.039&0.040&0.045&0.052&0.054&0.055&0.055&0.055&0.262&0.262&0.263&0.265&0.267&0.269&0.269&0.269&0.269&0.270\\
     0.50 & 0.039&0.039&0.039&0.040&0.045&0.052&0.054&0.055&0.055&0.055&0.262&0.262&0.263&0.265&0.268&0.269&0.269&0.269&0.270&0.270\\
     1.00 & 0.039&0.039&0.039&0.040&0.046&0.052&0.054&0.055&0.055&0.055&0.262&0.262&0.263&0.265&0.268&0.269&0.269&0.270&0.270&0.270\\
     \midrule
     \multicolumn{21}{c}{length*}\\
     \midrule
     0.00 & 0.345&0.346&0.353&0.373&0.406&0.420&0.424&0.427&0.427&0.428&0.114&0.114&0.114&0.115&0.115&0.115&0.115&0.115&0.115&0.115\\
     0.05 & 0.346&0.346&0.349&0.360&0.391&0.416&0.424&0.427&0.428&0.429&0.114&0.114&0.114&0.115&0.115&0.115&0.115&0.115&0.115&0.115\\
     0.10 & 0.353&0.349&0.348&0.354&0.375&0.409&0.424&0.430&0.432&0.432&0.114&0.114&0.114&0.115&0.115&0.115&0.115&0.115&0.115&0.116\\
     0.15 & 0.373&0.360&0.354&0.354&0.365&0.399&0.428&0.441&0.443&0.445&0.115&0.115&0.115&0.115&0.115&0.116&0.116&0.116&0.116&0.116\\
     0.20 & 0.406&0.391&0.375&0.365&0.371&0.410&0.459&0.491&0.497&0.503&0.115&0.115&0.115&0.115&0.116&0.116&0.116&0.116&0.116&0.116\\
     0.25 & 0.420&0.416&0.409&0.399&0.410&0.475&0.536&0.568&0.573&0.576&0.115&0.115&0.115&0.116&0.116&0.116&0.116&0.116&0.116&0.116\\
     0.30 & 0.424&0.424&0.424&0.428&0.459&0.536&0.579&0.595&0.598&0.599&0.115&0.115&0.115&0.116&0.116&0.116&0.116&0.116&0.116&0.117\\
     0.40 & 0.427&0.427&0.430&0.441&0.491&0.568&0.595&0.604&0.606&0.607&0.115&0.115&0.115&0.116&0.116&0.116&0.116&0.117&0.117&0.117\\
     0.50 & 0.427&0.428&0.432&0.443&0.497&0.573&0.598&0.606&0.608&0.609&0.115&0.115&0.115&0.116&0.116&0.116&0.116&0.117&0.117&0.117\\
     1.00 & 0.428&0.429&0.432&0.445&0.503&0.576&0.599&0.607&0.609&0.610&0.115&0.115&0.116&0.116&0.116&0.116&0.117&0.117&0.117&0.117\\
     \bottomrule
     \bottomrule
\end{tabular}
\end{center}
\end{scriptsize}
\end{table}

\end{landscape}
\restoregeometry

\begin{table}[H]
\begin{center}
\caption{Simulation results for the empirically calibrated experiment, $N = 48$, $T = 33$, $R = 1$}
\label{tab: empriical MC}
\resizebox{\textwidth}{!}{
\begin{tabular}{c c c c c c c| c c c c c c}
&\multicolumn{6}{c}{LS}&\multicolumn{6}{c}{Debiased}\\
\cmidrule(lr){2-7}  \cmidrule(lr){8-13}
{$\kappa$}&{bias}&{std}&{rmse}&{size}&{length}&{length*}&{bias}&{std}&{rmse}&{size}&{length}&{length*}\\
\midrule
0.00& -0.0007& 0.0647& 0.0647& 6.9& 0.236& 1.062& -0.0010& 0.0797& 0.0797& 0.0& 1.374& 0.683  \\
0.20& 0.0920& 0.0656& 0.1130& 35.0& 0.237& 1.067& 0.0517& 0.0805& 0.0957& 0.0& 1.376& 0.685  \\
0.40& 0.1822& 0.0703& 0.1953& 81.9& 0.240& 1.083& 0.1013& 0.0842& 0.1317& 0.0& 1.381& 0.690  \\
0.60& 0.2620& 0.0890& 0.2767& 93.4& 0.248& 1.116& 0.1392& 0.0966& 0.1694& 0.0& 1.392& 0.698  \\
0.80& 0.2999& 0.1467& 0.3339& 84.6& 0.262& 1.180& 0.1428& 0.1267& 0.1910& 0.0& 1.406& 0.703  \\
1.00& 0.2356& 0.2168& 0.3202& 57.5& 0.286& 1.288& 0.0972& 0.1470& 0.1762& 0.0& 1.418& 0.700  \\
1.20& 0.1134& 0.1955& 0.2260& 26.8& 0.307& 1.381& 0.0420& 0.1258& 0.1326& 0.0& 1.424& 0.693  \\
1.40& 0.0427& 0.1168& 0.1243& 12.6& 0.315& 1.419& 0.0177& 0.1042& 0.1057& 0.0& 1.426& 0.690  \\
1.60& 0.0235& 0.0921& 0.0950& 8.8& 0.317& 1.428& 0.0104& 0.0992& 0.0997& 0.0& 1.427& 0.690  \\
1.80& 0.0157& 0.0879& 0.0893& 7.3& 0.318& 1.431& 0.0070& 0.0981& 0.0984& 0.0& 1.428& 0.689  \\
2.00& 0.0112& 0.0867& 0.0875& 6.8& 0.318& 1.432& 0.0050& 0.0977& 0.0978& 0.0& 1.428& 0.689  \\
\bottomrule
\bottomrule
\end{tabular}
}
\end{center}
\end{table}

\begin{table}[H]
\caption{LS and debiased estimates and 95\% CIs for $\beta$}
\label{tab: empirical res}
\begin{small}
\begin{center}
\begin{tabular}{l c c c c c c c c c}
& $R=1$& $R=2$& $R=3$& $R=4$& $R=5$& $R=6$ \\ 
\midrule
\multicolumn{7}{c}{LS}\\ 
\midrule
 & $0.047$& $0.160$& $0.101$& $0.043$& $0.028$& $0.091$ \\ 
& $[-0.06,0.15]$& $[0.04,0.28]$& $[-0.02,0.22]$& $[-0.07,0.16]$& $[-0.10,0.16]$& $[-0.04,0.22]$ \\ 
\midrule
\multicolumn{7}{c}{Debiased}\\ 
\midrule
  & $0.089$& $0.162$& $0.130$& $0.084$& $0.071$& $0.106$ \\ 
$R_w = 0$& $[-0.01,0.19]$& $[0.07,0.26]$& $[0.05,0.21]$& $[0.01,0.16]$& $[-0.01,0.15]$& $[0.04,0.18]$ \\ 
$R_w = 1$& $[-0.77,0.95]$& $[-0.56,0.88]$& $[-0.45,0.71]$& $[-0.40,0.57]$& $[-0.34,0.48]$& $[-0.24,0.45]$ \\ 
$R_w = R$& $[-0.77,0.95]$& $[-1.18,1.50]$& $[-1.43,1.69]$& $[-1.62,1.79]$& $[-1.67,1.81]$& $[-1.61,1.82]$ \\ 
\bottomrule
\bottomrule
\end{tabular}
\end{center}
\end{small}
\end{table}

\setlength{\bibsep}{2pt}

\newpage
\appendix
\setcounter{page}{1}
\renewcommand{\thepage}{A.\arabic{page}}

\begin{LARGE}
    \begin{center}
        Supplementary Appendix to ``Robust Estimation and Inference in Panels with Interactive Fixed Effects''
    \end{center}
\end{LARGE}

\begin{large}
    \begin{center}
        Timothy B. Armstrong, Martin Weidner and Andrei Zeleneev
    \end{center}
\end{large}

\begin{large}
    \begin{center}
         May 2025
    \end{center}
\end{large}

\section{Proofs}
\label{sec: proofs}

This section contains proofs of the results in the main text.
Section \ref{general_rate_thm_sec} states and proves a general result on rates
of convergence using high level conditions on the covariates $X$ and $Z$ and the bound $\hat C$ on $\Vert \Gamma - \hat \Gamma \Vert_*$. %
The proofs of Theorems \ref{new_high_level_validity_thm}-\ref{new_validity_theorem} are provided in Sections \ref{validity_proof_sec}-\ref{ssec: proof of new validity theorem}, respectively.

\bigskip

\noindent{\bf Notation} In this section, mirroring the notation used in Assumption \ref{ass: factor model}\eqref{item: NC},  we write $W$ for $X,Z_1,\ldots, Z_K$ and $W\cdot \gamma=X\beta + \sum_{k=1}^KZ_k\delta_k$ for a generic $\gamma=(\beta,\delta')'$. In particular, we also use $\hat \gamma = (\hat \beta, \hat \delta')'$, $\hat \gamma_{\rm LS} = (\hat \beta_{\rm LS}, \hat \delta_{\rm LS}')'$, and $\hat \gamma_{\rm pre} = (\hat \beta_{\rm pre}, \hat \delta_{\rm pre}')'$.

\subsection{General result for rates of convergence}\label{general_rate_thm_sec}

We first prove a result giving rates of convergence for estimators
$\hat\beta=\langle A^*_{b,c}, \widetilde Y \rangle_F$ given in Definition
\ref{DefAugLinEst} with weights $A^*_{b,c}$ given in
(\ref{weight_optimization_lindeberg})
under a high level condition on the bound $\hat C$ on the initial estimation error in (\ref{NNcondition}). Lemma \ref{preliminary_LS_rates_lemma}\eqref{item: Gamma LS rate} and Theorem \ref{new_nuclear_bound_error_thm}\eqref{item: 3 R bound} verify this condition for $\hat \Gamma = \hat \Gamma_{\rm LS}$ and $\hat \Gamma = \hat \Gamma_{\rm pre}$ (and provide appropriate bounds $\hat C$), respectively.

We make the following assumption on the class of distributions of $X$,
$Z_1,\ldots,Z_k$ and $U$ and the sequence $c=c_{N,T}$ used in the Lindeberg constraint. After we prove the main result of this section, we will also verify this assumption under a set of primitive conditions.

\begin{assumption}\label{rate_assumption}
  \mbox{}
  There exists a sequence of $N\times T$ random matrices $\Xi$ such that
  \begin{align*}
    &\|\Xi\|_F = \mathcal{O}_{\Theta,\mathcal{P}}(\sqrt{NT}),
    \quad
    \left| \langle \Xi,X \rangle_F \right|^{-1} = \mathcal{O}_{\Theta,\mathcal{P}}((NT)^{-1}),  \\
    &s_1(\Xi) = \mathcal{O}_{\Theta,\mathcal{P}}(\max\{\sqrt{N},\sqrt{T}\}),
  \end{align*}
  and, with probability approaching one,
  \begin{align*}
    \operatorname{Lind}(\Xi) \le c_{N,T}
    \quad\text{and}\quad \langle \Xi, Z_k \rangle_F=0 \text{ for }k=1,\ldots,K.
  \end{align*}
\end{assumption}

\begin{theorem}\label{high_level_rate_thm}
  Let $\hat\beta=\langle A^*_{\biasweight,c}, \widetilde Y \rangle_F$ for $\widetilde Y = \Gamma - \hat \Gamma$ and some sequences $c=c_{N,T}$ and
  $\biasweight=\biasweight_{N,T}$.
  Suppose Assumption \ref{rate_assumption} and Assumption~\ref{ass: factor model}(\ref{A_U_inner_product_bound}) hold and that Assumption
  \ref{ass: augmented high level}\eqref{item: augmented high level C hat bound}
  holds
  with $\widetilde \Gamma = \Gamma - \hat \Gamma$ and $\hat C=\mathcal{O}_{\Theta,\mathcal{P}}(\overline C_{N,T})$ for some sequence
  $\overline C_{N,T}$.  Then
  \begin{align*}
    |\hat\beta-\beta|
    = \mathcal{O}_{\Theta,\mathcal{P}}\left(
    \max\left\{ \overline C_{N,T}/\biasweight_{N,T}, 1  \right\}
    \cdot \max\left\{ (NT)^{-1/2}, \biasweight_{N,T}\cdot \max\{\sqrt{N},\sqrt{T}\}/(NT) \right\}
    \right).
  \end{align*}
\end{theorem}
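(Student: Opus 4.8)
The plan is to combine the bias--variance decomposition \eqref{bias_variance_decomposition}--\eqref{bias_def_eq} with an optimality argument that controls the Frobenius and spectral norms of the optimal weights $A^*_{\biasweight,c}$. First I would observe that the constraints $\langle A^*_{\biasweight,c},X\rangle_F=1$ and $\langle A^*_{\biasweight,c},Z_k\rangle_F=0$ imposed in \eqref{weight_optimization_lindeberg} annihilate the first two terms of $\bias_{\beta,\delta,\widetilde\Gamma}$ in \eqref{bias_def_eq}, so that
\[
  \hat\beta-\beta=\langle A^*_{\biasweight,c},\widetilde\Gamma\rangle_F+\langle A^*_{\biasweight,c},U\rangle_F .
\]
By duality of the nuclear and spectral norms, $|\langle A^*_{\biasweight,c},\widetilde\Gamma\rangle_F|\le s_1(A^*_{\biasweight,c})\,\|\widetilde\Gamma\|_*$, and on the event $\{\|\widetilde\Gamma\|_*\le\hat C\}$, which has probability approaching one uniformly by Assumption~\ref{high_level_asymptotic_validity_assumption}(\ref{valid_C_bound_assumption}), this is at most $s_1(A^*_{\biasweight,c})\,\hat C$ with $\hat C=\Op(\overline C_{N,T})$. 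By Assumption~\ref{ass: factor model}(\ref{A_U_inner_product_bound}) the stochastic term is $\Op(\|A^*_{\biasweight,c}\|_F)$. Thus the whole problem reduces to bounding $s_1(A^*_{\biasweight,c})$ and $\|A^*_{\biasweight,c}\|_F$.

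The key step is to exhibit a feasible competitor with small objective value. I would take the matrix $\Xi$ furnished by Assumption~\ref{rate_assumption} and set $A_0:=\Xi/\langle\Xi,X\rangle_F$. Since the Lindeberg functional is scale invariant and $\langle\Xi,Z_k\rangle=0$ with probability approaching one, $A_0$ satisfies every constraint in \eqref{weight_optimization_lindeberg} and is therefore feasible with probability approaching one. Inserting the three rate bounds of Assumption~\ref{rate_assumption} yields
\[
  \|A_0\|_F^2+\biasweight^2 s_1(A_0)^2
  =\frac{\|\Xi\|_F^2+\biasweight^2 s_1(\Xi)^2}{|\langle\Xi,X\rangle_F|^2}
  =\Op\!\left(\max\left\{\tfrac{1}{NT},\tfrac{\biasweight^2\max\{N,T\}}{(NT)^2}\right\}\right)=:\Op(M).
\]
Because $A^*_{\biasweight,c}$ minimizes the same objective over a set containing $A_0$, the objective at $A^*_{\biasweight,c}$ is also $\Op(M)$; reading off the two nonnegative summands separately gives $\|A^*_{\biasweight,c}\|_F=\Op(\sqrt{M})$ and $s_1(A^*_{\biasweight,c})=\Op(\sqrt{M}/\biasweight)$.

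Combining the pieces, the bias contribution is $\Op\big((\overline C_{N,T}/\biasweight)\sqrt{M}\big)$ and the stochastic contribution is $\Op(\sqrt{M})$, so $|\hat\beta-\beta|=\Op\big(\max\{\overline C_{N,T}/\biasweight,\,1\}\,\sqrt{M}\big)$. It then remains only to simplify
\[
  \sqrt{M}=\max\left\{(NT)^{-1/2},\ \biasweight\cdot\max\{\sqrt N,\sqrt T\}/(NT)\right\},
\]
using $\sqrt{\max\{N,T\}}=\max\{\sqrt N,\sqrt T\}$, which reproduces the stated rate exactly.

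I expect the main obstacle to be the construction and feasibility verification of the competitor $A_0$: one must check that rescaling $\Xi$ by $\langle\Xi,X\rangle_F$ normalizes the $X$-constraint while preserving the $Z_k$-orthogonality and the Lindeberg constraint, and that the several $\Op$ bounds on $\Xi$ in Assumption~\ref{rate_assumption} multiply and divide correctly under the uniform-in-$(\theta,P)$ notation. The latter rests on the elementary fact that a product (or ratio, via an inverse bound) of uniformly stochastically bounded sequences is again uniformly stochastically bounded, and that events of probability tending to zero uniformly do not affect an $\Op$ conclusion. Once this bookkeeping is in place, the remaining steps---the bias--variance split, the nuclear/spectral duality bound, and the optimality inequality---are routine.
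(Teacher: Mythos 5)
Your proof is correct and is essentially the paper's own argument: both bound the bias term via nuclear--spectral duality on the high-probability event $\{\|\widetilde\Gamma\|_*\le\hat C\}$ with $\hat C=\mathcal O_{\Theta,\mathcal P}(\overline C_{N,T})$, bound the stochastic term by $\mathcal O_{\Theta,\mathcal P}(\|A^*_{\biasweight,c}\|_F)$ using Assumption \ref{ass: factor model}(\ref{A_U_inner_product_bound}), and control both norms of $A^*_{\biasweight,c}$ by comparing the optimized objective with the same feasible oracle competitor $\Xi/\langle\Xi,X\rangle_F$ built from Assumption \ref{rate_assumption} (including the scale-invariance of $\operatorname{Lind}(\cdot)$). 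The only difference is presentational: the paper bounds the squared error as $4\max\bigl\{|\langle A,U\rangle_F|^2/\|A\|_F^2,\ \tilde C^2/\biasweight^2\bigr\}\cdot\bigl[\|A\|_F^2+\biasweight^2 s_1(A)^2\bigr]$ in one step, whereas you read the Frobenius and spectral bounds off the two nonnegative summands of the objective separately, which is algebraically equivalent.
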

\begin{proof}
We have
\begin{align*}
  |\hat\beta - \beta|
  \le \left| \langle A^*_{\biasweight,c}, U \rangle_F \right|
  + \maxbias_{\tilde C}(A^*_{\biasweight,c})
  = \left| \langle A^*_{\biasweight,c}, U \rangle_F \right|
  + \tilde C s_1(A^*_{\biasweight,c})
\end{align*}
where $\tilde C=\|\Gamma-\hat\Gamma\|_*$. %
Thus,
\begin{align}\label{beta_error_square_bound_eq}
  &|\hat\beta - \beta|^2
  \le 2 \left| \langle A^*_{\biasweight,c}, U \rangle_F \right|^2
  + 2 \tilde C^2 s_1(A^*_{\biasweight,c})^2  \nonumber  \\
  &\le 2 \max\left\{ \frac{\left| \langle A^*_{\biasweight,c}, U \rangle_F \right|^2}{\|A^*_{\biasweight,c}\|_F^2},
  \frac{\tilde C^2}{\biasweight^2} \right\}
  \cdot \left[ \|A^*_{\biasweight,c}\|_F^2 + \biasweight^2 s_1(A^*_{\biasweight,c})^2 \right].
\end{align}
Consider the oracle weights $\tilde A = \Xi/\langle \Xi, X \rangle_F$.
With probability approaching one uniformly over $\theta, P$,
the weights
$\tilde A$
are feasible for (\ref{weight_optimization_lindeberg}), so that
\begin{align}
  &\|A^*_{\biasweight,c}\|_F^2 + \biasweight^2 s_1(A^*_{\biasweight,c})^2
    \le \|\tilde A\|_F^2 + \biasweight^2 s_1(\tilde A)^2
    = \frac{\|\Xi\|_F^2 + \biasweight^2 s_1(\Xi)^2}{\left| \langle \Xi, X \rangle_F \right|^2} \notag \\
  &= \mathcal{O}_{\Theta,\mathcal{P}}((NT)^{-1})
    + \biasweight^2 \cdot \mathcal{O}_{\Theta,\mathcal{P}}(\max\{N,T\}/(NT)^2). \label{eq: A mse bound}
\end{align}
Plugging this into (\ref{beta_error_square_bound_eq}) gives the result.
\end{proof}

Next, we verify that Assumption \ref{rate_assumption} holds under low level conditions provided in Assumption \ref{ass: X decomposition}.

\begin{lemma}\label{lem: Xi verification}
  Suppose that Assumption \ref{ass: X
    decomposition} holds.  Then Assumption
  \ref{rate_assumption} holds
  with
  $\Xi_{it}$ given by the residual in the regression of $V_{it}$ on $Z_{it}$, i.e., ${\rm vec} (\Xi) = M_{\bf Z} {\rm vec} (V)$ where $M_{\bf Z} = I_{NT} - \bf Z (\bf Z' \bf Z)^{-1} \bf Z'$.
\end{lemma}
\begin{proof}
  First, notice that $\Xi = V - Z \cdot \hat \varphi = V - \sum_{k=1}^K Z_k \hat
  \varphi_k$ where $\hat \varphi = {{\bf( Z' Z)}^{-1} {\bf Z}'} {\rm vec}(V)$.
  Also, it follows from
  Assumption \ref{ass: X decomposition}(\ref{Z_V_inner_product_assump}) and
  (\ref{Z_prime_Z_assump})
  that $\norm{\hat \varphi} = \Op \left(\frac{1}{\sqrt{NT}}\right)$.

  Next we verify all the conditions required by Assumption \ref{rate_assumption}.

  \medskip

  {\noindent \it Verification of $\langle \Xi, Z_k \rangle_F = 0$ for $k = 1, \ldots, K$.} By construction.

  \medskip

  {\noindent \it Verification of $\norm{\Xi}_F = \Op (\sqrt{NT})$.} $\norm{\Xi}_F = \norm{{\rm vec}(\Xi)} \leq \norm{{\rm vec}(V)} = \norm{V}_F = \Op (\sqrt{NT})$.

  \medskip

  {\noindent \it Verification of $s_1(\Xi) = \Op(\maxSNT)$.} Notice that
  \begin{align*}
    s_1(\Xi) = s_1 \left(V - \sum_{k=1}^K Z_k \hat \varphi_k\right) \leq s_1(V) + \sum_{k=1}^K \abs{\hat \varphi_k} s_1 (Z_k) = \Op (\maxSNT),
  \end{align*}
  using the fact that
  $\abs{\hat \varphi_k} s_1(Z_k) = \Op (1)$
  since
  $\hat \varphi_k=\Op(1/\sqrt{NT})$ and $s_1(Z_k)\le \|Z_k\|_F=\Op(\sqrt{NT})$.

  \medskip

  {\noindent \it Verification of $\abs{\langle \Xi, X \rangle_F}^{-1} = \Op
    ((NT)^{-1})$.}
  Using the fact that $\langle \Xi, Z_k \rangle_F=0$ for each $k$, we have
  \begin{align*}
    {\langle \Xi, X \rangle}_F = {\langle \Xi, H \rangle}_F + {\langle \Xi, V \rangle}_F,
    = \langle V, H \rangle_F
    - \sum_{k=1}^K\hat\varphi_k \langle Z_k,H \rangle_F
    + \|V\|_F^2
    - \sum_{k=1}^K\hat\varphi_k \langle Z_k,V \rangle_F.
  \end{align*}
  The first term is $\Op(\sqrt{NT})=\op(NT)$ by Assumption \ref{ass: X
    decomposition}(\ref{H_assump}).
  The second term is $\Op(\sqrt{NT})=\op(NT)$ since $\hat \varphi_k=\Op(1/\sqrt{NT})$ and $\langle
  Z_k, H \rangle_F \le \|H\|_F\cdot \|Z_k\|_F=\Op(NT)$
  under these assumptions.
  Similarly, the fourth term is $\Op(1)=\op(NT)$.
  Thus, ${\langle \Xi, X \rangle}_F=\|V\|_F^2 + \op(NT)$ and the result follows
  since $\|V\|_F^2 \asy NT$ by Assumption \ref{ass: X decomposition}(\ref{V_assump}).

  \medskip

  {\noindent \it Verification of ${\rm Lind} (\Xi) \leq c_{N,T}$ with probability approaching one.}

  \begin{align*}
    {\rm Lind} (\Xi) = \frac{\max_{i,t} \Xi_{it}^2}{\norm{\Xi}_F^2},
  \end{align*}
  where
  \begin{align*}
    \norm{\Xi}_F^2 = \norm{V}_F^2 - 2 \sum_{k=1}^K \hat \varphi_k \langle V, Z_k \rangle_F + \norm{\sum_{k=1}^K Z_k \hat \varphi_k}_F^2,
  \end{align*}
  where $\sum_{k=1}^K \hat \varphi_k \langle V, Z_k \rangle_F = \Op (1)$ and $\norm{\sum_{k=1}^K Z_k \hat \varphi_k}_F^2 = \Op (1)$, so $\norm{\Xi}_F^2 \asy NT$. Next,
  \begin{align*}
    \max_{i,t} \Xi_{it}^2 &= \max_{i,t} \left(V_{it} - \sum_{k=1}^K \hat \varphi_k Z_{k,it} \right)^2 \\ &\leq (K+1)^2 \left(\max_{i,t} V_{it}^2 + \sum_{k=1}^K \hat \varphi_k^2 \max_{i,t} Z_{k,it}^2\right) = \op (NT c_{N,T}).
  \end{align*}
  Hence, ${\rm Lind} (\Xi) = \op (c_{N,T})$, which completes the proof.
\end{proof}

\subsection{Proof of Theorem \ref{new_high_level_validity_thm}}\label{validity_proof_sec}

  The probability that the upper endpoint of the CI is less than $\beta$ is
  \begin{align*}
    &\pr_{\theta,P}\left( \hat\beta + \maxbias_{\hat C}(\hat\beta) + z_{1-\alpha/2}\widehat{\se} < \beta \right)  \\
    &= \pr_{\theta,P}\left( \langle A, X\beta + Z\cdot \delta + \widetilde \Gamma \rangle_F - \beta + \maxbias_{\hat C}(\hat\beta) + \langle A, \widetilde U \rangle_F < -z_{1-\alpha/2}\widehat{\se} \right)  \\
    &\le \pr_{\theta,P}\left( \langle A, X\beta + Z\cdot \delta + \widetilde \Gamma \rangle_F - \beta < - \maxbias_{\hat C}(\hat\beta)  \right) + \pr_{\theta,P}\left( \langle A, \widetilde U \rangle_F < -z_{1-\alpha/2}\widehat{\se} \right).
  \end{align*}
  The first term is, by definition, bounded by $\pr_{\theta,P}(\|\widetilde \Gamma \|_* >\hat C)$, which converges to zero uniformly
  over $\theta\in\Theta, P\in\mathcal{P}$ by Assumption \ref{ass: augmented high level}\eqref{item: augmented high level C hat bound}.
  The second term converges to $\alpha/2$ uniformly over $\theta\in\Theta,
  P\in\mathcal{P}$ by Assumption
  \ref{ass: augmented high level}\eqref{item: augmented high level CLT}.
  Applying a symmetric argument to the probability that the lower endpoint of
  the CI is greater than $\beta$ gives the result.

\subsection{Proof of Theorem \ref{new_nuclear_bound_error_thm}}

To prove Theorem \ref{new_nuclear_bound_error_thm}, we first state and prove a series of auxiliary lemmas in Section \ref{sssec: nuclear norm bound aux lemmas} below. In Section \ref{sssec: proof of nunclear bound thm}, we then prove the final results.

\subsubsection{\label{sssec: nuclear norm bound aux lemmas}Auxiliary Lemmas}
\begin{lemma}\label{generic_Gamma_lemma}
  Consider
  \begin{align*}
    \widehat \Gamma  
          = \argmin_{\left\{ G \, : \, {\rm rank}(G) \leq  R \right\}} \norm{Y - W \cdot \hat \gamma -G}_F^2
  \end{align*}  
  for some $\hat \gamma$. Suppose that $\Gamma = \lambda f'$ for some $N \times R$ matrix $\lambda$ and $T \times R$ matrix $f$. Then, we have
  \begin{enumerate}[(i)]
    \item $\|\widehat \Gamma - \Gamma\|_* \leq 3 R s_1 (\hat U)$;

    \item $\|\widehat \Gamma - \Gamma - P_{\lambda} \hat U\|_* \leq 2 R s_1 (\hat U)$;
  \end{enumerate}
  where $\hat U \coloneqq U - W \cdot (\hat \gamma - \gamma)$.
\end{lemma}

\begin{proof}%
  Let 
  \begin{align*}
      Q(G) \coloneqq \left\| Y - W \cdot \hat \gamma  - G \right\|_F^2,
  \end{align*}
  $\hat Y \coloneqq Y - W \cdot \hat \gamma = \Gamma + \hat U$, and $\Gamma^\dagger \coloneqq \Gamma + P_{\lambda} \hat U$, where $M_{\lambda} = \mathbb I_N - P_{\lambda}$ and $\mathbb I_N$ stands for a $N \times N$ identity matrix. Notice that we have $M_{\lambda} \Gamma^\dagger = 0$, and therefore ${\rm rank}\left( \Gamma^{\dagger} \right) \leq R$.

  Next, note that
  \begin{align*}
    Q (G) = \norm{\hat Y - G}_F^2 = \norm{\Gamma + \hat U - G}_F^2 = \norm{P_{\lambda} (\Gamma^\dagger - G) + M_{\lambda} (\hat U - G)}_F^2,
  \end{align*}
  so we also have
  \begin{align*}
        Q( \widehat \Gamma)      &=  
                 \left\| P_\lambda \left(\widehat \Gamma -   \Gamma^\dagger    \right)   \right\|_F ^2
                   + \left\| M_\lambda \left(  \widehat \Gamma  - \hat U \right)   \right\|_F^2   
          \\
          &=   \left\|         \widehat \Gamma -  \Gamma^\dagger \right\|_F^2   
             -  \left\| M_\lambda  \widehat \Gamma     \right\|_F^2   
          + \left\| M_\lambda \left(  \widehat \Gamma  - \hat U \right)    \right\|_F^2   
          \\
          &=   \left\|         \widehat \Gamma -  \Gamma^\dagger \right\|_F^2      
          + \left\| M_\lambda  \hat U    \right\|_F^2   
          - 2 {\rm Tr}\left(\hat U' M_{\lambda} \widehat \Gamma    \right)
    \\
         &\geq   \left\|         \widehat \Gamma -  \Gamma^\dagger \right\|_F^2   
          + \left\| M_\lambda   \hat U    \right\|_F^2       -    2  \, s_1(\hat U) \,  \left\| M_\lambda  \widehat \Gamma       \right\|_*.
  \end{align*}
  Combining this with
  \begin{align*}
     Q( \widehat \Gamma)  \leq   Q(  \Gamma^\dagger) &=  \left\| M_\lambda \hat U     \right\|_F^2   ,
  \end{align*}
  we obtain that
  \begin{align*}
       \left\|         \widehat \Gamma -  \Gamma^\dagger \right\|_*^2 &\leq 2R \left\|         \widehat \Gamma -  \Gamma^\dagger \right\|_F^2 ,
    \\
      &\leq 4 R  \, s_1(\hat U) \,  \left\| M_\lambda  \widehat \Gamma       \right\|_*
    \\
        &\leq 4 R^2  \, s_1(\hat U) \, s_1( M_\lambda  \widehat \Gamma     ) ,
  \end{align*}
  and therefore
  \begin{align}
       \left\|         \widehat \Gamma -  \Gamma^\dagger \right\|_* \leq 2 \, R \, \sqrt{s_1(\hat U) \, s_1( M_\lambda  \widehat \Gamma     )}. \label{eq: 2R bound preliminar}
  \end{align}
  Next, since $\widehat \Gamma $ is given by the $R$ leading principal components of $\hat Y$, we know that
  $\hat Y  \hat Y'  \geq (\widehat \Gamma)(\widehat \Gamma)'$, that is the difference $\hat Y  \hat Y' -  \widehat \Gamma \widehat \Gamma'$
  is positive-definitive, which implies that
  \begin{align*}
       [s_1( M_\lambda  \widehat \Gamma   )]^2 &=  \mu_1\left(M_\lambda  \widehat \Gamma  \widehat \Gamma' M_\lambda \right)
      = \max_{\left\{ v \in \mathbb{R}^N \, : \, \|v\|=1 \right\}} v' M_\lambda  \widehat \Gamma  \widehat \Gamma' M_\lambda v
       = \hat v' M_\lambda  \widehat \Gamma  \widehat \Gamma' M_\lambda \hat v
    \\
       &\leq \hat v' M_\lambda  \hat Y \hat Y' M_\lambda \hat v
       \leq  \max_{\left\{ v \in \mathbb{R}^N \, : \, \|v\|=1 \right\}} v' M_\lambda  \hat Y \hat Y' M_\lambda v
       = \mu_1(M_\lambda  \hat Y \hat Y' M_\lambda) =   [s_1( M_\lambda  \hat Y )]^2 
     \\
       &=   [s_1( M_\lambda  \hat U )]^2 \leq  [s_1(   \hat U )]^2.
  \end{align*}
  We have thus shown that $s_1( M_\lambda  \widehat \Gamma   ) \leq s_1(   U )$, and, combining this with \eqref{eq: 2R bound preliminar}, we obtain
  \begin{align*}
       \left\|         \widehat \Gamma -  \Gamma^\dagger \right\|_* \leq 2 \, R \,  s_1(\hat U),
  \end{align*}
  which proofs the second statement of the lemma. To prove the first, notice that
  \begin{align*}
    \norm{\widehat{\Gamma} - \Gamma}_* \leq \left\|         \widehat \Gamma -  \Gamma^\dagger \right\|_* + \norm{P_\lambda \hat U}_* \leq 3 R s_1 (\hat U),
  \end{align*}
  where the last inequality uses $\norm{P_\lambda \hat U}_* \leq R s_1 (\hat U)$.
\end{proof}

\begin{lemma}\label{preliminary_LS_rates_lemma}
  Under Assumptions \ref{ass: factor model}\eqref{item: NC}-\eqref{item: EX},
  \begin{enumerate}[(i)]
    \item \label{item: gamma LS rate} $\hat \gamma_{\rm LS} - \gamma = \Op (1/\minSNT)$;

    \item \label{item: Gamma LS rate} $\norm{\hat \Gamma_{\rm LS} - \Gamma}_* \leq \hat C$ for some $\hat C = \Op (\maxSNT)$.
  \end{enumerate}
\end{lemma}

\begin{proof}
  The first statement follows from the proof of Theorem 4.1 in \citet{MoonWeidner2015}.
  Assumptions \ref{ass: factor model} \eqref{item: NC}-\eqref{item: EX} are
  uniform analogues of Assumptions NC, SN, and EX in \citet{MoonWeidner2015}. The
  derived rate of convergence is immediately uniform over $\theta\in\Theta, P \in \mathcal P$ because the proof of Theorem 4.1 in \citet{MoonWeidner2015} explicitly bounds $\norm{\hat \gamma_{\rm LS} - \gamma_0}$.

  Next, we combine this result with Lemma \ref{generic_Gamma_lemma}\eqref{item: 3 R bound} to obtain
  \begin{align}
    \norm{\hat \Gamma_{\rm LS} - \Gamma}_* &\leq 3 R s_1 (U - W \cdot (\hat \gamma_{\rm LS} - \gamma) ) \notag \\
    &\leq 3 R \left(s_1 (U) + s_1 (W \cdot (\hat \gamma_{\rm LS} - \gamma) ) \right) \notag \\
    &= \Op (\maxSNT), \label{eq: delta Gamma LS explicit bound}
  \end{align}
  where the last equality follows from Assumption \ref{ass: factor model}\eqref{item: SN} and $\hat \gamma_{\rm LS} - \gamma = \Op (1/\minSNT)$.
\end{proof}

\begin{lemma}\label{gamma_pre_rate_lemma}
  Suppose that Assumption \ref{ass: factor model} holds, and that Assumption
  \ref{rate_assumption} holds as stated and with $Z_k$ and $X$ interchanged for
  each $k=1,\ldots,K$.
  Then
  \begin{align*}
    \hat\gamma_{\rm pre} - \gamma
    = \mathcal{O}_{\Theta,\mathcal{P}}\left(
    1/\min\{N,T\}
    \right).
  \end{align*}
\end{lemma}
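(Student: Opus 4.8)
The plan is to recognize $\hat\beta_{\rm pre}$ as an instance of the augmented linear estimator analyzed in Theorem \ref{high_level_rate_thm}, applied with initial estimate $\hat\Gamma=\hat\Gamma_{\rm LS}$, tuning parameter $\biasweight=\bTW=4R(\sqrt N+\sqrt T)$, and no active Lindeberg constraint (i.e. $c=\infty$, recalling that $A^*_{\bTW,\infty}=A^*_{\bTW}$). Indeed, $\hat\beta_{\rm pre}=\langle A^*_{\bTW},\widetilde Y_{\rm pre}\rangle$ with $\widetilde Y_{\rm pre}=Y-\hat\Gamma_{\rm LS}$ is exactly of the form $\langle A^*_{\biasweight,c},\widetilde Y\rangle$ treated there, so the whole task reduces to checking the hypotheses of Theorem \ref{high_level_rate_thm} and then substituting the relevant rates.

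First I would verify those hypotheses. Assumption \ref{rate_assumption} holds by the hypothesis of the lemma, and Assumption \ref{ass: factor model}(\ref{A_U_inner_product_bound}) is part of Assumption \ref{ass: factor model}. The role played by the bound $\hat C$ on $\|\hat\Gamma-\Gamma\|_*$ in Theorem \ref{high_level_rate_thm} is here supplied directly by Lemma \ref{Gamma_LS_rate_lemma}, which gives $\|\hat\Gamma_{\rm LS}-\Gamma\|_*=\mathcal{O}_{\Theta,\mathcal{P}}(\maxSNT)$. Since the proof of Theorem \ref{high_level_rate_thm} only ever uses the true estimation error $\tilde C=\|\hat\Gamma-\Gamma\|_*$ together with the bound $\tilde C=\mathcal{O}_{\Theta,\mathcal{P}}(\overline C_{N,T})$, I may simply take $\overline C_{N,T}=\maxSNT$.

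It then remains to substitute $\biasweight=\bTW\asymp\maxSNT$ and $\overline C_{N,T}=\maxSNT$ into the conclusion of Theorem \ref{high_level_rate_thm}. The prefactor becomes $\max\{\overline C_{N,T}/\biasweight,1\}\asymp 1$, while $\biasweight\cdot\max\{\sqrt N,\sqrt T\}/(NT)\asymp\max\{N,T\}/(NT)=1/\min\{N,T\}$, which dominates $(NT)^{-1/2}$ because $\min\{N,T\}\le\sqrt{NT}$. Hence the right-hand side collapses to $\mathcal{O}_{\Theta,\mathcal{P}}(1/\min\{N,T\})$, giving $|\hat\beta_{\rm pre}-\beta|=\mathcal{O}_{\Theta,\mathcal{P}}(1/\min\{N,T\})$.

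Finally, each coordinate $\hat\delta_{{\rm pre},j}$ is computed exactly as $\hat\beta_{\rm pre}$ but with $X$ and $Z_j$ interchanged, so the identical argument applies verbatim to the interchanged problem; here I invoke Assumption \ref{rate_assumption} under the $X\leftrightarrow Z_j$ interchange, which the lemma assumes, to supply the oracle weights, and Lemma \ref{Gamma_LS_rate_lemma} again for the same nuclear-norm error bound. This yields $|\hat\delta_{{\rm pre},j}-\delta_j|=\mathcal{O}_{\Theta,\mathcal{P}}(1/\min\{N,T\})$ for each of the finitely many $j\in\{1,\ldots,K\}$, and collecting the $K+1$ coordinates gives the claim. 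The argument is largely a matter of assembling the earlier lemmas; the one point that needs care is observing that the LS initial-error rate $\maxSNT$ and the calibration $\biasweight\asymp\maxSNT$ cancel in the prefactor, so that the squared-error trade-off in Theorem \ref{high_level_rate_thm} is optimized at the fast $1/\min\{N,T\}$ rate rather than something slower.
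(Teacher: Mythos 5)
Your proposal is correct and follows essentially the same route as the paper, whose proof simply invokes Lemma \ref{Gamma_LS_rate_lemma} and Theorem \ref{high_level_rate_thm} together with the observation that $\bTW \asymp \maxSNT$; you merely spell out the rate substitution $\max\{\overline C_{N,T}/\bTW,1\}\asymp 1$ and $\bTW\cdot\maxSNT/(NT)\asymp 1/\min\{N,T\}$ that the paper leaves implicit. Your explicit treatment of the coordinates $\hat\delta_{{\rm pre},j}$ via the $X\leftrightarrow Z_j$ interchange is likewise exactly what the lemma's hypothesis (Assumption \ref{rate_assumption} with $Z_k$ and $X$ interchanged) is there for.
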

\begin{proof}
  The result is immediate from Lemma \ref{preliminary_LS_rates_lemma}\eqref{item: Gamma LS rate} and Theorem
  \ref{high_level_rate_thm}, using the fact that $\bTW$ is bounded from above
  and below by a constant times $\maxSNT$.
\end{proof}

\begin{lemma}\label{s1_hat_U_pre_lemma}
  Suppose that Assumption \ref{ass: factor model} holds, and that Assumption
  \ref{rate_assumption} holds as stated and with $Z_k$ and $X$ interchanged for
  each $k=1,\ldots,K$. Then, $s_1 (\hat U_{\rm pre}) \asymp_{\Theta, \mathcal P} \maxSNT$ and
  \begin{align*}
    s_1(U)\le s_1(\hat U_{\rm pre})(1+o_{\Theta,\mathcal{P}}(1)).
  \end{align*}
\end{lemma}
\begin{proof}
  First note that, letting
  $\Delta_\Gamma=\hat\Gamma_{\rm pre} - \Gamma$, we have
  \begin{align}\label{s1_U_Delta_bound_eq}
    \left| s_1(\hat U_{\rm pre})-s_1(U - \Delta_\Gamma) \right|
    \le s_1(W \cdot (\hat \gamma_{\rm pre} - \gamma)) = o_{\Theta,P}(\maxSNT)
  \end{align}
    the equality follows from Assumption \ref{ass:
    factor model}(\ref{item: SN}) and Lemma \ref{gamma_pre_rate_lemma}. Also, notice that
    \begin{align*}
      s_1 (U - \Delta_\Gamma) \leq s_1 (U) + s_1 (\Delta_\Gamma) = \Op (\maxSNT),
    \end{align*}
    where the equality follows from Assumption \ref{ass:
    factor model}(\ref{item: SN}) and $s_1 (\Delta_\Gamma) \leq \norm{\Delta_\Gamma}_* = \Op (\maxSNT)$, which can be verified analogously to \eqref{eq: delta Gamma LS explicit bound} plugging $\hat \gamma_{\rm pre}$ instead of $\hat \gamma_{\rm LS}$ and using the result of Lemma \ref{gamma_pre_rate_lemma}. Combining this with \eqref{s1_U_Delta_bound_eq}, we conclude
    \begin{align}
      \label{eq: s_1 hat U bound}
      s_1 (\hat U_{\rm pre}) = \Op (\maxSNT).
    \end{align}
    
    Next, using the fact that $\rank{\Delta_\Gamma}\le 2R$ and
    the general singular value inequality $s_{i+j-1}(A+B) \leq s_i(A)+s_j(B)$
    (see equation (7.3.13) in Problem 7.3.P16
 of \citealt{horn2013matrix}, or alternatively \citealt{fan1951maximum})
    with $A= U -
    \Delta_\Gamma$, $B= \Delta_\Gamma$, $i=1$, $j=2R+1$ gives $s_{2R+1}(U) \leq s_{1}(U - \Delta_\Gamma)$.  Thus,
    \begin{align}
      s_1(U)\le s_{2R+1}(U)+o_{\Theta,P}(\maxSNT)\le s_1(\hat U_{\rm pre})+o_{\Theta,P}(\maxSNT),
      \label{eq: s1 U and U hat}
    \end{align}
    where we apply Assumption \ref{ass: factor model}\eqref{item: high level
      singular values of U} for the first inequality and
    (\ref{s1_U_Delta_bound_eq}) for the second inequality. Notice that this, together with Assumption \ref{ass: factor model}\eqref{item: SN} and \eqref{eq: s_1 hat U bound}, implies $s_1 (\hat U_{\rm pre}) \asymp_{\Theta, \mathcal P} \maxSNT$. Together with \eqref{eq: s1 U and U hat}, this completes the proof.
\end{proof}

\subsubsection{\label{sssec: proof of nunclear bound thm}Proof of Theorem \ref{new_nuclear_bound_error_thm}}
First, notice that Assumption \ref{rate_assumption} holds according to Lemma \ref{lem: Xi verification}, so we can apply Lemmas \ref{gamma_pre_rate_lemma} and \ref{s1_hat_U_pre_lemma}.

We prove the second statement of the theorem. The proof of the first statement is analogous.

Applying the second result of Lemma \ref{generic_Gamma_lemma} with $\hat \gamma = \gamma_{\rm pre}$ and $\hat U = U - W \cdot (\hat \gamma_{\rm pre} - \gamma)$, we obtain 
\begin{align*}
  2 R s_1 (\hat U) \geq \norm{\hat \Gamma_{\rm pre} - \Gamma - P_\lambda \hat U}_* \geq \norm{\hat \Gamma_{\rm pre} - \Gamma - P_\lambda U }_* - \norm{P_{\lambda} W \cdot (\hat \gamma_{\rm pre} - \gamma) }_*.
\end{align*}
Note that
\begin{align*}
  \norm{P_{\lambda} W \cdot (\hat \gamma_{\rm pre} - \gamma) }_* \leq R s_1  \left(W \cdot (\hat \gamma_{\rm pre} - \gamma)\right) = \op (\maxSNT),
\end{align*}
where the equality follows from Assumption \ref{ass: factor model}(\ref{item: SN}) and Lemma \ref{gamma_pre_rate_lemma}. Similarly,
\begin{align*}
  s_1 (\hat U) \leq s_1 (U) + s_1 (W \cdot (\hat \gamma_{\rm pre} - \gamma)) = s_1(U) + \op (\maxSNT).
\end{align*}
Hence,
\begin{align*}
  \norm{\hat \Gamma_{\rm pre} - \Gamma - P_\lambda U }_* \leq 2 R s_1 (U) + \op (\maxSNT).
\end{align*}
Combining this with Lemma \ref{s1_hat_U_pre_lemma} proves the second statement of the theorem.

\subsection{Proof of Theorem \ref{new_beta_rates_thm}}
The result follows from the first result of Theorem \ref{new_nuclear_bound_error_thm} and Theorem
\ref{high_level_rate_thm},
along with Lemma \ref{lem: Xi verification} verifying Assumption \ref{rate_assumption}.

\subsection{Proof of Theorem \ref{new_validity_theorem}}
\label{ssec: proof of new validity theorem}

The first statement of the theorem follows from Theorem \ref{new_beta_rates_thm} once we verify Assumption \ref{ass: factor model}(\ref{A_U_inner_product_bound}). Notice that Assumption \ref{ass: factor model}(\ref{A_U_inner_product_bound})
is immediate from Assumption \ref{U_conditional_moment_bound_assump} and
Chebyshev's inequality. Similarly, later in the proof, we will also invoke some
of the previously derived results which rely on Assumption \ref{ass: factor
  model}(\ref{A_U_inner_product_bound}).

To prove the second statement of the theorem, we verify Assumption \ref{ass:
  augmented high level} with $\tilde\Gamma=\Gamma+P_\lambda U - \hat\Gamma_{\rm
  pre}$ and $\tilde U=U-P_{\lambda}U$.  Part (\ref{item: augmented high level C
  hat bound}) of Assumption \ref{ass:
  augmented high level} holds by construction, so we just need to verify part
(\ref{item: augmented high level CLT}) with this choice of $\tilde U$.  This
will follow if we can show
\begin{align}
  \label{eq: AU CLT}
  \langle A, U \rangle_F/\widehat{\se}
  \underset{\Theta,\mathcal{P}}{\overset{d}{\to}}
  N(0,1)
\end{align}
and
\begin{align}
  \label{eq: APlambda}
  \langle A, P_\lambda U
  \rangle_F/\widehat{\se}=o_{\Theta,\mathcal{P}}(1)
\end{align}
where $A=A^*_{b,c}$.  Section \ref{sssec: aux lemmas for new validity theorem}
verifies (\ref{eq: AU CLT}) and Section \ref{sssec: proof of new validity
  theorem} verifies (\ref{eq: APlambda}).

\subsubsection{Verification of (\ref{eq: AU CLT})}
\label{sssec: aux lemmas for new validity theorem}

In this section, we verify that (\ref{eq: AU CLT}) holds under the hypotheses of
Theorem \ref{new_validity_theorem}.
Specifically, we show that
$\langle A, U \rangle_F/\widehat{\se}
\underset{\Theta,\mathcal{P}}{\overset{d}{\to}}
N(0,1)$
for
$\widehat{\se}^2=\sum_{i=1}^N\sum_{t=1}^T A_{it}^2\hat U_{it}^2$
with any sequence of matrices $A$
satisfying $\operatorname{Lind}(A)\le c_{N,T}$ with
$c_{N,T}$ satisfying the condition $c_{N,T} \max\{N,T\}\to 0$ given in the statement of the theorem.

To this end, we first prove a bound on $\|\hat U-U\|_F$ (Lemma \ref{lem: hat U
  Frobenius rate}), and then use this to show consistency of the standard error
(Lemma \ref{lem: se consistency}, using a condition verified in
Lemma \ref{lem: omega sum verification}). Lemma \ref{lem: u-normality
  verification} completes the proof.
We note that the conditions of Lemma \ref{lem: hat U Frobenius rate} hold under
the conditions of Theorem \ref{new_validity_theorem} by
Lemma \ref{preliminary_LS_rates_lemma}.

\begin{lemma}
  \label{lem: hat U Frobenius rate}
  Let $\hat U = Y - W \cdot \hat \gamma - \hat \Gamma$, where
  \begin{align*}
    \hat \Gamma = \argmin_{\left\{ G \in \mathbb{R}^{N \times T} \, : \, {\rm rank}(G) \leq R \right\}}
    \sum_{i=1}^N \sum_{t=1}^T \left( Y_{it} - W_{it}' \hat \gamma - G_{it} \right)^2.
  \end{align*}
  Suppose that
  \begin{enumerate}[(i)]
    \item \label{item: gam rate} $\hat \gamma - \gamma = \Op \left(\frac{1}{\minSNT}\right)$;
    \item \label{item: XZ Frobenius} $\norm{X}_F = \Op (\sqrt{NT})$ and $\norm{Z_k}_F = \Op (\sqrt{NT})$ for $k \in \{1, \ldots, K\}$;
    \item \label{item: SN aux} $s_1(X) = \mathcal{O}_{\Theta,\mathcal{P}} \left(\sqrt{NT}\right)$, $s_1(Z_k) = \mathcal{O}_{\Theta,\mathcal{P}} \left(\sqrt{NT}\right)$ for $k \in \{1, \ldots, K\}$, and $s_1(U) = \Op (\maxSNT)$.
  \end{enumerate}
  Then,
  \begin{align*}
    \norm{\hat U - U}_F^2 = \Op (\max \{N,T\}).
  \end{align*}
\end{lemma}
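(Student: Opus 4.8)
The plan is to split the residual error into a regression part and a low-rank estimation part and bound each separately. From the model $Y = W\cdot\gamma + \Gamma + U$ we have $U = Y - W\cdot\gamma - \Gamma$, and subtracting this from the definition $\hat U = Y - W\cdot\hat\gamma - \hat\Gamma$ gives the identity
\[
\hat U - U = W\cdot(\gamma - \hat\gamma) + (\Gamma - \hat\Gamma).
\]
By the triangle inequality it therefore suffices to show that both $\| W\cdot(\gamma - \hat\gamma)\|_F$ and $\|\Gamma - \hat\Gamma\|_F$ are $\Op(\maxSNT)$, since squaring then yields the claimed $\Op(\max\{N,T\})$.

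For the regression part I would bound, using the triangle inequality together with hypotheses (\ref{item: XZ Frobenius}) and (\ref{item: gam rate}),
\[
\| W\cdot(\gamma - \hat\gamma)\|_F \le \|X\|_F\,\abs{\hat\beta - \beta} + \sum_{k=1}^K \|Z_k\|_F\,\abs{\hat\delta_k - \delta_k} = \Op(\sqrt{NT})\cdot \Op(1/\minSNT) = \Op(\maxSNT),
\]
where the last equality uses $\sqrt{NT}/\minSNT = \maxSNT$.

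For the low-rank part I would mirror the argument in the proof of Lemma \ref{Gamma_LS_rate_lemma}. Writing $E := U + W\cdot(\gamma - \hat\gamma)$, the matrix being truncated is $Y - W\cdot\hat\gamma = \Gamma + E$, and $\hat\Gamma$ is its best rank-$R$ Frobenius approximation. Since $\Gamma$ has rank at most $R$ and hence is a feasible competitor, optimality of $\hat\Gamma$ gives $\|(\Gamma + E) - \hat\Gamma\|_F^2 \le \|(\Gamma+E)-\Gamma\|_F^2 = \|E\|_F^2$; expanding the left-hand side as $\|(\Gamma - \hat\Gamma) + E\|_F^2$ and cancelling $\|E\|_F^2$ yields
\[
\|\Gamma - \hat\Gamma\|_F^2 \le -2\langle \Gamma - \hat\Gamma, E\rangle_F \le 2\,\|\Gamma - \hat\Gamma\|_*\, s_1(E),
\]
using duality of the nuclear and spectral norms. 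Because $\rank{\Gamma - \hat\Gamma}\le 2R$, I would then invoke $\|\Gamma - \hat\Gamma\|_* \le \sqrt{2R}\,\|\Gamma - \hat\Gamma\|_F$ to cancel one factor and obtain $\|\Gamma - \hat\Gamma\|_F \le 2\sqrt{2R}\,s_1(E)$.

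It remains to bound $s_1(E)$, and this is the one step requiring genuine care: I need the \emph{spectral} norm of the regression perturbation rather than its Frobenius norm, since $\|U\|_F$ alone is of order $\sqrt{NT}$ and would be too large. Using subadditivity of $s_1(\cdot)$,
\[
s_1(E) \le s_1(U) + s_1(X)\,\abs{\hat\beta - \beta} + \sum_{k=1}^K s_1(Z_k)\,\abs{\hat\delta_k - \delta_k} = \Op(\maxSNT),
\]
by hypotheses (\ref{item: SN aux}) and (\ref{item: gam rate}) (again via $\sqrt{NT}/\minSNT = \maxSNT$). Combining the two bounds gives $\|\hat U - U\|_F = \Op(\maxSNT)$, hence $\|\hat U - U\|_F^2 = \Op(\max\{N,T\})$. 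Beyond the spectral-norm control of $E$, the argument is routine and parallels Lemma \ref{Gamma_LS_rate_lemma}.
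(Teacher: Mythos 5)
Your proof is correct and follows essentially the same route as the paper: the same decomposition $\hat U - U = W\cdot(\gamma-\hat\gamma) + (\Gamma-\hat\Gamma)$, the same Frobenius-norm bound on the regression part, and the same rank-$R$ optimality argument (with $\Gamma$ as feasible competitor, nuclear--spectral duality, and $\|\Gamma-\hat\Gamma\|_*\le\sqrt{2R}\,\|\Gamma-\hat\Gamma\|_F$) that the paper simply imports by citing the display \eqref{eq: delta Gamma nuclear explicit bound} from Lemma \ref{Gamma_LS_rate_lemma}, including the key observation that one needs the \emph{spectral} norm $s_1(E)=\Op(\maxSNT)$ of the perturbation rather than its Frobenius norm. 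The only cosmetic difference is that you use the triangle inequality where the paper expands $\|\hat U - U\|_F^2$ and bounds the cross term separately; your shortcut is equally valid and slightly cleaner.
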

\begin{proof}%
  Using $\hat U = W \cdot (\gamma - \hat \gamma) + \Gamma - \hat \Gamma + U$,
  \begin{align*}
    \norm{\hat U - U}_F^2 = \norm{W \cdot (\hat \gamma - \gamma)}_F^2 + \norm{\hat \Gamma - \Gamma}_F^2 + 2 \langle W \cdot (\hat \gamma - \gamma), \hat \Gamma - \Gamma \rangle_F.
  \end{align*}
  To prove the result, we show that all the terms on the right hand side of the equation above are $\Op (\max \{N,T\})$.

  First,
  \begin{align*}
    \norm{W \cdot (\hat \gamma - \gamma)}_F \leq \norm{X}_F \abs{\hat \beta - \beta}  + \sum_{k=1}^K \norm{{Z_k}}_F \abs{\hat \delta_{k} - \delta_{k}} = \Op \left(\maxSNT\right).
  \end{align*}
  where we used conditions \eqref{item: gam rate} and \eqref{item: XZ Frobenius}.

  Second,
  \begin{align*}
    \norm{\hat \Gamma - \Gamma}_F \leq \norm{\hat \Gamma - \Gamma}_*  =  \Op (\maxSNT)
  \end{align*}
  where the equality holds analogously to the previously derived result \eqref{eq: delta Gamma LS explicit bound} with $\hat \Gamma$ and $\hat \gamma$ replacing $\hat \Gamma_{\rm LS}$ and $\hat \gamma_{\rm LS}$ correspondingly.

  Third,
  \begin{align*}
    \abs{\langle W \cdot (\hat \gamma - \gamma), \hat \Gamma - \Gamma \rangle_F} \leq \abs{\langle X (\hat \beta - \beta), \hat \Gamma - \Gamma \rangle_F} + \sum_{k=1}^K \abs{\langle Z_k (\hat \delta_k - \delta_k), \hat \Gamma - \Gamma \rangle_F},
  \end{align*}
  where
  \begin{align*}
    \abs{\langle X (\hat \beta - \beta), \hat \Gamma - \Gamma \rangle_F} \leq \norm{X}_F \norm{\hat \Gamma - \Gamma}_F \abs{\hat \beta - \beta} = \Op \left(\max \{N,T\}\right).
  \end{align*}
  Similarly,
  \begin{align*}
    \sum_{k=1}^K \abs{\langle Z_k (\hat \delta_k - \delta_k), \hat \Gamma - \Gamma \rangle_F} = \Op \left(\max \{N,T\}\right),
  \end{align*}
  which implies
  \begin{align*}
    \abs{\langle W \cdot (\hat \gamma - \gamma), \hat \Gamma - \Gamma \rangle_F} = \Op (\max \{N,T\})
  \end{align*}
  and completes the proof.
\end{proof}

\begin{lemma}
  \label{lem: se consistency}
  Suppose that the hypotheses of Lemma \ref{lem: hat U Frobenius rate} are satisfied. Suppose, in addition, that the following conditions hold:
  \begin{enumerate}[(i)]
    \item \label{item: HL var U rate} for any collections of weights $\{\omega_{it}\}_{1 \leq i \leq N, 1 \leq t \leq T}$, which are non-random conditional on $W$ and $\Gamma$, such that $\abs{\omega_{it}} \leq \overline \omega$ a.s. for all $W$ and $\Gamma$ and for all $i$, $t$, $N$, and $T$, we have 
    \begin{align*}
      \frac{1}{NT} \sum_{i=1}^N \sum_{t=1}^T \omega_{it} U_{it}^2 - \frac{1}{NT} \sum_{i=1}^N \sum_{t=1}^T \omega_{it} \ex{U_{it}^2|W,\Gamma} = \Op \left(\frac{1}{\sqrt{NT}}\right);
    \end{align*}
    \item \label{item: var U lower bound} for some $\underline \sigma^2 > 0$, $\ex{U_{it}^2|W, \Gamma} \geq \underline \sigma^2$ a.s. for all $i$, $t$, $N$, and $T$;
    \item \label{item: Lind rate} ${\rm Lind} (A) \leq c_{N,T}$ and $\max \{N,T\} \;c_{N,T} \rightarrow 0$.
  \end{enumerate}
  Then,
  \begin{align*}
    \frac{\sum_{i=1}^N \sum_{t=1}^T A_{it}^2 \hat U_{it}^2}{\sum_{i=1}^N \sum_{t=1}^T A_{it}^2 U_{it}^2} - 1 = \op (1),
  \end{align*}
  where $\hat U$ is defined in Lemma \ref{lem: hat U Frobenius rate}.
\end{lemma}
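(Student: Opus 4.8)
The plan is to set $\Delta_{it} := \hat U_{it} - U_{it}$ and expand the numerator as
\[
\sum_{i,t} A_{it}^2 \hat U_{it}^2 - \sum_{i,t} A_{it}^2 U_{it}^2 = 2\sum_{i,t} A_{it}^2 U_{it}\Delta_{it} + \sum_{i,t} A_{it}^2 \Delta_{it}^2 .
\]
Writing $\rho := \big(\sum_{i,t} A_{it}^2 \Delta_{it}^2\big)\big/\big(\sum_{i,t} A_{it}^2 U_{it}^2\big)$ and applying Cauchy--Schwarz to the cross term, $\big|\sum_{i,t} A_{it}^2 U_{it}\Delta_{it}\big| \le \big(\sum_{i,t} A_{it}^2 U_{it}^2\big)^{1/2}\big(\sum_{i,t} A_{it}^2 \Delta_{it}^2\big)^{1/2}$, the target ratio minus one is bounded in absolute value by $2\sqrt{\rho} + \rho$. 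Thus the whole lemma reduces to proving $\rho = \op(1)$, and in particular I will never need a two-sided control of the denominator --- only a lower bound on $\sum_{i,t} A_{it}^2 U_{it}^2$ and an upper bound on $\sum_{i,t} A_{it}^2 \Delta_{it}^2$.

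For the denominator I would apply condition (i) with the normalized weights $\omega_{it} := A_{it}^2/(c_{N,T}\norm{A}_F^2)$. These are non-random given $W,\Gamma$ (since $A = A^*_{\bTW,c}$ is a function of $X,Z$), and the Lindeberg bound $\operatorname{Lind}(A)\le c_{N,T}$ gives $\max_{i,t} A_{it}^2 \le c_{N,T}\norm{A}_F^2$, so $|\omega_{it}|\le 1$. Condition (i) then yields $\sum_{i,t} A_{it}^2 U_{it}^2 = \sum_{i,t} A_{it}^2 \ex{U_{it}^2|W,\Gamma} + \Op\big(\sqrt{NT}\,c_{N,T}\norm{A}_F^2\big)$. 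Since $\sqrt{NT}\le \max\{N,T\}$ and $\max\{N,T\}\,c_{N,T}\to 0$ by condition (iii), the remainder is $\op(\norm{A}_F^2)$; combined with the lower bound $\ex{U_{it}^2|W,\Gamma}\ge\underline\sigma^2$ from condition (ii), this gives $\sum_{i,t} A_{it}^2 U_{it}^2 \ge \underline\sigma^2\norm{A}_F^2(1-\op(1))$, so with probability approaching one (uniformly over $\Theta,\mathcal{P}$) the denominator exceeds $\tfrac12\underline\sigma^2\norm{A}_F^2$.

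For the numerator of $\rho$ I would bound $\sum_{i,t} A_{it}^2 \Delta_{it}^2 \le \big(\max_{i,t} A_{it}^2\big)\norm{\hat U - U}_F^2 \le c_{N,T}\norm{A}_F^2 \cdot \Op(\max\{N,T\})$, using the Lindeberg bound again and the Frobenius rate $\norm{\hat U - U}_F^2 = \Op(\max\{N,T\})$ from Lemma \ref{lem: hat U Frobenius rate}. By condition (iii) this is $\op(\norm{A}_F^2)$. Dividing by the high-probability lower bound on the denominator gives $\rho = \op(1)$, and hence $2\sqrt{\rho}+\rho = \op(1)$, completing the argument. I expect the only genuinely delicate point to be the denominator lower bound: one must route the fluctuation of $\sum_{i,t} A_{it}^2 U_{it}^2$ through condition (i) using the correct $c_{N,T}\norm{A}_F^2$ normalization so that the weights remain bounded, and then check that the resulting $\sqrt{NT}\,c_{N,T}$ remainder is negligible. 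The reduction via Cauchy--Schwarz is what makes this suffice, since it spares us any matching upper bound on the denominator (which would otherwise force an appeal to the fourth-moment bound of Assumption \ref{U_conditional_moment_bound_assump}).
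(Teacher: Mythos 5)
Your proof is correct and follows essentially the same route as the paper's: the identical decomposition of $\sum_{i,t}A_{it}^2(\hat U_{it}^2-U_{it}^2)$ with Cauchy--Schwarz on the cross term, a lower bound on $\sum_{i,t}A_{it}^2U_{it}^2$ via condition (i) with bounded normalized weights, and the bound $\sum_{i,t}A_{it}^2\Delta_{it}^2\le \max_{i,t}A_{it}^2\cdot\|\hat U-U\|_F^2=o_{\Theta,\mathcal{P}}(\|A\|_F^2)$ from the Lindeberg constraint and the Frobenius rate. The only (immaterial) differences are your weight normalization $A_{it}^2/(c_{N,T}\|A\|_F^2)$ in place of the paper's $A_{it}^2/\max_{i,t}A_{it}^2$, and applying Cauchy--Schwarz with weights $A_{it}^2$ directly rather than through $\bigl(\sum_{i,t}A_{it}^4U_{it}^2\bigr)^{1/2}$.
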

\begin{proof}%
  For simplicity of notation, we use $\sum_{i,t} \equiv \sum_{i=1}^N \sum_{t=1}^T$ and $\max_{i,t} \equiv \max_{1 \leq i \leq N, 1 \leq t \leq T}$ throughout the proof.

  Notice that
  \begin{align}
    \frac{\sum_{i,t} A_{it}^2 \hat U_{it}^2}{\sum_{i,t} A_{it}^2 U_{it}^2} - 1 &= \frac{\sum_{i,t} A_{it}^2 \left(\hat U_{it}^2 - U_{it}^2\right)}{\sum_{i,t} A_{it}^2 U_{it}^2} \notag \\
    &= \frac{\sum_{i,t} A_{it}^2 \left(\hat U_{it} - U_{it}\right) \left(\hat U_{it} - U_{it} + 2 U_{it}\right)}{\sum_{i,t} A_{it}^2 U_{it}^2} \notag \\
    &= \frac{\sum_{i,t} A_{it}^2 \left(\hat U_{it} - U_{it}\right)^2}{\sum_{i,t} A_{it}^2 U_{it}^2} + \frac{2 \sum_{i,t} A_{it}^2 U_{it} \left(\hat U_{it} - U_{it}\right)}{\sum_{i,t} A_{it}^2 U_{it}^2} \label{eq: 2 terms to bound}.
  \end{align}
  The first term in \eqref{eq: 2 terms to bound} can be bounded as
  \begin{align*}
    \frac{\sum_{i,t} A_{it}^2 \left(\hat U_{it} - U_{it}\right)^2}{\sum_{i,t} A_{it}^2 U_{it}^2} \leq \frac{\max_{i,t} A_{it}^2 \norm{\hat U - U}_{F}^2}{\sum_{i,t} A_{it}^2 U_{it}^2},
  \end{align*}
  and and the second term in \eqref{eq: 2 terms to bound} can be bounded as
  \begin{align*}
    \frac{\sum_{i,t} A_{it}^2 U_{it} \left(\hat U_{it} - U_{it}\right)}{\sum_{i,t} A_{it}^2 U_{it}^2} &\leq \frac{\left(\sum_{i,t} A_{it}^4 U_{it}^2\right)^{1/2} \left(\sum_{i,t} \left(\hat U_{it} - U_{it}\right)^2\right)^{1/2}}{\sum_{i,t} A_{it}^2 U_{it}^2} \\
    &\leq \sqrt{\frac{\max_{i,t} A_{it}^2 \norm{\hat U - U}_{F}^2}{\sum_{i,t} A_{it}^2 U_{it}^2}},
  \end{align*}
  where the first inequality follows from the Cauchy-Schwarz inequality.

  Hence, to complete the proof, it is sufficient to demonstrate
  \begin{align*}
    \frac{\max_{i,t} A_{it}^2 \norm{\hat U - U}_{F}^2}{\sum_{i,t} A_{it}^2 U_{it}^2} = \op (1).
  \end{align*}

  Next, notice that
  \begin{align*}
    \frac{1}{NT} \sum_{i,t} \frac{A_{it}^2}{\max_{i,t} A_{it}^2} U_{it}^2 &= \frac{1}{NT} \sum_{i,t} \frac{A_{it}^2}{\max_{i,t} A_{it}^2} \ex{U_{it}^2|W,\Gamma} + \Op \left(\frac{1}{\sqrt{NT}}\right)\\
    &\geq \frac{\underline \sigma^2}{NT \; {\rm Lind}(A)} + \Op \left(\frac{1}{\sqrt{NT}}\right)\\
    &\geq \frac{\underline \sigma^2}{NT \; c_{N,T}} + \Op \left(\frac{1}{\sqrt{NT}}\right) >_{\Theta, \mathcal P} 0,
  \end{align*}
  where we used condition \eqref{item: HL var U rate}, \eqref{item: var U lower bound}, and \eqref{item: Lind rate} consequently, and the last inequality (which holds holds wpa1 uniformly) is ensured by condition $\eqref{item: Lind rate}$.

  Then
  \begin{align*}
    \frac{\max_{i,t} A_{it}^2 \norm{\hat U - U}_{F}^2}{\sum_{i,t} A_{it}^2 U_{it}^2} &= \frac{\frac{1}{NT}\norm{\hat U - U}_F^2}{\frac{1}{NT} \sum_{i,t} \frac{A_{it}^2}{\max_{i,t} A_{it}^2} U_{it}^2} \\
    &\leq \frac{c_{N,T} \norm{\hat U - U}_F^2}{\underline \sigma^2 + \Op \left(\sqrt{NT} \; c_{N,T}\right)} \\
    &\leq \frac{c_{N,T} \norm{\hat U - U}_F^2}{\underline \sigma^2 + \op (1)} \\
    &= \op (1),
  \end{align*}
  where the last inequality uses condition \eqref{item: Lind rate}, and the last equality follows from $\norm{\hat U - U}_F^2 = \Op(\max \{N,T\})$ (the result of Lemma \ref{lem: hat U Frobenius rate}) and condition \eqref{item: Lind rate}. This completes the proof.  
\end{proof}

\begin{lemma}\label{lem: omega sum verification}
  Condition (\ref{item: HL var U rate}) of Lemma \ref{lem: se consistency}
  holds under Assumption \ref{U_conditional_moment_bound_assump}.
\end{lemma}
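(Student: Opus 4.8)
The plan is to condition on $W,\Gamma$ and reduce the claim to a conditional variance computation followed by Chebyshev's inequality, exploiting that conditional on $W,\Gamma$ the weights $\omega_{it}$ are non-random constants while the $U_{it}$ are independent across $i,t$ by Assumption \ref{U_conditional_moment_bound_assump}. First I would define the centered sum
\begin{align*}
  S_{N,T} := \frac{1}{NT}\sum_{i=1}^N\sum_{t=1}^T \omega_{it}\left(U_{it}^2 - \ex{U_{it}^2|W,\Gamma}\right),
\end{align*}
so that the statement to be proved is exactly $S_{N,T}=\Op(1/\sqrt{NT})$. Since each summand is mean zero given $W,\Gamma$, we have $\ex{S_{N,T}|W,\Gamma}=0$, and conditional independence of the $U_{it}$ makes the cross terms vanish, so that the conditional second moment factors across cells:
\begin{align*}
  \ex{S_{N,T}^2|W,\Gamma}
  = \frac{1}{(NT)^2}\sum_{i=1}^N\sum_{t=1}^T \omega_{it}^2\,\var\left(U_{it}^2\mid W,\Gamma\right).
\end{align*}

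Next I would bound the two factors uniformly. By hypothesis $\omega_{it}^2\le\overline\omega^2$, and the fourth-moment bound in Assumption \ref{U_conditional_moment_bound_assump} gives $\var(U_{it}^2\mid W,\Gamma)\le\ex{U_{it}^4|W,\Gamma}<1/\eta$. Substituting and summing the $NT$ terms yields
\begin{align*}
  \ex{S_{N,T}^2|W,\Gamma}\le\frac{1}{(NT)^2}\cdot NT\cdot\frac{\overline\omega^2}{\eta}=\frac{\overline\omega^2}{\eta\,NT},
\end{align*}
and since this bound is non-random, taking expectations over $W,\Gamma$ preserves it, giving $\ex{(NT)\,S_{N,T}^2}\le\overline\omega^2/\eta$ for every $P\in\mathcal{P}$ and $\theta\in\Theta$.

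The crucial feature is that the bound $\overline\omega^2/\eta$ depends on neither $P$, $\theta$, $N$, nor $T$, which is what delivers the uniformity required by $\Op$. Applying Markov's inequality to $(NT)\,S_{N,T}^2$ gives, for any $C>0$,
\begin{align*}
  \sup_{\theta\in\Theta,P\in\mathcal{P}}\pr_{\theta,P}\left(\sqrt{NT}\,|S_{N,T}|\ge C\right)
  \le\frac{\ex{(NT)\,S_{N,T}^2}}{C^2}\le\frac{\overline\omega^2}{\eta\,C^2},
\end{align*}
so that given $\varepsilon>0$ the choice $C_\varepsilon=\overline\omega/\sqrt{\eta\varepsilon}$ makes the right-hand side at most $\varepsilon$ uniformly in $N,T$, which is precisely the definition of $S_{N,T}=\Op(1/\sqrt{NT})$. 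There is essentially no real obstacle beyond bookkeeping: the only points requiring care are that conditioning on $W,\Gamma$ legitimately turns the $\omega_{it}$ into constants (guaranteed by the hypothesis that they are non-random conditional on $W,\Gamma$, so that the conditional cross-covariances of the centered squares genuinely vanish) and that the moment bounds supplied by Assumption \ref{U_conditional_moment_bound_assump} hold with constants $\underline\sigma,\eta$ common to all $P\in\mathcal{P}$, which is exactly how that assumption is stated.
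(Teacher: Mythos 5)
Your proposal is correct and follows essentially the same route as the paper's proof: both observe that, conditional on $W,\Gamma$, the centered sum has mean zero and conditional variance bounded by $\overline\omega^2/(\eta\,NT)$ via the uniform fourth-moment bound in Assumption \ref{U_conditional_moment_bound_assump}, which yields the uniform $\mathcal{O}_{\Theta,\mathcal{P}}(1/\sqrt{NT})$ rate by Chebyshev's inequality. Your write-up merely makes explicit the steps the paper leaves implicit (the vanishing of conditional cross terms under conditional independence and the uniformity of the Markov bound over $\theta,P,N,T$), so there is nothing to correct.
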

\begin{proof}

  The quantity in condition (\ref{item: HL var U rate}) of Lemma \ref{lem: se
    consistency} has mean zero and variance conditional on $W,\Gamma$ bounded by
  \begin{align*}
    \frac{\overline\omega^2}{(NT)^2}\sum_{i=1}^N\sum_{t=1}^T \e_P [U_i^4|W,\Gamma]
    \le \frac{\overline\omega^2/\eta}{NT}.
  \end{align*}
  This gives the $\Op(1/\sqrt{NT})$ rate as claimed.

\end{proof}

\begin{lemma}
  \label{lem: u-normality verification}
  Suppose that the hypotheses of Lemma \ref{lem: se consistency} are satisfied,
  and that Assumption \ref{U_conditional_moment_bound_assump} holds.
  Then $\langle A, U \rangle_F/\widehat{\se}
  \underset{\Theta,\mathcal{P}}{\overset{d}{\to}}
  N(0,1)$.
\end{lemma}

\begin{proof}[Proof of Lemma \ref{lem: u-normality verification}]
  First, we verify
  \begin{align*}
    \frac{\sum_{i,t} A_{it}^2 U_{it}^2}{\sum_{i,t} A_{it}^2 \sigma_{it}^2} - 1 = \op (1).
  \end{align*}
  Here $\sigma_{it}^2 \equiv \sigma_{it}^2 (W, \Gamma) = \e [U_{it}^2|W, \Gamma]$, where we drop the dependence of $\sigma_{it}^2(W,\Gamma)$ on $W$ and $\Gamma$ for brevity of notation. Notice that
  \begin{align*}
    \frac{\sum_{i,t} A_{it}^2 U_{it}^2}{\sum_{i,t} A_{it}^2 \sigma_{it}^2} - 1 = \underbrace{\frac{\sqrt{NT} \max_{i,t} A_{it}^2}{\sum_{i,t} A_{it}^2 \sigma_{it}^2}}_{\uto 0}  \underbrace{\frac{1}{\sqrt{NT}} \sum_{i,t} \frac{A_{it}^2}{\max_{i,t} A_{it}^2}  (U_{it}^2 - \sigma_{it}^2)}_{\Op(1)} = \op (1),
  \end{align*}
  where the first factor (uniformly) converges to zero due to conditions \eqref{item: var U lower bound} and \eqref{item: Lind rate} of Lemma \ref{lem: se consistency}, and the second factor is (uniformly) bounded in probability due to condition \eqref{item: HL var U rate} of Lemma \ref{lem: se consistency}. Combining this result with the result of Lemma \ref{lem: se consistency}, we obtain
  \begin{align}
    \label{eq: uniform se consistency}
    \sqrt{\frac{\sum_{it} A_{it}^2 \hat U_{it}^2}{\sum_{it} A_{it}^2 \sigma_{it}^2}} - 1 = \op (1).
  \end{align}

  Second, we demonstrate
  \begin{align}
    \label{eq: uniform CLT}
    \frac{\sum_{i,t} A_{it} U_{it}}{\sqrt{\sum_{i,t} A_{it}^2 \sigma_{it}^2}} \dist N(0,1).
  \end{align}
  Let $Q_{it} = A_{it} U_{it} / \sqrt{\sum_{i,t} A_{it}^2 \sigma_{it}^2}$ and
  $S_{N,T} = \sum_{i,t} Q_{it}$. Following the lines of the proof of Lemma F.1
  in \citet{armstrong2018optimal}
  (and using Assumption \ref{U_conditional_moment_bound_assump}
  and conditions \eqref{item: var U lower bound} and \eqref{item: Lind rate} of Lemma \ref{lem: se consistency}), we conclude that for all sequences of $W = W_{N,T}$ and $\Gamma = \Gamma_{N,T}$ we have for any fixed $\varepsilon > 0$
  \begin{align}
    \label{eq: u lind}
    \sum_{i,t} \e \left[Q_{it}^2 \mathbbm{1} \{\abs{Q_{it}} > \varepsilon  \} | W, \Gamma \right] \uto 0.
  \end{align}
  Note that $\eqref{eq: u lind}$ is a uniform version of the Lindeberg condition (applied conditional on $W$ and $\Gamma$). Hence, following the lines of the proof of the Lindeberg CLT (see, for example, Theorem 27.2 and its proof in \citealp{billingsley1995probability}), we establish that, for any fixed $t \in \mathbb R$,
  \begin{align*}
    \abs{\ex{e^{i S_{N,T} t}|W, \Gamma} - e^{-t^2/2}} \uleq r_{N,T} \quad \text{a.s.}
  \end{align*}
  for some $r_{N,T} \downarrow 0$. Hence, we also have
  \begin{align*}
    \abs{\ex{e^{i S_{N,T} t}} - e^{-t^2/2}} \uto 0,
  \end{align*}
  which implies $S_{N,T} \dist N(0,1)$ and verifies \eqref{eq: uniform CLT}. \eqref{eq: uniform se consistency} and \eqref{eq: uniform CLT} together deliver the result.
\end{proof}

\subsubsection{Verification of (\ref{eq: APlambda})}
\label{sssec: proof of new validity theorem}

Using \eqref{eq: uniform se consistency}, we have
\begin{align*}
  \frac{\langle A, P_{\lambda} U \rangle_F}{\widehat{\se}} = \frac{\langle A, P_{\lambda} U \rangle_F}{\sqrt{\sum_{i,t} A_{it}^2 \sigma_{it}^2 }} (1 + \op(1)),
\end{align*}
so it is sufficient to show that
\begin{align}
  \label{eq: small P lambda U}
  \frac{\langle A, P_{\lambda} U \rangle_F}{\sqrt{\sum_{i,t} A_{it}^2 \sigma_{it}^2 }} = \op(1).
\end{align}

Before we proceed notice that since the low rank representation $\Gamma = \lambda' f$ is arbitrary (even though it is not unique), we can proceed with any compatible mapping $\lambda = \lambda (\Gamma)$ and make $\lambda$ non-random conditional on $\Gamma$.

First, we bound $\var_P (\langle A, P_{\lambda} U \rangle_F|W,\Gamma)$. Note that
\begin{align*}
  \var_P (\langle A, P_{\lambda} U \rangle_F|W,\Gamma) &= \var_P (\langle P_{\lambda} A, U \rangle_F|W,\Gamma) \\
  &= \sum_{i=1}^N \sum_{t=1}^T \e_P [(P_\lambda A)_{it}^2 U_{it}^2 |W, \Gamma]\\
  &\leq \overline \sigma^2 \norm{P_\lambda A}_F^2,
\end{align*}
where $\overline \sigma^2$ is a uniform upper bound on $\e_P [U_{it}^2|W,\Gamma]$. Thus,
\begin{align*}
  \var_P\left(\frac{\langle A, P_{\lambda} U \rangle_F}{\sqrt{\sum_{i,t} A_{it}^2 \sigma_{it}^2 }}\Bigg|W,\Gamma\right) \leq  \frac{\overline \sigma^2}{\underline \sigma^2} \frac{\norm{P_\lambda A}_F^2}{\norm{A}_F^2}.
\end{align*}
Note that to complete the proof it is sufficient to show that
\begin{align*}
  \frac{\norm{P_\lambda A}_F^2}{\norm{A}_F^2} \uto 0
\end{align*}
because then \eqref{eq: small P lambda U} will follow Chebyshev's inequality.

Since $\rank{P_\lambda} \leq R$, we have
\begin{align}
  \label{eq: frac A_F bound}
  \frac{\norm{P_\lambda A}_F^2}{\norm{A}_F^2} \leq \frac{R s_1 (P_\lambda A)^2}{\norm{A}_F^2} \leq \frac{R s_1(A)^2}{\norm{A}_F^2}.
\end{align}
Since $A = A^*_{b^*,c}$, according to \eqref{eq: A mse bound}, we also have
\begin{align}
  \label{eq: s_1 A bound}
  s_1 (A)^2 \leq \Op (\max\{N,T\}/(NT)^2),
\end{align}
where we also used $b^{*2} \propto \max\{N,T\}$.

Next, we also want to bound $\norm{A}_F^2$ from below. Consider the following minimization problem
\begin{align*}
  \min_A \norm{A}_F^2 \quad \text{s.t.} \quad \langle A,X \rangle_F = 1.
\end{align*}
Notice that by the Gauss-Markov theorem the solution is given by $A_{\rm GM} = X/\norm{X}_F^2$. Since the constructed $A = A^*_{b^*,c}$ also needs to satisfy the constraint $\langle A,X \rangle_F = 1$, we also have $\norm{A}_F^2 \geq \norm{A_{\rm GM}}^2_F = 1/\norm{X}_F^2$. Combining this with \eqref{eq: frac A_F bound} and \eqref{eq: s_1 A bound}, we obtain
\begin{align*}
  \frac{\norm{P_\lambda A}_F^2}{\norm{A}_F^2} \leq R s_1 (A) \norm{X}_F^2 = \Op (1/\min\{N,T\}),
\end{align*}
where we used $\norm{X}_F^2 = \Op (NT)$, which follows from Assumptions \ref{ass: X decomposition}\eqref{V_assump}-\eqref{Z_V_inner_product_assump}. This completes the proof.

{\subsection{Proof of Theorem~\ref{th:SemiStrong}}

To prove Theorem~\ref{th:SemiStrong}, we first state and prove an auxiliary lemma in Section \ref{sssec: rankRmatrix lemma} below. In Section \ref{sssec: semi strong thm proof}, we then prove the final results.

 \subsubsection{Auxiliary Lemma}
 \label{sssec: rankRmatrix lemma}
 \begin{lemma}
    \label{lemma:RankRmatrix}
    Let  $A$, $\lambda$, $f$ be $N\times T$, $N \times R$, and $T \times R$  matrices, respectively.
    Assume that the matrices $P_\lambda A P_f$,  $\lambda$, and $f$ all have rank equal to $R$.
    Then we have
    \begin{align*}
         {\rm rank}(A) = R
         \qquad 
         \Longleftrightarrow
           \qquad 
           M_\lambda  A  M_f 
     =   (M_\lambda A  P_f )  \left( P_\lambda  A  P_f  \right)^+ (P_\lambda  A  M_f).
    \end{align*}
 \end{lemma}
 
 Notice that   $ (M_\lambda A  P_f )  \left( P_\lambda  A  P_f  \right)^+ (P_\lambda  A  M_f)$ in the statement of the lemma
 can equivalently be written as   $ M_\lambda A    \left( P_\lambda  A  P_f  \right)^+  A  M_f$, because
 $ \left( P_\lambda  A  P_f  \right)^+ = P_f  \left( P_\lambda  A  P_f  \right)^+ P_\lambda$.
 
 \begin{proof}%
     First, consider the special case
     \begin{align}
          \lambda &=  { \mathbb{I}_R \choose 0_{(N-R) \times R} } ,
          &
          f &=  { \mathbb{I}_R \choose 0_{(T-R) \times R} } .
          \label{SpecialCase}
     \end{align}
     In that case, let $[P_{\lambda} A P_{f}]_\# = \lambda' A f$ be the non-zero $R \times R$ block of the $N \times T$ matrix 
     $P_{\lambda} A P_{f}$, and analogously, let  $[M_{\lambda} A P_{f}]_\# $, $[P_{\lambda} A M_{f}]_\# $, $[M_{\lambda} A M_{f}]_\# $
     be the non-zero $(N-R) \times R$, $R \times (T-R)$, $(N-R) \times (T-R)$ blocks of
     the $N \times T$ matrices $M_{\lambda} A P_{f}$, $P_{\lambda} A M_{f}$, $M_{\lambda} A M_{f}$, respectively.
     With those definitions we have
\begin{align*}
   A &= \left( \begin{array}{cc} 
         \,  [P_{\lambda} A P_{f}]_{\#}  &  [P_{\lambda} A M_{f}]_{\#}
            \\
         \,   [M_{\lambda} A P_{f}]_{\#} &  [M_{\lambda} A M_{f}]_{\#}
    \end{array} \right)
\end{align*}
 For any $i=1,\ldots,N-R$ and $t=1,\ldots,T-R$, we now construct an $(R+1) \times (R+1)$ submatrix of this matrix as follows
\begin{align*}
   \left( \begin{array}{cc} 
         \,  [P_{\lambda} A P_{f}]_{\#}  &  [P_{\lambda} A M_{f}]_{\#} \; e_t
            \\
         \,  e_i' \; [M_{\lambda} A P_{f}]_{\#} & e_i'  \; [M_{\lambda} A M_{f}]_{\#} \; e_t
    \end{array} \right)
\end{align*}
where $e_k$ refers to the $k$'th standard basis vector of appropriate dimension.
The determinant of this submatrix is given by
\begin{align*}
    \det([P_{\lambda} A P_{f}]_{\#} )
   \left[   
     e_i'  \; [M_{\lambda} A M_{f}]_{\#} \; e_t
     -   e_i' \; [M_{\lambda} A P_{f}]_{\#}    \left( [P_{\lambda} A P_{f}]_{\#}  \right)^{-1}  [P_{\lambda} A M_{f}]_{\#} \; e_t
     \right] = 0.
\end{align*}
If this determinant is zero for any such  $(R+1) \times (R+1)$ submatrix, then we can conclude that
 $  {\rm rank}(A) \leq R$, which together with our assumption  $  {\rm rank}(P_\lambda A P_f) = R$
 implies that  $  {\rm rank}(A) = R$. Conversely, if  $  {\rm rank}(A) = R$, then the determinant
 of any such  $(R+1) \times (R+1)$ submatrix needs to be zero. 

The assumption  $  {\rm rank}(P_\lambda A P_f) = R$ guarantees that   $ \det([P_{\lambda} A P_{f}]_{\#} ) \neq 0$. 
Thus, we have $  {\rm rank}(A) = R$ if and only if
\begin{align*}
 e_i'  \;  \left[   
      [M_{\lambda} A M_{f}]_{\#}  
     -    [M_{\lambda} A P_{f}]_{\#}    \left( [P_{\lambda} A P_{f}]_{\#}  \right)^{-1}  [P_{\lambda} A M_{f}]_{\#} 
     \right]  \; e_t = 0.
\end{align*}
for all $i=1,\ldots,N-R$ and $t=1,\ldots,T-R$. This can equivalently be written as
\begin{align*}
       [M_{\lambda} A M_{f}]_{\#}  
     &=   [M_{\lambda} A P_{f}]_{\#}    \left( [P_{\lambda} A P_{f}]_{\#}  \right)^{-1}  [P_{\lambda} A M_{f}]_{\#} ,
\end{align*}
or  equivalently
\begin{align*}
    M_{\lambda} A M_{f} &= M_{\lambda} A P_{f}    \left( P_{\lambda} A P_{f}   \right)^{+} P_{\lambda} A M_{f}.
\end{align*}
We have thus shown the lemma for the special case that $\lambda$ and $f$ are of the form \eqref{SpecialCase}.

For any other $\lambda$ and $f$ that have full rank $R$ (as assumed in the lemma) we can choose an orthogonal
$N \times N$ matrix $O_1$ and an orthogonal $T \times T$ matrix $O_2$ such that
$O_1 \lambda =  { \mathbb{I}_R \choose 0_{(N-R) \times R} } $
and $O_2 f =   { \mathbb{I}_R \choose 0_{(T-R) \times R} }$. By applying the result already proven to the
transformed data $O_1  A O_2'$, $O_1 \lambda $, $O_2 f$ we then obtain the result of the lemma more generally,
because the statement of the lemma is invariant under such orthogonal transformations.%
 \end{proof}
 
\subsubsection{Proof of Theorem~\ref{th:SemiStrong}}
\label{sssec: semi strong thm proof}
First, let $\hat U \coloneqq U - W \cdot (\hat \gamma_{\rm pre} - \gamma)$ and notice that we have
\begin{align}
  \label{eq: A s_1 U hat aux}
  s_1 (\hat U) = s_1 (U) + \op(\maxSNT) = s_1 (U) ( 1 + \op(1)) \asy \maxSNT, 
\end{align}
where we used Assumption \ref{ass: factor model}(\ref{item: SN}) and Lemma \ref{gamma_pre_rate_lemma}. Together with the hypothesis of the theorem it implies that
\begin{align}
    \frac{s_1(\hat U) } {  s_R(\Gamma)} = \op(1).
    \label{StrongFactorAssumption}
\end{align}
This condition implies that $s_R(\Gamma) \neq 0$, which together
with our assumption ${\rm rank}(\Gamma) \leq R$ implies that
${\rm rank}(\Gamma) = R$. Since $\Gamma = \lambda f'$, this also implies that the rank of both $\lambda$ and $f$ is equal to $R$
as well.

Next, define
\begin{align*}
    \Gamma^\dagger &:= \Gamma + M_{\lambda} \hat U P_{f}  + P_{\lambda} \hat U M_{f} + M_{\lambda} \hat U   \Gamma^{+} \hat U M_{f} .
\end{align*}
Using Lemma~\ref{lemma:RankRmatrix} and that $ \Gamma= P_\lambda \Gamma P_f$, we conclude that
\begin{align*}
     {\rm rank}\left( \Gamma^\dagger \right)  = R.
\end{align*} 
Our first goal in the following is to derive a bound on $  \Vert \widehat  \Gamma_{\rm pre}  - \Gamma^\dagger  \Vert_*  $. Denote
\begin{align*}
  Q(G) \coloneqq \left\| Y - W \cdot \hat \gamma_{\rm pre}  - G \right\|_F^2 = \left\| \hat U + \Gamma  - G \right\|_F^2.
\end{align*}
Then
\begin{align*}
    Q(\Gamma^\dagger) &=    \left\| \hat U + \Gamma  -  \Gamma^\dagger \right\|_F^2 
     \\
     &= \left\| \hat U - M_{\lambda} \hat U P_{f}  - P_{\lambda} \hat U M_{f} - M_{\lambda} \hat U   \Gamma^{+} U M_{f} \right\|_F^2      
     \\
     &= \left\| P_{\lambda} \hat U P_{f}  + M_{\lambda} \hat U M_{f} - M_{\lambda} \hat U   \Gamma^{+} \hat U M_{f} \right\|_F^2      
     \\
     &=    
         \left\|  M_\lambda  \hat U  M_f  - M_{\lambda} \hat U   \Gamma^{+} \hat U M_{f} \right\|_F^2 +  \left\| P_\lambda \hat U P_f  \right\|_F^2 
      \\
      &=  \left\|  M_\lambda  \hat U  M_f  \right\|_F^2  
        - 2 {\rm Tr}\left( M_{f}  \hat U' M_{\lambda} \hat U   \Gamma^{+} \hat U  M_\lambda  \right)
       + \left\|    M_{\lambda} \hat U   \Gamma^{+} U M_{f} \right\|^2_F 
  +  \left\| P_\lambda \hat U P_f  \right\|_F^2 ,
\end{align*}
where in the last step we used that $\|A+B\|_F^2 = {\rm Tr}((A+B)'(A+B))  = \|A\|_F^2 + 2{\rm Tr}(A'B)  + \|B\|_F^2 $.
Furthermore, using  
\begin{align}
     \left\|    M_{\lambda} \hat U   \Gamma^{+} \hat U M_{f} \right\|_F 
     &\leq
   R \,   s_1(   M_{\lambda} \hat U   \Gamma^{+} \hat U M_{f}   ) 
   \leq R \,    [s_1(\hat U)]^2 s_1(  \Gamma^{+} )
   =   \frac{R \,   [s_1(\hat U)]^2 } {s_R(\Gamma)}  ,
   \label{BoundUGammaU}
\end{align}
and\footnote{
Here, we applied the general inequality $|{\rm Tr}(A)| \leq {\rm rank}(A) \, s_1(A) $, which is an immediate consequence of
von Neumann's trace inequality, with $A=M_{f}  \hat U' M_{\lambda} \hat U   \Gamma^{+} \hat U  $, where
${\rm rank}(A)\leq {\rm rank}( \Gamma^{+} )=R$.
}
\begin{align}
   \left| {\rm Tr}\left( M_{f}  \hat U' M_{\lambda} \hat U   \Gamma^{+} \hat U    \right)  \right|
       &\leq  R \, s_1(M_{f}  \hat U' M_{\lambda} \hat U   \Gamma^{+} \hat U )
      \nonumber \\
       &\leq  R \, [s_1(\hat U)]^3 \,  s_1(  \Gamma^{+} )   =   \frac{R \,   [s_1(\hat U)]^3 } {s_R(\Gamma)} ,
       \label{BoundTrGammaUUU}
\end{align}
we thus obtain
\begin{align}
      Q(\Gamma^\dagger) &\leq
     \left\| M_\lambda \hat U  M_f  \right\|_F^2 
       +  \left\| P_\lambda \hat U P_f  \right\|_F^2
     + \Op\left\{  \frac{[s_1(\hat U)]^4} {[s_R(\Gamma)]^2} + \frac{[s_1(\hat U)]^3}  {s_R(\Gamma)}   \right\} 
     \nonumber \\
     &=    \left\| M_\lambda \hat U  M_f  \right\|_F^2 +    \left\| P_\lambda \hat U P_f  \right\|_F^2 + \op\left[ (s_1(\hat U))^2  \right]   .
     \label{BoundQGammaTarget}
\end{align}
where in the last step we used \eqref{StrongFactorAssumption}.

 Next, for any $N \times T$ matrix $G$ we have
\begin{align*}
      Q(G) &=   \left\| G - \Gamma - \hat U   \right\|_F^2  
      \\
      &=  \left\| P_\lambda \left(G - \Gamma - \hat U \right) P_f  \right\|_F^2
           +  \left\| P_\lambda \left(G   - \hat U \right) M_f  \right\|_F^2
       \\ & \qquad \qquad
            +  \left\| M_\lambda \left(G   - \hat U \right) P_f  \right\|_F^2
            +  \left\| M_\lambda \left(G   - \hat U \right) M_f  \right\|_F^2 ,
\end{align*}
where we used that $M_\lambda \Gamma = 0$ and $\Gamma M_f  = 0$. In particular, we have
\begin{align*}
    Q( \widehat \Gamma_{\rm pre})    &=    \left\| P_\lambda \left(\widehat \Gamma_{\rm pre} - \Gamma - \hat U \right) P_f  \right\|_F^2
           +  \left\| P_\lambda \left(\widehat \Gamma_{\rm pre}   - \hat U \right) M_f  \right\|_F^2
           \\ & \qquad \qquad
            +  \left\| M_\lambda \left(\widehat \Gamma_{\rm pre}   - \hat U \right) P_f  \right\|_F^2
            +  \left\| M_\lambda \left(\widehat \Gamma_{\rm pre}   - \hat U \right) M_f  \right\|_F^2 .
\end{align*}
We have
\begin{align*}
     \left\| M_\lambda \left(\widehat \Gamma_{\rm pre}   - \hat U \right) M_f  \right\|_F^2
     &=    \left\| M_\lambda  \hat U M_f  - M_\lambda   \widehat \Gamma_{\rm pre} M_f  \right\|_F^2 
   \\
      &\geq \argmin_{\left\{ G \, : \, {\rm rank}(G) \leq  R \right\}} 
    \left\| M_\lambda  \hat U M_f  - G  \right\|_F^2
    \\
    &=  \left\| M_\lambda \hat U  M_f  \right\|_F^2 - \sum_{r=1}^R [s_r(M_\lambda \hat U  M_f )]^2
    \\
    &\geq   \left\| M_\lambda \hat U  M_f  \right\|_F^2 - R  [s_1(\hat U)]^2 ,
\end{align*}
where for the first inequality we used that $M_\lambda   \widehat \Gamma_{\rm pre} M_f  $ is a matrix of rank at most $R$, that is,
minimizing over all such matrices can only make the expression smaller, and in the next step we used that the solution to this least squares
minimization problem over $\left\{ G \, : \, {\rm rank}(G) \leq  R \right\}$ is given by the principal components
of $M_\lambda  \hat U M_f$. We thus obtain that
\begin{align}
      Q( \widehat \Gamma_{\rm pre})      &\geq  \left\| P_\lambda \left(\widehat \Gamma_{\rm pre} -  \Gamma  \right) P_f  - P_\lambda  \hat U P_f \right\|_F^2
           +  \left\| P_\lambda \left( \widehat \Gamma_{\rm pre}   - \hat U \right) M_f  \right\|_F^2
         \nonumber    \\ & \qquad  \qquad
            +  \left\| M_\lambda \left( \widehat \Gamma_{\rm pre}   - \hat U \right) P_f  \right\|_F^2
            + \left\| M_\lambda \hat U  M_f  \right\|_F^2 - R  [s_r(\hat U)]^2 .
            \label{LowerBoundHatGammaPreliminary}      
\end{align}

Since $ \widehat \Gamma_{\rm pre}  $ minimizes $Q(G)$ over all rank $\leq R$ matrices,
and   $ {\rm rank}\left( \Gamma^\dagger \right)  = R$, we know that 
$Q( \widehat \Gamma_{\rm pre} ) \leq Q(\Gamma^\dagger)$. Combining \eqref{BoundQGammaTarget}  
and \eqref{LowerBoundHatGammaPreliminary} we thus obtain that
\begin{align*}
    &  \left\| P_\lambda \left(\widehat \Gamma_{\rm pre} -  \Gamma  \right) P_f  - P_\lambda  \hat U P_f \right\|_F^2
               +  \left\| P_\lambda \left( \widehat \Gamma_{\rm pre}   - \hat U \right) M_f  \right\|_F^2
            +  \left\| M_\lambda \left( \widehat \Gamma_{\rm pre}   - \hat U \right) P_f  \right\|_F^2
          \\ & \qquad\qquad\qquad\qquad\qquad\qquad\qquad\qquad\qquad  
          \leq   R  [s_1(\hat U)]^2     + \op\left[ (s_1(\hat U))^2  \right]  +   \left\| P_\lambda \hat U P_f  \right\|_F^2 
            \\ & \qquad\qquad\qquad\qquad\qquad\qquad\qquad\qquad\qquad  
            = \Op(  [s_1(\hat U)]^2  ) .
\end{align*}
Since all three terms on the lhs are positive, this implies separately for each of them
\begin{align}
     \left\| P_\lambda \left(\widehat \Gamma_{\rm pre} -  \Gamma   \right) P_f  \right\|_F &\leq
           \left\| P_\lambda \left(\widehat \Gamma_{\rm pre} -  \Gamma  \right) P_f  - P_\lambda  \hat U P_f \right\|_F
           + \left\| P_\lambda  \hat U P_f  \right\|_F 
         \nonumber  \\
           &= \Op(s_1(\hat U)) ,
    \nonumber  \\
       \left\| P_\lambda  \widehat \Gamma_{\rm pre}   M_f  \right\|_F &=  \Op(s_1(\hat U))  ,
   \nonumber \\
       \left\| M_\lambda  \widehat \Gamma_{\rm pre}   P_f  \right\|_F &=  \Op(s_1(\hat U)) .
    \label{PreliminaryBoundsGammaHat}   
\end{align}
The first result in the last display implies
$$
    s_1\left( P_\lambda \widehat \Gamma_{\rm pre} P_f  -   \Gamma     \right) 
    \leq    \left\| P_\lambda \left(\widehat \Gamma_{\rm pre} -  \Gamma   \right) P_f  \right\|_F 
     =  \Op(s_1(\hat U)) ,
$$
and by Weyl inequality for singular values we thus find that
\begin{align}
    s_R\left( P_\lambda \widehat \Gamma_{\rm pre} P_f  \right)
     &\geq s_R\left(  \Gamma   \right) -  s_1\left( P_\lambda \widehat \Gamma_{\rm pre} P_f  -   \Gamma     \right) 
   \nonumber  \\
     & \geq s_R\left(  \Gamma   \right)  -   \Op(s_1(\hat U))
     \nonumber  \\
     & \geq 0 , \qquad {\rm wpa1,} 
    \label{PreliminaryBoundsGammaHatAgain}   
\end{align}
where in the last step we used  \eqref{StrongFactorAssumption} again.
We thus have, wpa1, that 
$P_\lambda  \widehat \Gamma_{\rm pre}   P_f$ has rank equal to $R$, and by applying Lemma~\ref{lemma:RankRmatrix}
we thus find
\begin{align*}
    M_\lambda  \widehat \Gamma_{\rm pre}   M_f 
    &=   (M_\lambda  \widehat \Gamma_{\rm pre}   P_f )  \left( P_\lambda  \widehat \Gamma_{\rm pre}   P_f  \right)^+ 
    (P_\lambda  \widehat \Gamma_{\rm pre}   M_f)   .
\end{align*}
Together with the bounds in \eqref{PreliminaryBoundsGammaHat} and \eqref{PreliminaryBoundsGammaHatAgain} and \eqref{StrongFactorAssumption} this implies that
\begin{align}
    \left\| M_\lambda  \widehat \Gamma_{\rm pre}   M_f \right\|_F
    & \leq R \,    s_1( M_\lambda  \widehat \Gamma_{\rm pre}   M_f )
  \nonumber  \\
    & \leq \frac{ R \,  s_1(M_\lambda  \widehat \Gamma_{\rm pre}   P_f ) \, s_1 (P_\lambda  \widehat \Gamma_{\rm pre}   M_f) }
     {s_R(P_\lambda  \widehat \Gamma_{\rm pre}   P_f )}
  \nonumber  \\
    &= \Op\left(  \frac{[s_1(\hat U)]^2}{s_R(\Gamma)}   \right).
    \label{PreliminaryBoundsGammaHat2}
\end{align}

We can now improve on the lower bound in $Q( \widehat \Gamma_{\rm pre})$ in  \eqref{LowerBoundHatGammaPreliminary}. 
As before, we have  
\begin{align*}
    Q( \widehat \Gamma_{\rm pre})    &=    \left\| P_\lambda \left(\widehat \Gamma_{\rm pre} - \Gamma - \hat U \right) P_f   \right\|_F^2
           +  \left\| P_\lambda \left(\widehat \Gamma_{\rm pre}   - \hat U \right) M_f  \right\|_F^2
           \\ & \qquad \qquad
            +  \left\| M_\lambda \left(\widehat \Gamma_{\rm pre}   - \hat U \right) P_f  \right\|_F^2
            +  \left\| M_\lambda \left(\widehat \Gamma_{\rm pre}   - \hat U \right) M_f  \right\|_F^2 .
\end{align*}
Similarly, using  the definition of $ \Gamma^\dagger$ we obtain
\begin{align*}
    \left\|         \widehat \Gamma_{\rm pre} -  \Gamma^\dagger - P_\lambda \hat U P_f  \right\|_F^2
    &=           \left\| P_\lambda \left(\widehat \Gamma_{\rm pre} - \Gamma - \hat U\right) P_f    \right\|_F^2
           +  \left\| P_\lambda \left(\widehat \Gamma_{\rm pre}   - \hat U \right) M_f  \right\|_F^2
           \\ & \quad
            +  \left\| M_\lambda \left(\widehat \Gamma_{\rm pre}   - \hat U \right) P_f  \right\|_F^2
            +  \left\| M_\lambda \left(\widehat \Gamma_{\rm pre}   - M_{\lambda} \hat U   \Gamma^{+} \hat U M_{f} \right) M_f  \right\|_F^2 .
\end{align*} 
Taking the difference of the equalities in the last two displays we obtain
\begin{align*}
     Q( \widehat \Gamma_{\rm pre})  -  &     \left\|         \widehat \Gamma_{\rm pre} -  \Gamma^\dagger - P_\lambda \hat U P_f  \right\|_F^2
   \\  
     &= \left\| M_\lambda \left(\widehat \Gamma_{\rm pre}   - \hat U \right) M_f  \right\|_F^2 
      -  \left\| M_\lambda \left(\widehat \Gamma_{\rm pre}   - M_{\lambda} \hat U   \Gamma^{\dagger} \hat U M_{f} \right) M_f  \right\|_F^2 
   \\
    &= \left\| M_\lambda \left(\widehat \Gamma_{\rm pre}   - \hat U \right) M_f  \right\|_F^2 
      +  \Op\left(  \frac{[s_1(\hat U)]^4}{[s_R(\Gamma)]^2}   \right) 
  \\
  &=     \left\| M_\lambda \hat U  M_f  \right\|_F^2 
    - 2 {\rm Tr}(    \hat U'  M_\lambda \widehat \Gamma_{\rm pre}  M_f  )   +  \Op\left(  \frac{[s_1(\hat U)]^4}{[s_R(\Gamma)]^2}   \right) 
 \\
  &\geq      \left\| M_\lambda \hat U  M_f  \right\|_F^2  - 2 R s_1(\hat U) s_1( M_\lambda \widehat \Gamma_{\rm pre}  M_f  ) +  \Op\left(  \frac{[s_1(\hat U)]^4}{[s_R(\Gamma)]^2}   \right) 
  \\
    &=      \left\| M_\lambda \hat U  M_f  \right\|_F^2  + \Op\left(  \frac{[s_1(\hat U)]^3}{s_R(\Gamma)}   \right) +  \Op\left(  \frac{[s_1(\hat U)]^4}{[s_R(\Gamma)]^2}   \right) 
    \\
     &=      \left\| M_\lambda \hat U  M_f  \right\|_F^2     + \op\left(  \left[s_1(\hat U)\right]^2  \right)     ,
\end{align*}
where we used \eqref{PreliminaryBoundsGammaHat2}  and \eqref{BoundUGammaU},
 the trace term was bounded analogously to \eqref{BoundTrGammaUUU},
 and in the final step we also used \eqref{StrongFactorAssumption}.
Again using that $Q( \widehat \Gamma_{\rm pre} ) \leq Q(\Gamma^\dagger)$ 
and \eqref{BoundQGammaTarget}
we now obtain
\begin{align}
  \left\|         \widehat \Gamma_{\rm pre} -  \Gamma^\dagger - P_\lambda \hat U P_f  \right\|^2_F 
    \leq  \op\left(  \left[s_1(\hat U)\right]^2  \right)  + \left\| P_\lambda \hat U P_f \right\|_F^2.
    \label{eq: A aux bound semi strong}
\end{align}

Our next step is to bound $ \Vert P_\lambda \hat U P_f  \Vert_F$. Note that, using Assumption~\ref{U_conditional_moment_bound_assump}, 
\begin{align*}
   \mathbb{E} \left[\left\| P_\lambda U P_f  \right\|_F^2 | \Gamma \right] 
   &= \mathbb{E}  \left[{\rm vec}(P_\lambda U P_f)'  {\rm vec}(P_\lambda U P_f) | \Gamma \right]    
   \\
   &= \mathbb{E} \left[{\rm Tr}\left[ {\rm vec}(P_\lambda U P_f)    {\rm vec}(P_\lambda U P_f)'   \right] | \Gamma \right]
   \\
   &= 
    {\rm Tr}\Big[ (P_f \otimes P_\lambda)
   \underbrace{\mathbb{E}\left[ {\rm vec}(U) {\rm vec}(U)' | \Gamma \right]
    }_{\leq \overline \sigma^2 \mathbb{I}_{NT}}
      (P_f \otimes P_\lambda) \Big]
 \\
   &\leq \overline \sigma^2 {\rm Tr}\left[ P_f \otimes P_\lambda \right]
  \\ 
   & =  \overline \sigma^2   {\rm Tr}(P_f)  {\rm Tr}(P_\lambda) 
  \\
   & = \overline \sigma^2 R^2
\end{align*}
for some $\overline \sigma^2 > 0$ for all $\Gamma$. Hence, we also have
\begin{align*}
  \mathbb{E} \left[\left\| P_\lambda U P_f  \right\|_F^2\right] \leq \overline \sigma^2 R^2,
\end{align*}
and, using Markov's inequality we conclude $ \Vert P_\lambda U P_f  \Vert_F^2 = \Op (1)$. Finally, using Assumption~\ref{ass: X decomposition} and Lemma~\ref{gamma_pre_rate_lemma}, 
we obtain
\begin{align*}
  \Vert P_\lambda \hat U P_f  \Vert_F \leq \Vert P_\lambda U P_f  \Vert_F + \Vert W \cdot (\hat \gamma_{\rm pre}  - \gamma) \Vert_F = \op({\maxSNT}).
\end{align*}
Combining this result with \eqref{eq: A aux bound semi strong}, we obtain
\begin{align*}
       \left\|         \widehat \Gamma_{\rm pre} -  \Gamma^\dagger \right\|_F
       &\leq     \left\|         \widehat \Gamma_{\rm pre} -  \Gamma^\dagger - P_\lambda \hat U P_f  \right\|_F  +\left\| P_\lambda \hat U P_f \right\|  
       \\
       &=    \op\left( s_1(U)  \right),
\end{align*}
where the last equality also uses \eqref{eq: A s_1 U hat aux}.

Since $\|A\|_* \leq \sqrt{{\rm rank}(A)} \, \|A\|_F$, this also implies that
\begin{align*}
       \left\|         \widehat \Gamma_{\rm pre} -  \Gamma^\dagger \right\|_*
      =   o_{\Theta, \mathcal P}\left( s_1(U)  \right)  = o_{\Theta, \mathcal P} (\maxSNT).
\end{align*}

We have thus shown that
\begin{align*}
    \left\|  \widetilde \Gamma  - M_{\lambda} \hat U   \Gamma^{+} \hat U M_{f}   \right\|_*  =  o_{\Theta, \mathcal P} (\maxSNT) ,
\end{align*}
which implies
\begin{align*}
    \left\|  \widetilde \Gamma   \right\|_*  &=
     \left\|  M_{\lambda} \hat U   \Gamma^{+} \hat U M_{f}   \right\|_*
     +
    o_{\Theta, \mathcal P} (\maxSNT) 
    \\
     &= \sqrt{R}
     \left\|  M_{\lambda} \hat U   \Gamma^{+} \hat U M_{f}   \right\|_F
     +
    o_{\Theta, \mathcal P} (\maxSNT) 
    \\
    &=  o_{\Theta, \mathcal P} (\maxSNT) ,
\end{align*}
where in the last step we used \eqref{BoundUGammaU}, \eqref{StrongFactorAssumption},
and \eqref{eq: A s_1 U hat aux} again. This completes the proof.

}

\subsection{Verification of Assumption~\ref{ass: factor model}\eqref{item: NC}}
\label{appendix:VerifyAss2i}

In this section, we verify that, in the absence of $Z$, Assumption~\ref{ass: X decomposition}\eqref{V_assump} and \eqref{H_assump} imply Assumption~\ref{ass: factor model}\eqref{item: NC}.
Thus, our goal in this section is to prove that
\begin{align}
\frac{1}{NT} \sum_{r = 2R+1}^{\min\{N,T\}} s_r^2(X) \geq \underline{s}^2 > 0,
   \label{GOAL_VerifyAss2i}
\end{align}
with probability approaching one. 

By the variational principle for eigenvalues, we have
$$
\sum_{r=1}^{2R} s_r^2(X) = \sum_{r=1}^{2R} \lambda_r(X'X) = \max_{Q \in \mathbb{R}^{T \times 2R}} \operatorname{Tr}(X'X \, P_Q),
$$
where $P_Q$ is the projection matrix onto the subspace spanned by the columns of $Q$. 
Using this and $\operatorname{Tr}(X'X) =\sum_{r=1}^{\min\{N,T\}} s_r^2(X)  $, we obtain
$$
\frac{1}{NT} \sum_{r=2R+1}^{\min\{N,T\}} s_r^2(X) = \frac{1}{NT} \operatorname{Tr}(X'X) - \frac{1}{NT} \max_{Q \in \mathbb{R}^{T \times 2R}} \operatorname{Tr}(X'X \, P_Q).
$$
Using $X = H + V$, where $H$ and $V$ are as defined in Assumption~\ref{ass: X decomposition}, and substituting into the above, we obtain:
\begin{align*}
\frac{1}{NT} \sum_{r=2R+1}^{\min\{N,T\}} s_r^2(X) 
  &\geq 
 \frac{1}{NT} \operatorname{Tr}(H'H) - \frac{1}{NT} \max_{Q \in \mathbb{R}^{T \times 2R}} \operatorname{Tr}(H \, P_Q \, H')
\\ & \quad
+  \frac{1}{NT} \operatorname{Tr}(V'V) - \frac{1}{NT} \max_{Q \in \mathbb{R}^{T \times 2R}} \operatorname{Tr}(V \, P_Q \, V') 
\\ & \quad
+ \frac{2 \operatorname{Tr}(V'H)}{NT} - \frac{2}{NT} \max_{Q \in \mathbb{R}^{T \times 2R}} \operatorname{Tr}(V \, P_Q \, H') .
\end{align*}
For the three lines on the right-hand side we have
\begin{align*}
\frac{1}{NT} \operatorname{Tr}(H'H) - \frac{1}{NT} \max_{Q \in \mathbb{R}^{T \times 2R}} \operatorname{Tr}(H \, P_Q \, H')
&=\frac{1}{NT} \sum_{r=2R+1}^{\min\{N,T\}} s_r^2(H) \geq 0 ,
\\
\frac{1}{NT} \operatorname{Tr}(V'V) - \frac{1}{NT} \max_{Q \in \mathbb{R}^{T \times 2R}} \operatorname{Tr}(V \, P_Q \, V') 
&\geq \frac{1}{NT} \|V\|_F^2 - \frac{2R \, [s_1(V)]^2}{NT},
\\
  \frac{2 \operatorname{Tr}(V'H)}{NT} - \frac{2}{NT} \max_{Q \in \mathbb{R}^{T \times 2R}} \operatorname{Tr}(V \, P_Q \, H') &\geq \frac{2 \operatorname{Tr}(V'H)}{NT} - \frac{4R \, s_1(H)s_1(V)}{NT} ,
\end{align*}
and therefore
\begin{align*}
\frac{1}{NT} \sum_{r=2R+1}^{\min\{N,T\}} s_r^2(X) 
&\geq \frac{1}{NT} \|V\|_F^2 + \frac{2 \operatorname{Tr}(V'H)}{NT} - \frac{4R \, s_1(H)s_1(V)}{NT} - \frac{2R \, [s_1(V)]^2}{NT}
 \\
 &=  \frac{1}{NT} \|V\|_F^2 + o_{\Theta,\mathcal{P}}(1)  
 \\
 &\asymp_{\Theta,\mathcal{P}} 1 ,
\end{align*}
we in the last two steps we used Assumption~\ref{ass: X decomposition}\eqref{V_assump} and \eqref{H_assump}, as well as $s_1(H) \leq \|H\|_F$.
This confirms that \eqref{GOAL_VerifyAss2i} holds.

More generally, if in addition, each of the other regressors $Z_k$, $k = 1,\ldots, K$, can be decomposed
as $V_k+H_k$ such that Assumption~\ref{ass: X decomposition}\eqref{V_assump} and \eqref{H_assump} holds equally for $V_k$
and $H_k$, and the $K+1$ vector $V_{{\rm vec},it}$ that combines $V_{it}$ and $V_{k,it}$ satisfies the
standard non-collinearity condition
$$
  \plim_{N,T \rightarrow \infty} \frac 1 {NT} \sum_{i=1}^N \sum_{t=1}^T V_{{\rm vec},it} V_{{\rm vec},it}'
   > 0, 
$$
then by the same arguments as for $K=0$ above, the Assumption~\ref{ass: factor model}\eqref{item: NC} holds for general $K$.

\section{Computational details}\label{computational_details_appendix}

The optimal weights $A^*_{b}$ given in Definition \ref{DefOptimalWeights} can be
computed directly using convex programming.  Alternatively, we can obtain these
weights from a nuclear norm regularized ``partialling out'' regression of $X$ on
$Z$ and a matrix of individual effects.  This
follows by applying a
result from \citet{armstrong_bias-aware_2020} to our setting, as we now describe.
We first consider the general case with
covariates (Section \ref{computation_general_case_sec}),
and then obtain a further simplification by specializing to the case with no
additional covariates $Z$.

\subsection{General case}\label{computation_general_case_sec}

The weights $A^*_b$ minimize $\left(\maxbias_{\hat C}(\hat\beta_A)\right)^2+\sigma^2\|A\|_F^2$ when $\hat C/\sigma=b$.
Equivalently, we can minimize $\sigma^2 \|A\|_F^2$ subject to a bound on
$\maxbias_{\hat C}(\hat\beta_A)$:
\begin{align}\label{A_optimization}
  \min_{A} \sigma^2 \|A\|_F^2
  \quad\text{s.t.}\quad
  \maxbias_{\hat C}(\hat\beta_A)\le B.
\end{align}
We can then vary the bound $B$ to optimize any increasing function of the
variance $\sigma^2\|A\|_F^2$ and worst-case bias $\maxbias_{\hat C}(\hat\beta_A)$.

Let $\Pi^*_\mu,\psi^*_\mu$ solve the nuclear norm regularized
regression
\begin{align}\label{Pi_optimization}
  \min_{\Pi,\psi} \|X-Z\cdot \psi -\Pi\|_F^2/2
  + \mu \|\Pi\|_*
\end{align}
where $\mu$ indexes the penalty on the nuclear norm.
Let
\begin{align}\label{resid_equation}
  \Omega^*_\mu=X-Z\cdot \psi^*_\mu -\Pi^*_\mu
\end{align}
denote the matrix of residuals from this regression.
Let
\begin{align}\label{beta_mu_equation}
  \hat\beta_{\tilde A^*_\mu } = \langle  \tilde A^*_\mu, \widetilde Y \rangle_F
  = \frac{\langle \Omega^*_\mu, \widetilde Y \rangle_F}{\langle \Omega^*_\mu, X \rangle_F}
  \quad\text{where}\quad
  \tilde A^*_{\mu}  = \frac{\Omega^*_\mu}{\langle \Omega^*_\mu, X \rangle_F}
\end{align}
and let
\begin{align}\label{B_mu_V_mu_equation}
  \overline B_\mu = \frac{1}{\| {\Pi^*_\mu} \|}_* \frac{\langle \Omega^*_\mu, \Pi^*_\mu \rangle_F}{\langle \Omega^*_\mu, X \rangle_F}
  \quad
  \text{and}
  \quad
  V_{\mu} = \sigma^2\frac{\|\Omega^*_\mu\|_F^2}{\langle \Omega^*_\mu, X \rangle_F^2}.
\end{align}

The following theorem follows immediately from applying Theorem 2.1 in
\citet{armstrong_bias-aware_2020} to our setup (in applying the formulas from
this paper, we
use the fact that $\langle \Omega^*_\mu,Z\cdot \psi^*_\mu \rangle_F=0$ by the
first order conditions for $\psi$, since $\psi$ is unconstrained).

\begin{theorem}\label{optimal_A_theorem}
  Let $\Pi^*_\mu,\psi^*_\mu$ be a solution to (\ref{Pi_optimization}) and let
  $\Omega^*_\mu$ be the matrix of residuals in
  (\ref{resid_equation}), and suppose
  $\|\Omega^*_\mu\|>0$.  Then $\tilde A^*_\mu$ and the
  corresponding estimator $\hat\beta_{\tilde A^*_\mu}$ given in (\ref{beta_mu_equation}) 
  solve (\ref{A_optimization}) for $B=\hat C\overline B_\mu$, with minimized
  value $V_\mu$, where $\overline B_\mu$ and $V_\mu$ are given in
  (\ref{B_mu_V_mu_equation}).
\end{theorem}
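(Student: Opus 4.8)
The plan is to recognize (\ref{A_optimization}) as a special case of the minimax linear (``bias-aware'') estimation problem solved in \citet{armstrong_bias-aware_2020}, and then to read off the conclusion from their Theorem 2.1 after identifying the corresponding objects. First I would set up the dictionary between the two problems: the functional direction is played by $X$, the covariates $Z_1,\dots,Z_K$ are the components to be partialled out \emph{exactly} (their bias contribution is pinned to zero by the equality constraints $\langle A, Z_k\rangle_F = 0$), and the single norm-constrained nuisance is $\widetilde\Gamma$ ranging over the nuclear-norm ball $\{\widetilde\Gamma : \|\widetilde\Gamma\|_* \le \hat C\}$, with homoskedastic noise variance $\sigma^2$. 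The object that makes the correspondence work is the duality recorded in (\ref{WorstCaseBias}): since the nuclear norm is dual to the spectral norm, the worst-case bias contributed by $\widetilde\Gamma$ equals $\hat C\, s_1(A)$, which is exactly the penalized structure whose solution \citet{armstrong_bias-aware_2020} characterize through a norm-penalized ``partialling-out'' regression.

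Second, I would invoke their Theorem 2.1 with the nuclear-norm penalty, which produces the candidate weights as the normalized residuals $\tilde A^*_\mu = \Omega^*_\mu/\langle\Omega^*_\mu, X\rangle_F$ from the regression (\ref{Pi_optimization}), and delivers the two accounting identities: the minimized variance equals $V_\mu$ and the associated worst-case-bias bound is $\hat C\overline B_\mu$, with $\overline B_\mu$ and $V_\mu$ as in (\ref{B_mu_V_mu_equation}). Here I would use the decomposition $X = \Omega^*_\mu + Z\cdot\psi^*_\mu + \Pi^*_\mu$ together with the subgradient (first-order) condition for $\Pi$ in (\ref{Pi_optimization}), which simultaneously pins down the scaling that enforces $\langle\tilde A^*_\mu, X\rangle_F = 1$ and ties the reported bias bound to $\langle\Omega^*_\mu, \Pi^*_\mu\rangle_F$.

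Third, and this is the only genuinely setup-specific step the reduction requires, I would verify orthogonality to the exactly-partialled-out covariates. Because $\psi$ is unconstrained in (\ref{Pi_optimization}), its first-order condition gives $\langle\Omega^*_\mu, Z_k\rangle_F = 0$ for each $k$, hence $\langle\Omega^*_\mu, Z\cdot\psi^*_\mu\rangle_F = 0$; dividing by $\langle\Omega^*_\mu, X\rangle_F$ then shows that the residual weights satisfy the equality constraints $\langle\tilde A^*_\mu, Z_k\rangle_F = 0$ appearing in (\ref{A_optimization}). This is precisely the fact flagged in the statement, and it is what allows the \citet{armstrong_bias-aware_2020} formulas, derived for the exactly-partialled-out block, to apply verbatim to $\tilde A^*_\mu$ and the estimator $\hat\beta_{\tilde A^*_\mu}$ of (\ref{beta_mu_equation}).

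I expect the main obstacle to be bookkeeping rather than mathematics: one must line up the normalization conventions of \citet{armstrong_bias-aware_2020} with ours so that their generic bias and variance expressions collapse to the stated $\overline B_\mu$ and $V_\mu$, and one must confirm that the hypotheses of their Theorem 2.1 — a convex, centrosymmetric nuisance constraint, evaluated through the spectral-norm dual of the nuclear norm — are met by the nuclear-norm ball here. Once the dictionary is fixed and the $\psi$-orthogonality is in hand, the conclusion that $\tilde A^*_\mu$ solves (\ref{A_optimization}) for $B = \hat C\overline B_\mu$ with minimized value $V_\mu$ follows immediately from their result, which is why the statement can be asserted without a standalone argument.
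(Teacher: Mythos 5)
Your proposal matches the paper's own argument: the paper proves Theorem \ref{optimal_A_theorem} exactly by applying Theorem 2.1 of \citet{armstrong_bias-aware_2020} to this setup, noting (just as you do in your third step) that $\langle \Omega^*_\mu, Z\cdot\psi^*_\mu\rangle_F = 0$ follows from the first-order conditions for the unconstrained $\psi$ in (\ref{Pi_optimization}). Your additional bookkeeping on the nuclear/spectral-norm duality and the normalization enforcing $\langle \tilde A^*_\mu, X\rangle_F = 1$ is correct and simply spells out details the paper leaves implicit in the citation.
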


Thus, to compute the MSE optimizing weights $A^*_b$, it suffices to compute the
weights $\tilde A^*_{\mu}$ for each $\mu>0$, and then minimize $\hat C^2 \overline
B_\mu^2+V_\mu$ over the one-dimensional parameter $\mu$.  We can also minimize
other criteria, as in Remark \ref{other_criteria_remark} by choosing $\mu$ to
minimize other functions of worst-case bias $\hat C \overline B_\mu$ and variance
$V_\mu$.

\subsection{No additional covariates}\label{no_additional_covariates_computation_sec}

In the case where there are no additional covariates, the nuclear norm
regularized ``partialling out'' regression (\ref{Pi_optimization}) reduces to
\begin{align}\label{X_Pi_optimization_no_Z}
  \min_{\Pi} \|X-\Pi\|_F^2/2
  + \mu \|\Pi\|_*.
\end{align}
The solution $\Pi^*_\mu$ can then be computed using soft thresholding on the
singular values of $X$.  We describe the solution here, and refer to
\citet[][Lemma S.1]{moon2018nuclear} for a detailed derivation.

Let the singular value decomposition of $X$ be given by $X=V_XS_XW_X'$ where $V_X$ is an
$N\times N$ orthogonal matrix (i.e. $V_X'V_X=I_N$), $W_X$ is a $T\times T$ orthogonal
matrix (i.e. $W_X'W_X=I_T$) and $S_X$ is a $N\times T$ rectangular diagonal
matrix, with $j$-th diagonal element given by the $j$-th singular value $s_j(X)$
of $X$.
Let $\tilde S_X(\mu)$ be the $N\times T$ diagonal matrix with $j$-th diagonal element
given by $\max\{s_j(X)-\mu,0\}$ (i.e. we perform soft thresholding on the $j$-th
singular value).

Then the solution $\Pi^*_\mu$ to
(\ref{X_Pi_optimization_no_Z}) and residuals $\Omega^*_\mu=X-\Pi^*_\mu$ are
given by
\begin{align*}
  \Pi^*_\mu = V_X\tilde S_X(\mu)W_X',
  \quad
  \Omega^*_\mu = V_X(S_X-\tilde S_X(\mu))W_X',
\end{align*}
Note that $S_X-\tilde S_X(\mu)$ is a $N\times T$ diagonal matrix with $j$-th
diagonal element given by $\min\{s_j(X),\mu\}$.  Thus, the weights $\tilde
A^*_\mu=\Omega^*_\mu/\langle \Omega^*_\mu, X \rangle_F$ used in the estimator
$\hat\beta=\langle \tilde A^*_{\mu}, \tilde Y \rangle_F$ given in
(\ref{beta_mu_equation}) can be obtained by replacing the singular values
$s_j(X)$ that are larger than $\mu$ with the constant $\mu$, and then dividing by
the constant
  $\langle \Omega^*_\mu, X \rangle_F
  = \langle S_X-\tilde S_X(\mu), S_X \rangle_F
  =\sum_{j=1}^{\min\{N,T\}}\min\{s_j(X),\mu\}s_j(X)$.
\section{Additional Numerical Results}
\label{sec: MC app}
\subsection{Additional simulation results for Section \ref{ssec: MC}}
\label{ssec: additional simulation results}

In this section, we provide additional simulation results for the numerical experiment described in Section~\ref{ssec: MC}. For brevity, here we only report results for $R = 1$. Tables \ref{tab: app mc N50} and \ref{tab: app mc N300} report the same statistics as Table \ref{tab: N100 R1} in the main text but for smaller and bigger samples sizes $N = 50$ and $N = 300$, respectively. The results are similar to the ones presented in Section~\ref{ssec: MC}. We find that, when there is a weak factor, our method effectively reduces the bias and improves estimation precision in smaller sample sizes too, i.e., also for $N = 50$. The gains from using our method are even more considerable for $N = 300$, which is consistent with our estimator having a faster rate of convergence than the LS estimator.

\begin{table}[H]
\caption{Simulation results for the experiment in Section \ref{ssec: MC}, $N = 50$, $R = 1$}
\label{tab: app mc N50}
\begin{center}
\resizebox{\textwidth}{!}{
\begin{tabular}{c c c c c c c| c c c c c c}
&\multicolumn{6}{c}{LS}&\multicolumn{6}{c}{Debiased} \\
\cmidrule(lr){2-7}  \cmidrule(lr){8-13}
{$\kappa$}&{bias}&{std}&{rmse}&{size}&{length}&{length*}&{bias}&{std}&{rmse}&{size}&{length}&{length*}\\
\midrule
\multicolumn{13}{c}{$T = 20$}\\
\midrule
0.00&-0.0006&0.0242&0.0242&7.4&0.086&0.348&-0.0007&0.0300&0.0300&0.0&0.364&0.184\\
0.05&0.0233&0.0249&0.0340&22.5&0.087&0.349&0.0088&0.0302&0.0314&0.0&0.364&0.184 \\
0.10&0.0466&0.0268&0.0538&56.5&0.087&0.351&0.0177&0.0310&0.0357&0.0&0.365&0.185 \\
0.15&0.0683&0.0309&0.0750&78.5&0.089&0.357&0.0252&0.0327&0.0413&0.0&0.367&0.186 \\
0.20&0.0847&0.0401&0.0937&83.8&0.091&0.368&0.0293&0.0361&0.0465&0.0&0.370&0.186 \\
0.25&0.0879&0.0555&0.1040&76.0&0.097&0.390&0.0281&0.0406&0.0494&0.0&0.372&0.187 \\
0.50&0.0115&0.0398&0.0414&12.1&0.122&0.493&0.0025&0.0359&0.0360&0.0&0.380&0.189 \\
1.00&0.0004&0.0330&0.0330&6.1&0.124&0.499&-0.0006&0.0346&0.0346&0.0&0.381&0.189 \\
\midrule
\multicolumn{13}{c}{$T = 50$}\\
\midrule
0.00&0.0003&0.0147&0.0147&6.0&0.055&0.279&-0.0001&0.0211&0.0211&0.0&0.230&0.118\\
0.05&0.0247&0.0152&0.0290&44.0&0.055&0.280&0.0070&0.0212&0.0223&0.0&0.230&0.118\\
0.10&0.0487&0.0169&0.0515&87.9&0.055&0.282&0.0134&0.0218&0.0256&0.0&0.231&0.118\\
0.15&0.0703&0.0214&0.0734&95.9&0.056&0.287&0.0173&0.0238&0.0294&0.0&0.232&0.119\\
0.20&0.0784&0.0368&0.0866&86.3&0.060&0.306&0.0158&0.0267&0.0310&0.0&0.234&0.119\\
0.25&0.0583&0.0510&0.0774&59.2&0.068&0.344&0.0097&0.0273&0.0290&0.0&0.236&0.119\\
0.50&0.0035&0.0207&0.0210&6.5&0.078&0.398&0.0003&0.0236&0.0236&0.0&0.237&0.120 \\
1.00&0.0003&0.0201&0.0201&5.0&0.078&0.399&-0.0003&0.0234&0.0234&0.0&0.237&0.120\\
\midrule
\multicolumn{13}{c}{$T = 100$}\\
\midrule
0.00&0.0002&0.0105&0.0105&6.0&0.039&0.229&-0.0001&0.0137&0.0137&0.0&0.173&0.080\\
0.05&0.0247&0.0109&0.0270&68.4&0.039&0.229&0.0064&0.0138&0.0152&0.0&0.173&0.080\\
0.10&0.0487&0.0124&0.0502&98.1&0.039&0.231&0.0120&0.0143&0.0187&0.0&0.174&0.080\\
0.15&0.0684&0.0188&0.0709&97.3&0.040&0.238&0.0134&0.0165&0.0212&0.0&0.175&0.080\\
0.20&0.0577&0.0390&0.0697&72.5&0.046&0.271&0.0082&0.0179&0.0197&0.0&0.177&0.081\\
0.25&0.0225&0.0300&0.0375&33.9&0.053&0.310&0.0031&0.0164&0.0167&0.0&0.177&0.081\\
0.50&0.0016&0.0145&0.0146&5.6&0.055&0.325&0.0001&0.0153&0.0153&0.0&0.177&0.081 \\
1.00&0.0001&0.0143&0.0143&5.1&0.055&0.326&-0.0001&0.0152&0.0152&0.0&0.177&0.081\\
\midrule
\multicolumn{13}{c}{$T = 300$}\\
\midrule
0.00&-0.0001&0.0059&0.0059&5.8&0.023&0.170&-0.0002&0.0077&0.0077&0.0&0.127&0.049\\
0.05&0.0245&0.0066&0.0254&96.7&0.023&0.171&0.0060&0.0078&0.0098&0.0&0.127&0.049 \\
0.10&0.0481&0.0088&0.0489&99.8&0.023&0.173&0.0100&0.0089&0.0134&0.0&0.128&0.049 \\
0.15&0.0482&0.0271&0.0553&83.6&0.026&0.196&0.0059&0.0103&0.0119&0.0&0.129&0.049 \\
0.20&0.0117&0.0143&0.0185&33.3&0.031&0.234&0.0015&0.0090&0.0091&0.0&0.129&0.049 \\
0.25&0.0047&0.0093&0.0104&12.6&0.032&0.239&0.0006&0.0087&0.0087&0.0&0.129&0.049 \\
0.50&0.0004&0.0084&0.0085&5.6&0.032&0.241&-0.0001&0.0086&0.0086&0.0&0.129&0.049 \\
1.00&-0.0001&0.0084&0.0084&5.5&0.032&0.241&-0.0002&0.0086&0.0086&0.0&0.129&0.049\\
\bottomrule
\bottomrule
\multicolumn{13}{p{1.2\textwidth}}{${\rm Lind}(A) \in \{0.0109,0.0049,0.0028,0.0011\}$ for $T \in \{20,50,100,300\}$. The results are based on 5,000 simulations.} 
\end{tabular}
}
\end{center}
\end{table}

\begin{table}[H]
\caption{Simulation results for the experiment in Section \ref{ssec: MC}, $N = 300$, $R = 1$}
\label{tab: app mc N300}
\begin{center}
\resizebox{\textwidth}{!}{
\begin{tabular}{c c c c c c c| c c c c c c}
&\multicolumn{6}{c}{LS}&\multicolumn{6}{c}{Debiased} \\
\cmidrule(lr){2-7}  \cmidrule(lr){8-13}
{$\kappa$}&{bias}&{std}&{rmse}&{size}&{length}&{length*}&{bias}&{std}&{rmse}&{size}&{length}&{length*}\\
\midrule
\multicolumn{13}{c}{$T = 20$}\\
\midrule
0.00&0.0001&0.0096&0.0096&6.4&0.036&0.219&0.0001&0.0115&0.0115&0.0&0.246&0.088 \\
0.05&0.0242&0.0106&0.0264&72.7&0.036&0.220&0.0091&0.0117&0.0148&0.0&0.246&0.088\\
0.10&0.0474&0.0136&0.0493&96.6&0.036&0.223&0.0169&0.0127&0.0211&0.0&0.247&0.088\\
0.15&0.0633&0.0235&0.0675&93.0&0.038&0.233&0.0192&0.0159&0.0249&0.0&0.249&0.088\\
0.20&0.0475&0.0382&0.0610&65.4&0.044&0.267&0.0120&0.0182&0.0218&0.0&0.250&0.088\\
0.25&0.0192&0.0276&0.0336&31.4&0.049&0.297&0.0047&0.0155&0.0162&0.0&0.251&0.088\\
0.50&0.0015&0.0134&0.0135&6.3&0.051&0.310&0.0004&0.0134&0.0134&0.0&0.252&0.089 \\
1.00&0.0002&0.0132&0.0132&5.6&0.051&0.310&0.0001&0.0133&0.0133&0.0&0.252&0.089 \\
\midrule
\multicolumn{13}{c}{$T = 50$}\\
\midrule
0.00&-0.0001&0.0060&0.0060&5.8&0.023&0.173&-0.0002&0.0078&0.0078&0.0&0.127&0.048\\
0.05&0.0246&0.0067&0.0254&96.8&0.023&0.173&0.0060&0.0079&0.0099&0.0&0.127&0.048 \\
0.10&0.0482&0.0090&0.0490&99.8&0.023&0.175&0.0100&0.0090&0.0134&0.0&0.128&0.049 \\
0.15&0.0478&0.0272&0.0550&83.6&0.026&0.199&0.0058&0.0103&0.0118&0.0&0.129&0.049 \\
0.20&0.0117&0.0144&0.0186&32.5&0.031&0.237&0.0014&0.0090&0.0091&0.0&0.129&0.049 \\
0.25&0.0047&0.0093&0.0105&12.8&0.032&0.243&0.0005&0.0088&0.0088&0.0&0.129&0.049 \\
0.50&0.0004&0.0085&0.0085&5.7&0.032&0.245&-0.0001&0.0087&0.0087&0.0&0.129&0.049 \\
1.00&-0.0001&0.0084&0.0084&5.6&0.032&0.245&-0.0002&0.0086&0.0086&0.0&0.129&0.049\\
\midrule
\multicolumn{13}{c}{$T = 100$}\\
\midrule
0.00&0.0001&0.0041&0.0041&5.0&0.016&0.123&0.0001&0.0055&0.0055&0.0&0.080&0.033  \\
0.05&0.0248&0.0046&0.0253&100.0&0.016&0.123&0.0047&0.0056&0.0073&0.0&0.080&0.033\\
0.10&0.0482&0.0071&0.0488&99.9&0.016&0.125&0.0056&0.0067&0.0088&0.0&0.080&0.033 \\
0.15&0.0178&0.0172&0.0248&61.1&0.021&0.163&0.0015&0.0063&0.0065&0.0&0.081&0.033 \\
0.20&0.0048&0.0064&0.0080&16.3&0.022&0.172&0.0006&0.0060&0.0061&0.0&0.081&0.033 \\
0.25&0.0024&0.0059&0.0064&7.6&0.023&0.173&0.0003&0.0060&0.0060&0.0&0.081&0.033  \\
0.50&0.0004&0.0057&0.0057&4.8&0.023&0.174&0.0001&0.0060&0.0060&0.0&0.081&0.033  \\
1.00&0.0001&0.0057&0.0057&4.9&0.023&0.174&0.0001&0.0060&0.0060&0.0&0.081&0.033  \\
\midrule
\multicolumn{13}{c}{$T = 300$}\\
\midrule
0.00&-0.0000&0.0024&0.0024&5.5&0.009&0.105&-0.0001&0.0036&0.0036&0.0&0.043&0.018\\
0.05&0.0249&0.0028&0.0250&100.0&0.009&0.106&0.0030&0.0037&0.0048&0.0&0.043&0.018\\
0.10&0.0310&0.0169&0.0352&95.9&0.011&0.123&0.0011&0.0040&0.0042&0.0&0.043&0.018 \\
0.15&0.0036&0.0037&0.0051&21.8&0.013&0.148&0.0002&0.0039&0.0039&0.0&0.044&0.018 \\
0.20&0.0014&0.0034&0.0037&7.5&0.013&0.149&0.0001&0.0039&0.0039&0.0&0.044&0.018  \\
0.25&0.0007&0.0033&0.0034&5.2&0.013&0.149&-0.0000&0.0039&0.0039&0.0&0.044&0.018 \\
0.50&0.0000&0.0033&0.0033&4.5&0.013&0.149&-0.0000&0.0039&0.0039&0.0&0.044&0.018 \\
1.00&-0.0000&0.0033&0.0033&4.5&0.013&0.149&-0.0001&0.0039&0.0039&0.0&0.044&0.018\\
\bottomrule
\bottomrule
\multicolumn{13}{p{1.2\textwidth}}{${\rm Lind}(A) \in \{0.0025,0.0011,0.0006,0.0002\}$ for $T \in \{20,50,100,300\}$. The results are based on 5,000 simulations.} 
\end{tabular}
}
\end{center}
\end{table}

\subsection{A design with an additional covariate and heteroskedastic serially correlated errors}
\label{ssec: MC extended}
Similarly to Section \ref{ssec: MC}, we consider
\begin{align*}
  Y_{it} &= X_{it} \beta + Z_{it} \delta + \kappa \lambda_i f_t + U_{it},\\
  X_{it} &= \lambda_i f_t + V_{it}^X, \\
  Z_{it} &= \lambda_i f_t + V_{it}^Z,
\end{align*}
where $\lambda_i$, $f_t$, and $(V_{it}^X,V_{it}^Z)$ are all mutually independent across $i$, $t$, and $(i,t)$, and
\begin{align*}
  \lambda_i \sim N(0,1) \perp f_t \sim N(0,1) \perp  \begin{pmatrix} V_{it}^X \\ V_{it}^Z \end{pmatrix} \sim N \left(\begin{pmatrix} 0 \\ 0 \end{pmatrix}, \begin{pmatrix} \sigma_V^2 & \rho \sigma_{V}^2 \\ \rho \sigma_V^2 & \sigma_V^2 \end{pmatrix}\right).
\end{align*}
Conditional on $\lambda_i$, $f_t$, and $(V_{it}^X,V_{it}^Z)$ for $i \in \{1, \ldots, N\}$ and $t \in \{0. \ldots, T\}$, we construct serially correlated errors $U_{it}$ as
\begin{align*}
  U_{it} = \varepsilon_{it} + \theta_{\varepsilon} \varepsilon_{it-1},
\end{align*}
where $\varepsilon_{it} = \sigma_{\varepsilon}(X_i,Z_i,\lambda_i,f_t) e_{it}$ and $e_{it}$ are independently drawn from a scaled Student's t-distribution with 5 degrees of freedom and variance normalized to 1. The conditional variance of $\varepsilon_{it}$ is given by
\begin{align*}
  \sigma_{\varepsilon}^2(X_i,Z_i,\lambda_i,f_t) = \frac{\sigma_U^2}{1 + \theta_{\varepsilon}^2} \left(\frac{1}{2} + \Lambda \left(\frac{X_{it} + Z_{it} + \lambda_i f_t}{3} \right) \right),
\end{align*}
where $\Lambda (\cdot)$ stands for the logistic CDF.

In this experiment, we fix $(\beta,\delta,\sigma_U^2,\sigma_V^2,\rho_V,\theta_{\varepsilon}) = (0,1,1,1,1/\sqrt{2},1/\sqrt{2})$. As in Section \ref{sec: numerical}, we compare the performance of the standard LS based approach and our approach for various values of $\kappa$. Both approaches are implemented with $R = 1$ and $R = 2$, i.e., when the number of factors is correctly specified and when it is overspecified. To adjust for serial correlation of $U_{it}$, we compute HAC standard errors for the LS estimator accounting for up to two lags of serial correlation, and we compute clustered standard errors for our approach allowing for an arbitrary form of serial correlation.

For brevity, we only report results for estimation and inference on $\beta$ and for $(N,T) = (100,50)$ in Table \ref{tab: extended MC} below. The results are qualitatively similar to the ones presented in Section \ref{sec: numerical}. Our findings suggest that the standard LS based approach might perform poorly when there is a weak factor, and our approach reduces the weak factors bias and improves the quality of point estimation in more complicated settings with additional covariates and non-Gaussian, heteroskedastic, and serially correlated errors. Importantly, when the number of factors is overspecified ($R = 2$), the bias of the LS estimator does not vanish, so one cannot protect themselves from the weak factors bias by simply conservatively overspecifying $R$. At the same time, overspecifying the number of factors does not result in loss of efficiency for our estimator. Nonetheless, overspecifying $R$ leads to wider confidence intervals. While this might seem as a disadvantage of our approach, the necessity of having wider confidence intervals when the true number of factors is unknown has been established in \citet{zhu2019well}. Specifically, \citet{zhu2019well} shows that the uncertainty in the number of factors necessarily results in a dramatic loss of inference efficiency when robustness to weak factors is required.

\begin{table}[H]
\caption{Simulation results for the experiment in Section \ref{ssec: MC extended}}
\label{tab: extended MC}
\begin{center}
\resizebox{\textwidth}{!}{
\begin{tabular}{c c c c c c c| c c c c c c}
&\multicolumn{6}{c}{LS}&\multicolumn{6}{c}{Debiased} \\
\cmidrule(lr){2-7}  \cmidrule(lr){8-13}
{$\kappa$}&{bias}&{std}&{rmse}&{size}&{length}&{length*}&{bias}&{std}&{rmse}&{size}&{length}&{length*}\\
\midrule
\multicolumn{13}{c}{$R=1$}\\
\midrule
0.00&-0.0001&0.0195&0.0195&6.2&0.073&0.189&-0.0003&0.0208&0.0209&0.0&0.295&0.145\\
0.05&0.0132&0.0195&0.0236&12.3&0.073&0.189&0.0087&0.0209&0.0226&0.0&0.295&0.145 \\
0.10&0.0264&0.0197&0.0329&29.9&0.073&0.190&0.0176&0.0210&0.0274&0.0&0.295&0.146 \\
0.15&0.0393&0.0201&0.0442&55.1&0.073&0.191&0.0260&0.0212&0.0336&0.0&0.296&0.147 \\
0.20&0.0512&0.0213&0.0555&76.0&0.074&0.192&0.0332&0.0220&0.0398&0.0&0.298&0.147 \\
0.25&0.0543&0.0286&0.0614&75.9&0.075&0.194&0.0337&0.0255&0.0423&0.0&0.301&0.148 \\
0.50&0.0018&0.0218&0.0219&6.2&0.078&0.203&0.0015&0.0224&0.0224&0.0&0.309&0.149  \\
1.00&-0.0000&0.0204&0.0204&5.6&0.078&0.203&-0.0002&0.0218&0.0218&0.0&0.309&0.149\\
\midrule
\multicolumn{13}{c}{$R=2$}\\
\midrule
0.00&-0.0001&0.0196&0.0196&6.9&0.071&0.184&-0.0003&0.0210&0.0210&0.0&0.490&0.140\\
0.05&0.0131&0.0197&0.0237&13.7&0.071&0.185&0.0087&0.0210&0.0227&0.0&0.491&0.140 \\
0.10&0.0263&0.0199&0.0330&32.0&0.071&0.185&0.0174&0.0212&0.0274&0.0&0.491&0.140 \\
0.15&0.0389&0.0204&0.0439&55.8&0.072&0.186&0.0255&0.0215&0.0333&0.0&0.493&0.141 \\
0.20&0.0494&0.0221&0.0542&73.9&0.072&0.187&0.0314&0.0224&0.0386&0.0&0.495&0.142 \\
0.25&0.0476&0.0304&0.0565&67.4&0.073&0.190&0.0288&0.0262&0.0389&0.0&0.499&0.142 \\
0.50&0.0015&0.0207&0.0208&6.9&0.076&0.197&0.0011&0.0219&0.0220&0.0&0.506&0.143  \\
1.00&0.0000&0.0206&0.0206&6.4&0.076&0.197&-0.0002&0.0219&0.0219&0.0&0.507&0.143 \\
\bottomrule
\bottomrule
\multicolumn{13}{p{1.2\textwidth}}{The results are based on 5,000 simulations.} 
\end{tabular}
}
\end{center}
\end{table}

\subsection{Coverage of non-robust CIs based on the debiased estimator}
\label{ssec: MC non-robust coverage}
In this section, we present additional simulation results for the numerical experiment considered in Section~\ref{ssec: MC}.

Specifically, in Table~\ref{tab: non-robust coverage} below, we report the size of the t-test (with nominal size 5\%) based on the non-robust CI provided in \eqref{eq:non_bias_aware_ci} (with $95\%$ nominal coverage) and its average length. As before, we also report the same statistics for the LS CI.

We find that the non-bias aware CIs based on the debiased estimator are comparable in terms of the length to the LS CIs and that their coverage is close to 95\% in the absence of weak factors. This is in line with our asymptotic analysis provided in Section~\ref{ssec: semi strong}. At the same time, even if there is a weak factor, our non-robust CIs enjoy much better coverage than the LS ones. For example, for $(N,T) = (300,300)$ and $\kappa =0.10$, our non-robust CI has coverage 91.4\% whereas the coverage of the LS CI is only 4.1\%.

\begin{table}[H]
\caption{Simulation results for the non-robust CI \eqref{eq:non_bias_aware_ci}}
\label{tab: non-robust coverage}
\begin{small}
\begin{center}
\begin{tabular}{c c c c c| c c c c| c c c c}
&\multicolumn{4}{c|}{$(N,T) = (100,100)$}&\multicolumn{4}{c|}{$(N,T) = (300,100)$}&\multicolumn{4}{c}{$(N,T) = (300,300)$}\\
&\multicolumn{2}{c}{LS}&\multicolumn{2}{c|}{Debiased}&\multicolumn{2}{c}{LS}&\multicolumn{2}{c|}{Debiased}&\multicolumn{2}{c}{LS}&\multicolumn{2}{c}{Debiased} \\
{$\kappa$}&{size}&{length}&{size}&{length}&{size}&{length}&{size}&{length}&{size}&{length}&{size}&{length}\\
\midrule
\midrule
0.00&6.1&0.028&5.9&0.041&5.0&0.016&5.8&0.021&5.5&0.009&5.2&0.014\\
0.05&91.0&0.028&8.9&0.041&100.0&0.016&13.7&0.021&100.0&0.009&13.1&0.014\\
0.10&99.9&0.028&17.0&0.041&99.9&0.016&23.3&0.021&95.9&0.011&8.6&0.014\\
0.15&92.9&0.030&17.5&0.041&61.1&0.021&10.0&0.021&21.8&0.013&6.8&0.014\\
0.20&47.4&0.037&10.1&0.041&16.3&0.022&7.8&0.021&7.5&0.013&6.8&0.014\\
0.25&17.9&0.039&8.5&0.041&7.6&0.023&7.7&0.021&5.2&0.013&6.8&0.014\\
0.50&5.9&0.039&8.0&0.041&4.8&0.023&7.5&0.021&4.5&0.013&6.8&0.014\\
1.00&5.4&0.039&8.0&0.041&4.9&0.023&7.6&0.022&4.5&0.013&6.8&0.014\\
\bottomrule
\bottomrule
\multicolumn{13}{l}{The results are based on 5,000 simulations.} 
\end{tabular}
\end{center}
\end{small}
\end{table}

\section{Additional Results for Empirical Illustration}
\label{sec: empirical app}
In his section, we revisit the empirical application considered in Section~\ref{ssec: empirical} by considering an alternative specification with dynamic treatment effects. Specifically, in the spirit of \citet{Wolfers2006}, \citet{KimOka2014} and \citet{MoonWeidner2015}, we consider
\begin{align*}
  Y_{it} = \sum_{k=1}^4 X_{k,it} \beta_k + \alpha_i + \zeta_i t + \nu_i t^2 + \phi_t + \sum_{r=1}^R \lambda_{ir} f_{tr} + U_{it},
\end{align*}
where $X_{k,it}$ are the treatment dummies defined as
\begin{align*}
  X_{k,it} &= \mathbf 1 \{D_i + 4(k-1) \leq t \leq D_i + 4k - 1 \} \quad \text{for} \quad k \in \{1, \ldots, 3\},\\
  X_{4,it} &= \mathbf 1 \{D_i + 12 \leq t\}, 
\end{align*}
where $D_i$ denotes the year in which state $i$ adopted a unilateral divorce law. Here, instead of introducing bi-annual dummies as in \citet{Wolfers2006}, we consider a coarser dynamics of treatment effects to ensure that our regressors $X_{k,it}$ have sufficient variation necessary for debiasing. %

As before, we estimate and construct 95\% CIs for $\beta_k$ using the LS and our approaches. The results are provided in Table \ref{tab: emp res app} below. They are qualitatively similar to the results reported in Section \ref{ssec: empirical}. While it is also possible to obtain shorter confidence intervals and establish significance of certain dynamic effects using our approach in the absence of weak factors, we again find that the potential presence of one weak factors is sufficient to render the estimated effects insignificant.

\begin{table}[H]
  \caption{LS and debiased estimates and 95\% CIs for dynamic effects of divorce law reform}
  \label{tab: emp res app}
\begin{footnotesize}
\begin{center}
\begin{tabular}{l c c c c c c c c c}
& $R=1$& $R=2$& $R=3$& $R=4$& $R=5$& $R=6$ \\ 
\midrule
\multicolumn{7}{c}{LS}\\ 
\midrule
years 1-4  & $0.033$& $0.084$& $0.093$& $0.040$& $0.012$& $0.085$ \\ 
& $[-0.07,0.14]$& $[-0.04,0.21]$& $[-0.03,0.21]$& $[-0.08,0.16]$& $[-0.11,0.14]$& $[-0.04,0.21]$ \\ 
years 5-8& $-0.081$& $-0.026$& $0.025$& $-0.028$& $-0.078$& $0.088$ \\ 
& $[-0.23,0.07]$& $[-0.20,0.15]$& $[-0.15,0.20]$& $[-0.19,0.14]$& $[-0.24,0.09]$& $[-0.06,0.24]$ \\ 
years 9-12& $-0.253$& $-0.247$& $-0.186$& $-0.244$& $-0.297$& $-0.065$ \\ 
& $[-0.45,-0.05]$& $[-0.49,-0.00]$& $[-0.41,0.04]$& $[-0.46,-0.03]$& $[-0.50,-0.09]$& $[-0.26,0.13]$ \\ 
years 13+& $-0.198$& $-0.255$& $-0.234$& $-0.313$& $-0.365$& $-0.110$ \\ 
& $[-0.46,0.06]$& $[-0.54,0.03]$& $[-0.50,0.03]$& $[-0.57,-0.06]$& $[-0.61,-0.12]$& $[-0.35,0.13]$ \\ 
\midrule
\multicolumn{7}{c}{Debiased}\\ 
\midrule
years 1-4  & $0.081$& $0.147$& $0.137$& $0.098$& $0.079$& $0.118$ \\ 
$R_w = 0$& $[-0.03,0.19]$& $[0.05,0.25]$& $[0.05,0.22]$& $[0.02,0.18]$& $[0.00,0.16]$& $[0.05,0.19]$ \\ 
$R_w = 1$& $[-0.80,0.96]$& $[-0.58,0.88]$& $[-0.45,0.72]$& $[-0.39,0.59]$& $[-0.33,0.49]$& $[-0.23,0.47]$ \\ 
$R_w = R$& $[-0.80,0.96]$& $[-1.21,1.51]$& $[-1.45,1.72]$& $[-1.63,1.83]$& $[-1.64,1.80]$& $[-1.63,1.86]$ \\ 
years 5-8& $-0.008$& $0.054$& $0.099$& $0.056$& $0.031$& $0.125$ \\ 
$R_w = 0$& $[-0.17,0.15]$& $[-0.08,0.19]$& $[-0.01,0.21]$& $[-0.05,0.16]$& $[-0.07,0.13]$& $[0.04,0.22]$ \\ 
$R_w = 1$& $[-1.34,1.33]$& $[-1.04,1.14]$& $[-0.77,0.97]$& $[-0.67,0.79]$& $[-0.57,0.63]$& $[-0.39,0.64]$ \\ 
$R_w = R$& $[-1.34,1.33]$& $[-2.00,2.10]$& $[-2.30,2.50]$& $[-2.56,2.67]$& $[-2.57,2.63]$& $[-2.51,2.76]$ \\ 
years 9-12& $-0.147$& $-0.139$& $-0.098$& $-0.126$& $-0.157$& $0.003$ \\ 
$R_w = 0$& $[-0.36,0.06]$& $[-0.33,0.05]$& $[-0.25,0.05]$& $[-0.27,0.02]$& $[-0.29,-0.02]$& $[-0.12,0.13]$ \\ 
$R_w = 1$& $[-2.04,1.75]$& $[-1.70,1.43]$& $[-1.34,1.15]$& $[-1.17,0.92]$& $[-1.01,0.70]$& $[-0.73,0.74]$ \\ 
$R_w = R$& $[-2.04,1.75]$& $[-3.08,2.81]$& $[-3.54,3.34]$& $[-3.88,3.63]$& $[-3.89,3.58]$& $[-3.78,3.79]$ \\ 
years 13+& $-0.178$& $-0.228$& $-0.196$& $-0.225$& $-0.262$& $-0.071$ \\ 
$R_w = 0$& $[-0.45,0.09]$& $[-0.46,0.01]$& $[-0.38,-0.01]$& $[-0.40,-0.05]$& $[-0.44,-0.09]$& $[-0.23,0.09]$ \\ 
$R_w = 1$& $[-2.71,2.35]$& $[-2.31,1.86]$& $[-1.85,1.46]$& $[-1.62,1.17]$& $[-1.40,0.88]$& $[-1.05,0.90]$ \\ 
$R_w = R$& $[-2.71,2.35]$& $[-4.16,3.71]$& $[-4.80,4.40]$& $[-5.26,4.80]$& $[-5.26,4.74]$& $[-5.14,4.99]$ \\ 
\bottomrule
\bottomrule
\end{tabular}
\end{center}
\end{footnotesize}
\end{table}

\end{document}